        \newtheorem{theorem}{Theorem}[section]
\newtheorem{definition}[theorem]{Definition}
        \newtheorem{proposition}[theorem]{Proposition}
        \newtheorem{lemma}[theorem]{Lemma}
\numberwithin{equation}{section}
\newcommand \II {\text{II}}
\newcommand \xbf {{\mbox{\boldmath$x$}}}
\newcommand \ybf {{\mbox{\boldmath$y$}}}
\newcommand \etabf {\mbox{\boldmath$\eta$}}
\newcommand \Gammabf {\mbox{\boldmath$\Gamma$}}
\newcommand \nablat         {\mbox{\boldmath$\widetilde \nabla$}} 
\newcommand \Mbf {\mathbf M}
\newcommand \pbf {\mathbf p}
\newcommand \qbf {\mathbf q}
\newcommand \Nbf {\mathbf N}
\newcommand \St {{\widetilde S}}
\newcommand \la \langle
\newcommand \ra \rangle
\newcommand \tbar {{\overline t}}
\newcommand \mbar {\overline m}
\newcommand \Cbar {\overline C}
\newcommand \bart {\underline t}
\newcommand \barm {\underline m}
\newcommand \taubar {\overline \tau}
\newcommand \bartau {\underline \tau}
\newcommand \baru {\underline u}
\newcommand \barc {\underline c}
\newcommand \cbar {\overline c}
\newcommand \Acal {\mathcal A}
\newcommand \Lcal {\mathcal L}
\newcommand \bark {\underline k}
\newcommand \Rmax {\mathbf R_{\text{\bf max}}}
\newcommand \Ricb       {\text{\bf Ric}}
\newcommand \gbf        {\mathbf g}
\newcommand \gbfhat     {\widehat {\mathbf g}}
\newcommand \Tbf        {\mathbf T}
\newcommand \Dbf      {\mbox{\boldmath$\nabla$}}
\newcommand \Dbfhat      {\mbox{\boldmath$\widehat\nabla$}}
\newcommand \Rbf        {\mathbf R}
\newcommand \be     {\begin{equation}}
\newcommand \ee     {\end{equation}}
\newcommand \kbar   {\overline k}
\newcommand \del        \partial
\newcommand \eps     \epsilon
\newcommand \auth   \textsc
\newcommand \Mcal    {{\mathcal M}}
\newcommand \Ccal    {{\mathcal C}}
\newcommand \Hcal   {{\mathcal H}}
\newcommand \Bcal   {{\mathcal B}}
\newcommand \gt     {{\widetilde {\mathbf g}}}
\newcommand \dt     {{\widetilde {\mathbf d}}}
\newcommand \inj        {\text {\bf Inj}}
\newcommand \Riem       {\text{\bf Rm}}
\newcommand \Rm       {\text{\bf Rm}}
\newcommand \Ric       {\text{Ric}}
\newcommand \Ricbf       {\text{\bf Ric}}
\newcommand \Gammat  {\widetilde{\Gamma}}
\newcommand \expb   {{{\text{\bf exp}}}}
\begin{document}

\title{Local foliations and optimal regularity of Einstein spacetimes}
\author{Bing-Long Chen$^1$
and
%                correct spelling :   LeFLOCH  or  LeFloch
Philippe G. LeFloch$^2$}

\date{October 2, 2008}

\maketitle

\footnotetext[1]{Department of Mathematics, Sun Yat-Sen University, Guangzhou, People's Republic of China. 
E-mail: {\sl mcscbl@mail.sysu.edu.cn.}
}

\footnotetext[2]{Laboratoire Jacques-Louis Lions \& Centre National de la Recherche Scientifique,
Universit\'e Pierre et Marie Curie (Paris 6), 4 Place Jussieu,  75252 Paris, France.
E-mail: {\sl LeFloch@ann.jussieu.fr.} 
\textit{AMS Subject Classification.} 53C50, 83C05, 53C12. 
\textit{Key words and phrases.} Lorentzian geometry, general relativity, Einstein field equations,
constant mean curvature foliation, harmonic coordinates, optimal regularity.
}

\begin{abstract}
We investigate the local regularity of pointed spacetimes, that is, time-oriented Lorent\-zian
manifolds in which a point and a future-oriented, unit timelike vector (an observer) are selected.
Our main result covers the class of Einstein vacuum spacetimes.
Under curvature and injectivity bounds only,
we establish the existence of a {\sl local coordinate} chart defined in a ball with definite size
in which the metric coefficients have optimal regularity.
The proof is based on {\sl quantitative estimates}, on one hand, for 
a constant mean curvature (CMC) foliation 
by spacelike hypersurfaces defined locally near the observer and, on the other hand, 
for the metric in local coordinates that are spatially harmonic in each CMC slice.
The results and techniques in this paper should be useful in the context of general relativity
for investigating the long-time behavior of solutions to the Einstein equations.
\end{abstract}
 
%==================================================================================================================

\section{Introduction}
\label{IN-0}

\subsection{Quantitative estimates for CMC foliations}

We denote by $(\Mbf,\gbf)$ a spacetime of general relativity, that is,
a time-oriented, $(n+1)$-dimensional Lorentzian manifold
whose metric $\gbf$, by definition, has signature $(-, +, \ldots, +)$. 
Our main result in the present paper will concern vacuum spacetimes, that is, Ricci-flat manifolds, 
although this assumption will be made only later in the discussion. 
Building on our earlier work \cite{ChenLeFloch}, we continue the investigation of the local geometry
of Einstein spacetimes, using here techniques for partial differential equations. 
Our main objective will be, under natural geometric bounds on the curvature and the injectivity radius
only,
 to establish the existence of
local coordinate charts in which the metric coefficients have optimal regularity, that is,
belong to the Sobolev space $W^{2,a}$ for all real $a \in (1,\infty)$.
The construction proposed in the present paper is local in the neighborhood of 
a given ``observer''
and, in turn,
 our result provides a sharp control of the local geometry of the spacetime at every point.
This optimal regularity theory should be useful for tackling the global regularity issue for Einstein
spacetimes and investigating the long-time behavior of solutions to the Einstein equations.

As in \cite{ChenLeFloch}, we consider a {\sl pointed} Lorentzian manifold $(\Mbf, \gbf, \pbf,\Tbf_\pbf)$,
that is, a time-oriented Lorentzian manifold supplemented with a point $\pbf \in \Mbf$
and a future-oriented timelike vector $\Tbf_\pbf$ at that point. The pair $(\pbf, \Tbf_\pbf)$ is called a
{\sl (local) observer} and is required for stating our curvature and injectivity radius bounds
for some given curvature constant $\Lambda$ and injectivity radius constant $\lambda>0$; see~\eqref{bd3} in
Section~\ref{BB-01} below.  
We will establish the existence of a
neighborhood of the observer $(\pbf,\Tbf_\pbf)$ whose size depends
on $\Lambda, \lambda$ only and in which local coordinates exist
in such a way that the regularity of the metric coefficients can be controlled by the
same constants. Since the Riemann curvature involves up to two
derivatives of the metric it is natural to search for an estimate of the metric in the $W^{2,a}$ norm, 
and this is precisely what we achieve in the present paper. 

We will proceed as follows. Our first task is constructing a {\sl constant mean curvature} (CMC) foliation
by spacelike hypersurfaces,
which is locally defined near the observer and satisfies quantitative bounds involving the constants $\Lambda, \lambda$, only;
see Theorem~\ref{foli} below. Our method can be viewed as a refinement of earlier works by
Bartnik and Simon \cite{BartnikSimon} (covering hypersurfaces in Minkowski space) and Gerhardt \cite{Gerhardt,Gerhardt1}
(global foliations of Lorentzian manifolds).  If one would assume that the metric $\gbf$
admits bounded covariant derivatives of sufficiently high order of regularity, then the
techniques in \cite{BartnikSimon,Gerhardt} would provide the existence of the CMC foliation
and certain estimates. Hence, 
the construction of a CMC foliation on a {\sl sufficiently smooth} manifold is standard at {\sl small scales.}

In contrast, in the framework of the present paper only {\sl
limited differentiability} of the metric should be used and uniform bounds involving the curvature and injectivity 
radius bounds, only, be sought. We have
to solve a boundary value problem for the prescribed mean
curvature equation in a Lorentzian background and to establish
that a CMC foliation exists in a neighborhood (of the observer)
with {\sl definite size} and to control the geometry of these
slices in terms of $\Lambda, \lambda$, only. A technical
difficulty in this analysis is ensuring that each hypersurface of
the foliation is {\sl uniformly} spacelike and can not approach a
null hypersurface. Deriving a gradient estimate for prescribed
curvature equations requires the use of barrier functions
determined from (parts of) suitably constructed geodesic spheres.

\subsection{Earlier works}

An extensive study of (sufficiently regular) spacetimes
admitting {\sl global} foliations by spatially compact hypersurfaces with constant mean curvature
is available in the literature.  
Andersson and Moncrief \cite{AM1,AM2} and Andersson \cite{Andersson0,Andersson} 
have established global existence theorems 
for sufficiently small perturbations of a large family of spacetimes. 
For instance, their method allowed them to establish 
a global existence theorem for sufficiently small perturbation of Friedmann-Robertson-Walker type
spacetimes. Their construction is based 
on constructing a global CMC foliation and uses harmonic coordinates on each slice. 
In these works, the authors derive (and strongly rely on) a~priori estimates which are 
based on the so-called Bel-Robinson tensor and involve {\sl up to third-order} derivatives of the metric. 
In contrast, we focus in the present paper on the local existence of such foliations 
but require only the sup norm of the curvature to be bounded. 
Our new approach leads to a construction of ``good'' local coordinates (see below) 
and allows us to explore the local optimal regularity of Lorentzian metrics. 
Another direction of research on CMC foliations 
is currently developed by Reiris \cite{Reiris,Reiris2}, who analyzes the CMC Einstein flow in connection with the Bel-Robinson energy and also imposes higher regularity of the metric. 

We also refer the reader to an ambitious program (the $L^2$ curvature conjecture)
initiated and developed by Klainerman and Rodnianski in a series of papers; see
\cite{KR1,KR2,KR4,KR5}. In these works, the authors 
are interested in controling the geometry of {\sl null cones}
 which may become singular due to caustic formation. 
The regularity of null cones is needed in order to suitably extend the methods
of harmonic analysis to the Einstein equations. 
In particular, the recent result \cite{KR5} provides a breakdown criterion for solutions to the Einstein equations. 
In comparison with the present work, 
the objectives in \cite{KR5} are different: 
these authors rely on {\sl hyperbolic} techniques and investigate the geometry of {\sl light cones,}  
while our approach in the present paper is purely {\sl elliptic} in nature and addresses the
geometry of {\sl the spacetime itself.}  

%-------------------------------------------------------------------------------------------------------------

\subsection{CMC--harmonic coordinates of an observer}

Our second task is constructing local coordinates.
In Riemannian geometry it is well-known that geodesic-based coordinates
and distance-based coordinates fail to achieve the optimal regularity of the metric.
The use of {\sl harmonic} coordinates on Riemannian manifolds was first advocated by De~Turck and Kazdan
\cite{DeTurckKazdan} and, later,
a quantitative bound on the harmonic radius at a point was derived by Jost and Karcher \cite{JostKarcher}
in terms of curvature and volume bounds, only.
More recently, the issue of the optimal regularity of Lorentzian metrics
was tackled by Anderson in the pioneering work \cite{Anderson-long-time,Anderson-regularity}. He proposed
to use a combination of normal coordinates (based on geodesics) and spatially harmonic coordinates,
and derived several uniform estimates for the metric coefficients. This construction based on 
geodesics does not lead to the desired optimal regularity, however. We also refer to earlier work by Anderson \cite{Anderson1,Anderson2} for further regularity results 
within the class of static and, more generally, stationary spacetimes. 

Our main result covers Einstein vacuum spacetimes, that is, manifolds satisfying the Ricci-flat condition
\be
\label{bd2}
\Ricbf_\gbf = 0,
\ee
and the construction we propose is as follows.
Relying on our quantitative estimates for CMC foliations near a given observer (Theorem~\ref{foli})
and then applying Jost and Karcher's theorem for Riemannian manifolds \cite{JostKarcher},
we construct (spatially) harmonic coordinates on each spacelike CMC slice. We refer to such coordinates
as {\sl CMC--harmonic coordinates,} and we prove first that, on every slice, the spatial metric
coefficients $\gbf_{ij}$ belong
to the Sobolev space $W^{2,a}$ and satisfy the {\sl quantitative} estimate
$$
\| \gbf_{ij} \|_{\pbf, \Tbf_\pbf, W^{2,a}} \leq C(a, \Lambda, \lambda)
$$
for all $a<\infty$ and some constant ${C(a, \Lambda, \lambda)>0}$ (depending also on the dimension $n$).
In addition, we also control the lapse function and the shift vector associated with
these local coordinates. The shift vector, denoted below by $\xi$, arises since coordinates are not simply transported from
one slice to another but are chosen to be harmonic on each slice. The
lapse function, denoted below by $\lambda$, is a measure of the distance between two nearby slices.

In turn, we arrive at the following main result of the present paper.

\begin{theorem}[CMC--harmonic coordinates of an observer]
\label{main}
There exist constants $0 < \barc(n) < c(n) < 1$ and $C(n), C_q(n) >0$ depending upon the dimension $n$
(and some exponent $q \in [1,\infty))$ such that the following properties hold.
Let $(\Mbf,\gbf, \pbf,\Tbf_\pbf)$ be an $(n+1)$-dimensional, pointed, Einstein vacuum spacetime satisfying
the following curvature and injectivity radius bounds at the scale $r>0$:
\be
\label{bd3}
\Rmax^r(\Mbf, \gbf, \pbf, \Tbf_\pbf) \leq r^{-2},
\qquad
\inj(\Mbf, \gbf, \pbf, \Tbf_\pbf) \geq r.
\ee
Then, there exists a local coordinate system $\xbf=(t,x^1,\ldots,x^n)$ having
$p={(r_1,0,\ldots,0)}$ for some $r_1\in [\barc(n) r, c(n) r]$
and defined for all
$$
|t-r_1| < c(n)^2 r, \qquad \big( (x^1)^2 + \ldots + (x^n)^2 \big)^{1/2} < c(n)^2 r,
$$
so that the following two properties hold:
\begin{itemize}
\item [i)] Each slice $\Sigma_t = \big\{(x^1)^2 + \ldots + (x^n)^2 < c(n)^4 r^2\big\}$ on which $t$ remains constant
is a spacelike hypersurface with constant mean curvature $c(n)^{-1} r^{-2} t$ and the coordinates
$x:=(x^1,\ldots,x^n)$ are harmonic for the metric induced on $\Sigma_t$.

\item[ii)] The Lorentzian metric in the spacetime coordinates $\xbf=(t,x^1,\ldots,x^n)$ has the form
\be
\label{metricform0}
\gbf = - \lambda(\xbf)^2 \, (dt)^2 + g_{ij} (\xbf) \big( dx^i + \xi^i(\xbf) \, dt \big) \big( dx^j + \xi^j(\xbf) \, dt \big)
\ee
and is close to the Minkowski metric in these local coordinates, in the sense that
$$
\aligned
& e^{-C(n)} \leq \lambda \leq e^{C(n)},
\\
& e^{-C(n)}\delta_{ij}\leq g_{ij} \leq  e^{C(n)}\delta_{ij},
\qquad
|\xi|_g^2 := g_{ij}\xi^i \xi^j \leq e^{-C(n)},
\endaligned
$$
and for each $q \in [1,\infty)$
$$
\frac{1}{r^{n-q}} \int_{\Sigma_t}|\del_{\mbox{\small \boldmath$x$}} \gbf|^q \, dv_{\Sigma_t}
+
\frac{1}{r^{n-2q}}\int_{\Sigma_t}|\del^2_{\mbox{\small \boldmath$x$} \mbox{\small \boldmath$x$}} \gbf|^q \, dv_{\Sigma_t}
\leq C_q(n).
$$
\end{itemize}
\end{theorem}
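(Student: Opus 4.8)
The plan is to combine the quantitative CMC-foliation result (Theorem~\ref{foli}) with Jost--Karcher's harmonic radius estimate on each slice, and then to upgrade the spatial $W^{2,a}$ control to a full spacetime estimate by exploiting the Einstein equations written in the ADM/CMC gauge. First I would invoke Theorem~\ref{foli} to obtain, in a neighborhood of the observer of size comparable to $r$, a foliation $\{\Sigma_\tau\}$ by spacelike hypersurfaces of constant mean curvature, together with uniform bounds (depending only on $\Lambda,\lambda$, equivalently only on $n$ after the scaling $r=1$) on: the second fundamental forms, the induced metrics $g_{ij}$, the lapse $\lambda$, and the uniform spacelikeness (a definite bound away from the null cone). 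Rescaling so that $r=1$, I pick the central slice $\Sigma_{r_1}$ with $r_1\in[\barc(n),c(n)]$; the mean curvature is a fixed multiple of $r_1$, giving the stated value $c(n)^{-1}r^{-2}t$ after undoing the rescaling. On this slice the curvature and injectivity radius of the induced Riemannian metric are controlled (the intrinsic curvature via Gauss' equation, using the bound on $\Rm_\gbf$ and the uniform bound on $|A|$; the injectivity radius from the Lorentzian injectivity bound together with spacelikeness), so Jost--Karcher \cite{JostKarcher} produces harmonic coordinates $x=(x^1,\dots,x^n)$ on each $\Sigma_\tau$ with definite harmonic radius and the elliptic estimate $\|g_{ij}\|_{W^{2,a}}\le C(a,n)$ for every finite $a$. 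Transporting these coordinates requires choosing them consistently across slices, which forces a shift vector $\xi^i$; labeling the slices by $t=$ (constant multiple of the mean curvature) produces the spacetime chart $\xbf=(t,x^1,\dots,x^n)$ and the metric form~\eqref{metricform0}.

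The remaining task is to turn the per-slice spatial $W^{2,a}$ bound into the spacetime estimate claimed in part~(ii), i.e.\ to also control $t$-derivatives and mixed derivatives up to second order. Here I would use the Einstein vacuum equations~\eqref{bd2} decomposed relative to the CMC foliation. The momentum constraint and the Gauss--Codazzi relations give control of $\del A$; the lapse $\lambda$ satisfies a linear elliptic equation on each slice (the CMC lapse equation $\Delta_g\lambda = (|A|^2)\lambda - \del_t(\mathrm{tr}A)$, which in the vacuum case has the form $\Delta_g\lambda = |A|^2\lambda + \text{const}$), so elliptic regularity on each slice, with the already-established $W^{2,a}$ control of $g_{ij}$ and the $C^0$ control of $|A|$, yields $\lambda\in W^{2,a}$ on each slice with uniform bounds; the Hamiltonian constraint $R_g = |A|^2$ is then consistent and gives no obstruction. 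For the shift, the harmonicity condition on each slice, differentiated in $t$, yields an elliptic system for $\del_t x^i$ (equivalently for $\xi^i$) whose coefficients and source are controlled by the data already bounded, so $\xi^i\in W^{2,a}$ on each slice as well. Finally, the evolution equations $\del_t g_{ij}$ and $\del_t A_{ij}$ express the $t$-derivatives of the metric coefficients algebraically/differentially in terms of $\lambda,\xi,g,A$, all now controlled; iterating once more recovers the mixed second derivatives $\del^2_{t\mbox{\small\boldmath$x$}}\gbf$ and $\del_t^2\gbf$. Collecting these gives the integral bound in~(ii) after integrating the pointwise $W^{2,a}$ estimates over $\Sigma_t$ and rescaling (the powers $r^{n-q}$ and $r^{n-2q}$ being exactly the scaling weights of $\int|\del\gbf|^q$ and $\int|\del^2\gbf|^q$ in dimension $n$).

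The closeness to the Minkowski metric in part~(ii) — the bounds $e^{-C(n)}\le\lambda\le e^{C(n)}$, $e^{-C(n)}\delta_{ij}\le g_{ij}\le e^{C(n)}\delta_{ij}$, and $|\xi|_g^2\le e^{-C(n)}$ — follows by shrinking the chart: on a sufficiently small ball (radius a definite fraction $c(n)^2 r$ of the harmonic radius) the harmonic-coordinate metric is uniformly close to $\delta_{ij}$ by the Jost--Karcher estimate, the lapse is close to a constant which we normalize, and the shift, which vanishes along a reference curve by construction, stays small by the $W^{2,a}\hookrightarrow C^{0}$ Sobolev embedding combined with the smallness of the domain. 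The main obstacle I anticipate is the consistent choice and control of the shift vector across slices: one must verify that solving the slice-by-slice harmonic-coordinate condition can be done smoothly in $t$ with uniform bounds, i.e.\ that the linearized operator governing $\del_t x^i$ is uniformly invertible on the relevant function spaces with the limited regularity available — this is where the interplay between the elliptic theory on the slices and the (weak) regularity in the transverse direction is most delicate, and where one must be careful not to lose a derivative. All estimates are scale-invariant, so it suffices to carry out the argument at $r=1$ and restore $r$ by homogeneity at the end.
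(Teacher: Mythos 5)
Your high-level architecture matches the paper's: (i)~the CMC foliation from Theorem~\ref{foli}, (ii)~Jost--Karcher harmonic coordinates on each slice, (iii)~the shift vector from the slicewise harmonicity condition, (iv)~ADM evolution equations plus the vacuum condition to convert spatial control into time-derivative control, and (v)~the observation that all estimates are scale invariant so one may set $r=1$. The closeness to Minkowski also comes out as you describe, from the maximum principle (for the shift and lapse) together with the Jost--Karcher bounds.

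However, the two places you flag only in passing are precisely where the hard analysis lives, and your proposal as written does not resolve them. First, you assert that ``iterating once more recovers the mixed second derivatives $\del^2_{t\mbox{\small\boldmath$x$}}\gbf$ and $\del_t^2\gbf$,'' but for the lapse function this iteration is the main technical achievement of Section~3 (Lemmas~\ref{propo32} and~\ref{zhongyao}). Differentiating the CMC lapse equation \eqref{lapse} once and twice in $t$ produces terms like $\nabla_{\del/\del t}\Ricbf$ and $\del\Rbf_{i\Nbf j\Nbf}/\del t$ whose naive regularity is too low; the paper uses the second Bianchi identity and Codazzi's equation together with the vacuum condition to recast the right-hand side of the twice-differentiated equation into the specific structure $\nabla_i V^i + f_1 + f_2 + f_3$ with $V\in L^2$, $f_1 = k\ast f_1'$, $f_2 \in L^{2n/(n+2)}$, $f_3\le C|\nabla^2\lambda|^2$, which is then integrable against $\del^2\lambda/\del t^2$. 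Without this structural observation the $L^2$ bound on $\del^2\lambda/\del t^2$ does not follow. Second, for the shift vector you say ``one must be careful not to lose a derivative'' but stop there. In the paper the second-order time derivative of $\xi$ satisfies an elliptic equation \eqref{xiequationtime2} whose source contains $\del_i\del_j(F^{kij}_m)\xi^m$, which is a priori only in $W^{-2,q}$; this is repaired by the substitution $v^k = \del^2\xi^k/\del(x^0)^2 - u^k_m\xi^m$ where $u^{kij}$ solves an auxiliary Dirichlet problem, and then an integral inequality of Moser type. You would also need the Nash--Moser iteration (used repeatedly throughout the paper for sup-norm upgrades of quantities such as $|\nabla u|$, $|h|$, $\del\lambda/\del t$, and $|\nabla(x^i-y^i)|^2$), which your plan omits. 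So the proposal is a reasonable outline of the \emph{strategy} but leaves the two genuinely novel estimates unproved.
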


The theorem above establishes the existence of {\sl locally defined}
CMC--harmonic coordinates near any observer. The coordinates cover a neighborhood of the base point, whose
size is of order $r$ in the timelike and in the spacelike directions.
In the statement above, $\del_{\mbox{\small \boldmath$x$}} \gbf$ and
$\del_{\mbox{\small \boldmath$x$}\mbox{\small \boldmath$x$}} \gbf$
denote any {\sl spacetime} first- and second-order
derivatives of the metric coefficients in the local coordinates, respectively,
while $dv_{\Sigma_t}$ denotes the volume form induced on $\Sigma_t$ by the spacetime metric and can be
computed in terms of the spatial coordinates $x$.

Finally, let us put our results in a larger perspective.
The proposed framework relies on constructing purely {\sl local} CMC--harmonic coordinates
and, therefore, applies to spacetimes which need not admit a global CMC foliation.
Hwever, based on our local regularity theory we can also control the
{\sl global} geometry of the spacetime,
as follows.

Given a pointed Lorentzian manifold $(\Mbf, \gbf, \pbf, \Tbf_\pbf)$
satisfying a global version of the curvature and injectivity radius estimates
\eqref{bd3}, we can find a {\sl global} atlas of local charts
covering the whole of $\Mbf$ and in which the metric coefficients have the optimal regularity.
Such a conclusion is achieved by introducing a notion of {\sl global
CMC--harmonic radius viewed by the observer} $(\pbf, \Tbf_\pbf)$:
it is the ``largest'' radius $r>0$ such that {\sl local} CMC--harmonic coordinates exist
in a ball of radius $r$ about {\sl each} point
(and satisfy the uniform estimates stated in Theorem~\ref{main} for some fixed constants
$\barc(n), c(n), C(n), C_q(n)$). By establishing a lower bound on the radius of 
balls in which local CMC--harmonic coordinates exist at every point, we obtain
the desired global optimal regularity.
Again, this is a purely geometric result that involves the curvature and injectivity radius bounds, only.
This development is a work in progress.

Throughout this paper, we use the notation $C, C', C_1, \ldots$ for constants that only depend on the dimension
$n$ and may change at each occurrence. 

%============================================================================================================

\section{CMC foliation of an observer}
\label{BB-0}

\subsection{Main statement in this section}
\label{BB-01}

In this section, we derive quantitative bounds on local CMC foliations for a general class of
Lorentzian manifolds which need not satisfy the Einstein equations.
For background on Riemannian or Lorentzian geometry we refer to \cite{Besse,HawkingEllis,ONeil}.
Let $(\Mbf,\gbf)$ be a time-oriented, $(n+1)$-dimensional Lorentzian manifold, and let
$\Dbf$ be the Levi-Civita connection associated with $\gbf$. Given a point $\pbf \in \Mbf$,
we want to construct a constant mean curvature foliation that is defined near $\pbf$
and whose geometry is uniformly controled in terms of the curvature and injectivity radius, only.
The inner product of two vectors $X,Y$ is also written $\la X, Y \ra_g = \la X,Y \ra = g(X,Y)$.

In fact, rather than a single point on the manifold we must prescribe
an {\sl observer,} that is, a pair $(\pbf,\Tbf_\pbf)$ where $\Tbf_\pbf$ is a unit, future-oriented, timelike vector at $\pbf$
(also called a {\sl reference vector}).
We use the notation $(\pbf,\Tbf_\pbf) \in T_1^+\Mbf$ for the bundle of such pairs, and we refer to
$(\Mbf, \gbf, \pbf,\Tbf_\pbf)$ as a {\sl pointed Lorentzian manifold.}
The vector $\Tbf_\pbf$ naturally induces a (positive-definite) inner product on the tangent space at $\pbf$,
which we denote by
$
\gbf_{\Tbf_\pbf} = \la \, \cdot \, , \, \cdot \, \ra_{\Tbf_\pbf}.
$
We sometimes write $|X|_{g_T}$ 
for the Riemannian norm of a vector $X$.
To simplify the notation, we often write $\Tbf$ instead of $\Tbf_\pbf$.

On a Lorentzian manifold the notion of injectivity radius is defined as follows.
Consider the {\sl exponential map} $\expb_\pbf$ at the point $\pbf$, as a map defined on
the Riemannian ball $B_{\gbf_\Tbf}(\pbf, r)$ (a subset of the tangent space at $\pbf$)
and taking values in $\Mbf$; this map is well-defined for all sufficiently small radius $r$, at least.

\begin{definition}
\label{radius}
Given a Lorentzian manifold $(\Mbf, \gbf)$, the {\sl injectivity radius} $\inj(\Mbf,\gbf,\pbf,\Tbf_\pbf)$
of an observer $(\pbf,\Tbf_\pbf) \in T_1^+\Mbf$ is the supremum over all
 radii $r>0$ such that the exponential map $\expb_\pbf$
is well-defined and is a global diffeomorphism from the subset $B_{\gbf_\Tbf}(\pbf,r)$ of the tangent space at $\pbf$
to a neighborhood of $\pbf$ in the manifold denoted by
$
\Bcal_{\gbf_\Tbf}(\pbf,r) := \expb_\pbf\big( B_{\gbf_\Tbf}(\pbf,r)\big) \subset \Mbf.
$
\end{definition}

To simplify the notation, we will also use the notation $B_\Tbf(\pbf,r)$ and $\Bcal_\Tbf(\pbf,r)$ for the
above Riemannian balls.
To state our assumption on the curvature we need a Riemaniann metric defined in a neighborhood of the point $\pbf$.
This reference metric is also denoted by $\gbf_\Tbf$ and is defined as follows.

By parallel transporting the vector $\Tbf_\pbf$,
with respect to the Lorentzian connection $\Dbf$ and
along radial geodesics leaving from $\pbf$, we construct a vector field $\Tbf$ which, however,  may be
 multi-valued
since two distinct geodesics leaving from $\pbf$, in general, may eventually intersect.
We use the notation $\Tbf_\gamma$ for the vector field defined along a radial geodesic $\gamma$
lying in the set $\Bcal_\Tbf(\pbf, r)$.
Then, to this vector field we canonically associate a positive-definite, inner product
$\gbf_{\Tbf_\gamma} = \la \, \cdot \, , \, \cdot \, \ra_{\Tbf_\gamma}$
defined in the tangent space of each point along the geodesic. We write
$|A|_{\Tbf_\gamma}$ or $|A|_\Tbf$ for the corresponding Riemannian norm of a tensor $A$.

We then consider the Riemann curvature $\Rm$ of the connection $\Dbf$
and, given an observer $(\pbf, \Tbf_\pbf)$, we compute its norm in the ball of radius $r$ 
\be
\label{maxcurv}
\Rmax^r(\Mbf,\gbf, \pbf,\Tbf_\pbf) := \sup_{\gamma} |\Riem|_{\Tbf_\gamma},
\ee
where the supremum is taken over every radial geodesic from $\pbf$ 
of Riemannian length $r$, at most. Note that
$|\Riem|_{\Tbf_\gamma}$ is evaluated with the Riemannian reference metric rather than from the Lorentzian metric.
Our main assumption \eqref{bd3} is now well-defined.

Our objective in this section is constructing a foliation near $\pbf$, say 
$
\bigcup_{\bart \leq t \leq \tbar} \Sigma_t,
$
by $n$-dimensional spacelike hypersurfaces $\Sigma_t \subset M$, and we require that each slice has constant mean curvature
equal to $t$, the range of $t$ being specified by some functions $\bart=\bart(\pbf)$ and $\tbar=\tbar(\pbf)$.
Moreover, this foliation should cover a ``relatively large'' part of the ball $\Bcal_\Tbf(\pbf, r)$.

\begin{theorem}[Uniform estimates for a local CMC foliation of an observer]
\label{foli}
There exist constants $\barc, c, \cbar, \theta,\zeta \in (0,1)$ with
$\barc < c < \cbar$, depending only on the dimension of the manifold
such that the following property holds. Let  $(\Mbf, \gbf,\pbf,\Tbf_\pbf)$ be a pointed Lorentzian
manifold satisfying the curvature and injectivity radius assumptions \eqref{bd3} at some scale $r>0$.
Then, the Riemannian ball $\Bcal_\Tbf(\pbf,cr)$ can be covered by a foliation
by spacelike hypersurfaces $\Sigma_t$ with constant mean curvature $t$,
\be
\label{foliation1}
\aligned
& \Big( \bigcup_{\bart \leq t \leq \tbar} \Sigma_t \Big) \supset \Bcal_\Tbf(\pbf,cr),
\qquad \quad
\bart := n {1-\zeta \over sr}, \qquad \tbar := n {1+\zeta \over sr},
\endaligned
\ee
in which the time variable describes a range $[\bart, \tbar]$ determined by some real $s \in [\barc,\cbar]$
and, moreover,
the unit normal vector $\Nbf$ and the second fundamental form $h$ of the foliation satisfy
$$
\aligned
1 \leq - g(\Nbf,\Tbf) & \leq 1 + \theta^{-1}, \qquad  \theta \leq -r^{-4} \gbf(\Dbf t, \Dbf t) \leq \theta^{-1}.
\qquad 
r \, |h|  \leq \theta^{-1},
\endaligned
$$
(Recall that the vector field $\Tbf$ is defined by parallel translating the given vector $\Tbf_\pbf$
along radial geodesics from $\pbf$.)
\end{theorem}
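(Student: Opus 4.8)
The plan is to solve, for each target mean curvature value $t$ in a suitable range, a Dirichlet problem for the prescribed mean curvature equation over a fixed spacelike ``coordinate domain'' attached to the observer, and then to patch the solutions together into a foliation with the claimed quantitative bounds. First I would set up a reference background: using the exponential map $\expb_\pbf$, which is a diffeomorphism on $B_\Tbf(\pbf,r)$ by the injectivity bound in \eqref{bd3}, I transplant the metric to the Euclidean ball and split $T_\pbf\Mbf$ as $\R\Tbf_\pbf \oplus \Tbf_\pbf^\perp$. This gives ``geodesic-normal'' coordinates $(\tau,y)$ in which $\gbf$ is a small perturbation (controlled by $\Lambda = r^{-2}$ via Jacobi-field/Rauch comparison, exactly as in the curvature-bound estimates one gets from \eqref{maxcurv}) of the Minkowski metric on a ball of definite radius $\sim c r$. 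The CMC slices will be sought as graphs $\tau = u(y)$ over a fixed ball $\{|y| < c^2 r\}$, with the mean-curvature operator $\mathcal H[u]$ a quasilinear elliptic operator on that domain as long as the graph stays uniformly spacelike, i.e. as long as the gradient of $u$ (measured in the background) stays bounded away from the light cone.

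The core analytic step is the a~priori gradient estimate: one must show that any spacelike graphical solution with prescribed mean curvature in the chosen range is \emph{uniformly} spacelike, i.e. $-r^{-4}\gbf(\Dbf t,\Dbf t)$ is bounded below by a definite $\theta$ and the tilt $-g(\Nbf,\Tbf)$ by $1+\theta^{-1}$. I would obtain this via barrier functions: take appropriately scaled pieces of geodesic spheres centered at points displaced along the $\pm\Tbf$ directions — these have mean curvature of order $\pm r^{-1}$ by the curvature bound, hence sit above/below the CMC slices of mean curvature $\sim t \sim n/(sr)$ for $s$ in a definite range — and compare at the boundary and interior via the maximum principle for the (concave in the Hessian) prescribed-curvature operator. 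Combined with an interior gradient estimate of Bartnik--Simon type (the gradient of a spacelike prescribed-mean-curvature graph is controlled by its $C^0$ oscillation and the curvature of the ambient metric), this keeps the solutions in the region where the operator is uniformly elliptic, which then feeds into Schauder/Krylov--Safonov estimates to give existence (via continuity method or Leray--Schauder degree, deforming from the flat problem) and the bound $r|h|\le\theta^{-1}$ on the second fundamental form.

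Once, for each admissible constant $c' r^{-2}t \in [\bart,\tbar]$, a unique uniformly spacelike solution $u_t$ is produced, I would verify that the slices are ordered and sweep out the ball: monotonicity of $u_t$ in $t$ follows from the comparison principle applied to two solutions with different constant right-hand sides, so the graphs $\{\tau = u_t(y)\}$ are pairwise disjoint and depend monotonically on $t$; choosing the endpoint values $\bart = n(1-\zeta)/(sr)$, $\tbar = n(1+\zeta)/(sr)$ so that $u_{\bart}$ lies below $\pbf$ and $u_{\tbar}$ above it (this is where the free parameter $s\in[\barc,\cbar]$ enters — it is fixed so that the slice through $\pbf$ has the right mean curvature and the foliation is centered), the union $\bigcup_{\bart\le t\le\tbar}\Sigma_t$ contains a ball $\Bcal_\Tbf(\pbf,cr)$ of definite size. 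The smooth dependence on $t$ (hence that this is genuinely a foliation, with $\Dbf t$ well-defined and spacelike) comes from the implicit function theorem using the uniform ellipticity just established. The main obstacle I anticipate is the gradient estimate at the \emph{boundary} of the coordinate domain: controlling the normal derivative of $u_t$ there so that the slice cannot become null near $\partial\{|y|<c^2 r\}$, which is precisely why one needs the geodesic-sphere barriers to be constructed carefully and to have the correct mean curvature sign with only the sup-norm bound on curvature available — ordinary Minkowski barriers do not suffice, and this is the step where the ``definite size'' of the neighborhood is actually won or lost.
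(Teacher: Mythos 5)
Your high-level outline matches the paper's: graph the CMC slices over a reference slice in normal coordinates, establish an a priori gradient estimate so the graphs stay uniformly spacelike, use geodesic-sphere barriers for the boundary estimate, solve the Dirichlet problem by a continuity-type argument, order the slices by the comparison principle, and adjust the free parameter $s$ so the foliation passes through $\pbf$. This captures the skeleton of Proposition~\ref{folia} and Lemmas~\ref{compa}--\ref{tidu}. But there are two substantive gaps.

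First, the reference coordinates are set up differently in a way that matters. You split $T_\pbf\Mbf = \R\Tbf_\pbf \oplus \Tbf_\pbf^\perp$ and graph $\tau = u(y)$ over a flat slice through $\pbf$. The paper instead takes a point $\qbf = \gamma(0)$ in the timelike \emph{past} of $\pbf$ (with $\pbf = \gamma(cr)$), uses the Lorentzian distance $\tau = \dt(\qbf,\cdot)$ as the time coordinate, and graphs over a piece of the geodesic sphere $\{\tau = sr\}$. By the Hessian comparison estimate \eqref{Hess}, these reference slices already have mean curvature $\approx n/\tau \sim 1/(sr)$ — exactly the range of constant mean curvature that the foliation must carry. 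With your flat slices of near-zero mean curvature, achieving $H \sim 1/(sr)$ would require a graph with Hessian of size $\sim 1/(sr)$ over a domain of size $\sim r$, which is not a small perturbation and would take the graph out of the coordinate chart. The large prescribed value of $H$ (relative to the slice scale) is also needed to make the linearized operator $\Delta - |h|^2 - \Ricbf(\Nbf,\Nbf)$ invertible, via \eqref{NNN}; if $H$ were small, $|h|^2$ could fail to dominate $\Ricbf(\Nbf,\Nbf)$.

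Second, and more seriously, you reach for Bartnik--Simon interior gradient estimates and Schauder/Krylov--Safonov theory off the shelf. The paper explicitly flags this: under only the sup-norm curvature and injectivity bounds \eqref{bd3}, the coefficients of the mean-curvature operator (the induced metric components, in particular their \emph{spatial} derivatives) are controlled only in $L^\infty$, not in any H\"older norm, so Schauder is unavailable; and the Bartnik--Simon gradient estimate, as well as Gerhardt's extension to curved backgrounds, presuppose several bounded derivatives of the ambient metric. The whole point of Theorem~\ref{foli} is to produce \emph{quantitative} bounds with only the geometric data in \eqref{bd3}. The paper achieves this by a genuinely different route: the global gradient bound (Lemma~\ref{global}) comes from the Weitzenb\"ock identity \eqref{boch} applied to the level function $u$, combined with the Gauss equation to handle the intrinsic Ricci term, integrated against powers of $v = (1+|\nabla u|^2 - k)_+$, and closed by Nash--Moser (De~Giorgi) iteration, together with an $L^{p_0}$ seed estimate \`a la Gerhardt using $\Delta(e^{\lambda u})$. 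Similarly, the second-fundamental-form bound (Lemma~\ref{sec}) is extracted from Simons' identity \eqref{simon1} again by Nash--Moser iteration, not from Schauder/Krylov--Safonov; and the $L^q$ bound on $\nabla h$ only becomes available after passing to spatially harmonic coordinates on the slice. Without confronting the $L^\infty$-coefficient issue, your argument would prove the qualitative statement (a CMC foliation exists near any point of a smooth Lorentzian manifold, which the paper notes is standard) but not the quantitative one with constants depending only on $n$ through $\Lambda$ and $\lambda$, which is what the theorem asserts.

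Minor remark: you correctly anticipate that the boundary gradient estimate is won or lost by the barrier construction. The paper's barriers are pieces of a \emph{Riemannian} geodesic sphere $\St(\pbf_s,\cdot)$ for the reference metric $\gt$ (lower barrier) and the Lorentzian geodesic sphere $\{\tau = sr\}$ itself (upper barrier); and the domain $\Omega_s$ is deliberately cut out by that Riemannian sphere so that its graph can serve as a barrier. Your phrasing "geodesic spheres centered at points displaced along $\pm\Tbf$" is in the right spirit, but should distinguish Lorentzian from Riemannian spheres — the former do not directly give spacelike graphs, and getting the signs of the mean curvature right for the comparison requires the Hessian estimates \eqref{Hess} and \eqref{hess} for \emph{both} distance functions $\tau$ and $\dt$.
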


Hence, a foliation exists in a neighborhood of the base point, in which the time variable is
of order $1/r$ and describes an interval with definite size. In our construction given below, it 
will be important that
$s$ be chosen to be sufficiently small.

Note that the above theorem is purely geometric and does not
depend explicitly on the coordinates that we are going now to introduce in order to establish the existence
of the above foliation and control its geometry.

The rest of this section is devoted to giving a proof of Theorem~\ref{foli}. We first construct the
CMC hypersurfaces as graphs over geodesic spheres associated with the Lorentzian metric. Geodesic spheres
associated with the reference Riemannian metric will be introduced to serve as barrier functions.
Indeed, each CMC hypersurface will be pinched between a Lorentzian and a Riemannian geodesic ball.
The level set function describing the CMC hypersurface satisfies a nonlinear elliptic equation, whose
coefficients have rather limited regularity, and this will force us to use the Nash-Moser iteration technique.

%-------------------------------------------------------------------------------------------------------------

\subsection{Formulation in normal coordinates}

\subsubsection*{Foliation by geodesic spheres}

We begin by introducing spacetime normal coordinates and by expressing the prescribed mean curvature equation
in these coordinates. As we established earlier in \cite{ChenLeFloch}, under the curvature and injectivity radius assumption
\eqref{bd3} for the observer $(\pbf,\Tbf_\pbf)$, there exist positive constants
$\barc < c < \cbar < 1$ and $C$ depending only on the dimension of the manifold,
such that the following properties hold.

First of all, the foliation by subsets $\Hcal_\tau$ of geodesic spheres is defined as follows.
Let $\gamma$ be a future-oriented, timelike geodesic containing $\pbf$ and let us parame\-terize it so that
$\pbf =\gamma(cr)$. Set $\qbf := \gamma(0)$ and consider the (new) observer $(\qbf,\Tbf_\qbf)$ with $\Tbf_\qbf:=\gamma'(0)$.
The constant $\cbar$ is chosen sufficiently small so that the injectivity radius of the map
$\expb_\qbf$ (computed for the observer $(\qbf,\Tbf_\qbf)$) is $\cbar \, r$,  at least.
From the point $\qbf$ we consider normal coordinates $\ybf = (y^\alpha)=(\tau, y^j)$ determined by
the family of future-oriented timelike radial geodesics from $\qbf$, so that
the Lorentzian metric takes the form
$
\gbf = - d\tau^2 + \gbf_{ij} \, dy^idy^j.
$
These coordinates cover a part of the future of the point $\qbf$ and at least the region
$$
\aligned & \Ccal^+(\qbf,\cbar r) := \expb_\qbf\big(C^+(\qbf,\cbar r)\big),
\\
& C^+(\qbf,  \cbar r) := \left\{ V \in B_{\Tbf_\qbf}(0,\cbar r), \quad \gbf_{\Tbf_\qbf}(V,V) < 0,
\quad
{\gbf_{\Tbf_\qbf}( \Tbf_\qbf, V ) \over \, \, \gbf_{\Tbf_\qbf}(V,V)^{1/2}} \geq 1- \cbar \right\}.
\endaligned
$$

The base point $\pbf$ is identified with $(\tau,y^1, \ldots, y^n) = (cr,0,\ldots,0)$ in these coordinates.
By relying on the curvature bound,
analyzing the behavior of Jacobi fields, and using standard 
comparison arguments from Riemannian geometry, one can establish in well-chosen coordinates~\cite{ChenLeFloch}: 
\be
\label{normal}
\aligned
& C^{-1} \delta_{ij} \leq \gbf_{ij} \leq C \, \delta_{ij}, &&
\\
& r^{-1} \Big| \frac{\del \gbf_{ij}}{\del \tau}\Big|
  + r^{-2} \Big| \nabla_{{\del \over \del \tau}} \frac{\del\gbf_{ij}}{\del \tau}\Big| \leq C
&& \quad \text{in } \Ccal^+(\qbf,\cbar r)\cap\big\{ \bartau \leq \tau \leq \taubar \big\},
\endaligned
\ee
where $\bartau := \barc r$ and $\taubar:=\cbar r$.

The reference Riemannian metric associated with
the vector field $\del/\del \tau$ (obtained by parallel transporting the vector $\Tbf_\qbf$) reads
$
\gt : = d\tau^2 + \gbf_{ij} \, dy^idy^j.
$
We use the notation $\dt(\cdot,\cdot)$ and $\St(\cdot,\cdot)$ for the distance
function and the geodesic spheres associated with the metric $\gt$, respectively.
By construction, the function $\tau$ coincides with the distance function $\dt(\qbf,\cdot)$.
It will be useful also to have the following estimate of the Riemann curvature of the Lorentzian metric
\be
\label{curv}
|\Rm|_\gt \leq C \, r^{-2} \qquad \text{in } \Ccal^+(\qbf,\cbar r)\cap\big\{ \bartau \leq \tau \leq \taubar \big\},
\ee
which is based on the reference metric $\gt$.

In turn, the above construction provides us with a foliation (by Lorentzian geodesic spheres)
of some neighborhood
of $\pbf$ (with definite size) by $n$-dimensional spacelike hypersurfaces $\Hcal_\tau$, hence 
$
\pbf \in \bigcup_{\tau \in [\bartau,\taubar]} \Hcal_\tau.
$

Now, consider the time function $\tau$. The standard Hessian comparison
theorem for distance functions in Riemannian geometry is also useful in Lorentzian geometry
and, more precisely, shows that the (restriction of the) Hessian of $\tau$ is equivalent to the induced metric:
\be
\label{Hess}
\bark(\tau, r) \, \gbf_{ij} \leq (-\Dbf^2\tau)|_{E, ij} \leq \kbar(\tau,r) \, \gbf_{ij},
\ee
where $E:= \big( \Dbf \tau \big)^\perp$ denotes the orthogonal complement and
$$
\bark(\tau,r) := \frac{r^{-1} \sqrt{C}}{\tan \big(\tau \, r^{-1} \sqrt{C}\big)},
\qquad
\kbar(\tau,r)
:= \frac{r^{-1} \sqrt{C}}{\tanh \big(\tau \, r^{-1} \sqrt{C}\big)}.
$$
Observe that both $\bark(\tau,r)$ and $\kbar(\tau,r)$ behave like $1/\tau$ when $\tau \to 0$.
Note also that $\bark$ will remain non-singular within the range of interest, since $\tau$ will be chosen to be
a small multiple of $r$.

Consequently, noting that
\be
-\Dbf^2_{ij}\tau = {1 \over 2} \, \frac{\del \gbf_{ij}}{\del \tau} =: A_{ij},
\ee
taking the trace in
the inequalities \eqref{Hess}, and then using the uniform estimate \eqref{normal},
we see that the mean curvature of each slice $\tau=const.$ is close to
$n / \tau$. 
Our objective now is
to replace these slices by constant mean curvature slices by making a small perturbation
determined by solving an elliptic equation in these normal coordinates.

We will also use the {\sl Riemannian geodesic spheres} associated with $\gt$.
Namely, consider the Riemannian distance function $\dt(\pbf', \cdot)$ computed from some arbitrary point
$
\pbf' :=\gamma(\tau) \quad \text{ with } \tau \in [\bartau, \taubar].
$
From the expression of the reference metric we find
\be
\aligned
& \Dbf^2 \dt(\pbf',\cdot) = \nablat^2 \dt(\pbf',\cdot)- 2 \,  \frac{\del \dt}{\del\tau} \, A,
\endaligned
\ee
where $\nablat$ is the covariant derivative associated with $\gt$.
Again by the Hessian comparison theorem and since $\big| \del \dt / \del\tau\big| \leq | \nablat \dt |_\gt =1$,
we find after setting $\widetilde E:={\big(\nablat \dt\big)^\perp}$
\be
\label{hess}
\aligned
\left(\bark(\dt,r) - \frac{C}{r}\right) \, \gt|_{\widetilde E}
\leq \big( \Dbf^2 \dt\big)|_{\widetilde E}
\leq
\left (\kbar(\dt,r) + \frac{C}{r} \right) \, \gt|_{\widetilde E}.
\endaligned
\ee

Choosing now the time variable to be a (small) multiple of $r$ and
taking the trace of the above inequalities, we deduce that for any
$a\in [\barc r, \cbar r]$ the mean curvature $H_{\Acal(\pbf',a)}$
(computed with respect to the ambient Lorentzian metric) of the
(future-oriented, spacelike, and possibly empty) intersection
$\Acal(\pbf',a)$ of the $\gt$-geodesic sphere $\St(\pbf',a)$ and
the future set $\Ccal^+(\qbf,\cbar r)$ satisfies the inequalities
\be 
\label{bard} 
n \, \bark(a,r) \leq H_{\Acal(\pbf',a)}\leq n \, \kbar(a,r), 
\qquad a \in [\barc r, \cbar r]. 
\ee 
Hence, the mean curvature of the Riemannian slices
enjoys the same inequalities as the ones of the Lorentzian slices
$\Hcal_\tau$. Later in this section, we will use the graph of the
Riemannian geodesic spheres as barrier functions.

This completes the discussion of a domain of coordinates $\ybf$
covering a neighborhood of $\pbf$, in which we can assume that all of the above estimates are valid.

%------------------------------------------------------------------------------

\subsubsection*{Mean curvature operator}

We now search for a new foliation $\bigcup_t \Sigma_t$ in which the hypersurfaces
have constant mean curvature
and can be viewed as graphs, say 
$
\Sigma_t := \left \{ G^t(y):= (u^t(y),y) \right\},  
$
over a geodesic leaf $\Hcal_\tau$ for a given value $\tau$ of time-function.
 Here,
$t$ is a real parameter varying in some interval of definite size
and the functions $y \mapsto u^t(y)$ need to be determined. In the following,
we often write $\Sigma = \Sigma_t$, $u=u^t$, and $G=G^t$.
Setting
$u_j := \del u / \del y^j$, the induced metric and its inverse are determined by projection on
the slice $\Sigma$ and read
$$
g_{ij} = \gbf_{ij} - u_i u_j,
\qquad
g^{ij} = \gbf^{ij} + {\gbf^{ik} \gbf^{jl} u_k u_l \over 1 - |\Dbf u|^2},
$$
and the hypersurface $\Sigma$ is Riemannian if and only if
$$
|\Dbf u|^2 = \gbf^{ij}(u,\cdot)u_iu_j < 1.
$$

We are interested here in spacelike hypersurfaces,
 and we denote by $\nabla$ the covariant derivative associated with the induced
Riemannian metric $g_{ij}$. We easily obtain
$$
|\nabla u|^2 = g^{ij} u_i u_j
: =
\frac{|\Dbf u|^2}{1-| \Dbf u|^2},
\qquad |\Dbf u|^2 = {|\nabla u|^2 \over 1 + |\nabla u|^2}.
$$
The future-oriented unit normal to each hypersurface takes the form 
$$
\Nbf = - {\sqrt{1 + |\nabla u|^2}} \, (1, \nabla u).
$$
The second fundamental form of the slice $\Sigma$ is determined by push forward
(with the map $G$) of the coordinate vector fields $Y_j := \del /\del y^j$:
\be
\label{2form}
\aligned
h_{ij}
&: = \big\la \nabla_{G_* Y_i} G_* Y_j, \Nbf \big\ra
\\
& = \frac{1}{\sqrt{1-| \Dbf u|^2}} \, \left( \Dbf_{i}\Dbf_{j}u + {1 \over 2} \, \frac{\del \gbf_{ij}}
 {\del \tau}-{1 \over 2}\gbf^{kl}\frac{\del \gbf_{li}}{\del \tau}u_ku_j
 - {1 \over 2}\gbf^{kl}\frac{\del \gbf_{lj}}{\del \tau}u_ku_i \right)
\\
& = {1 \over \sqrt{1 + |\nabla u|^2}} \, \Big(\nabla_i\nabla_j u+ A_{ij} \Big),
\endaligned
\ee
where
$$
\aligned
& \Dbf_{i}\Dbf_{j} u=\frac{\del^2 u}{\del y^i\del y^j}-\Gammabf^k_{ij}(u,y)\frac{\del u}{\del y^k} 
\endaligned
$$
and $\Gammabf^k_{ij}$ are the Christoffel symbols of $\gbf$.
The tensor field $\Dbf_i \Dbf_j u$ is the spacetime Hessian of the function $u$
(restricted to the hypersurface $\tau=u$), while
$\nabla_i\nabla_j u$ is the fully spatial Hessian defined from the intrinsic metric $g_{ij}$.

The mean curvature of a slice is the trace of $h_{ij}$, that is, in intrinsic form
$$
\Mcal u := h_{ij} g^{ij}
= {1 \over \sqrt{1 +|\nabla u|^2}}
\left(
\Delta u + {A_j}^j\right),
$$
where $\Delta$ is the Laplace operator in the hypersurface, or equivalently in local coordinates
$$ 
\aligned
\Mcal u =
& {1 \over \sqrt{\gbf(u, \cdot)}} {\del \over \del y^i}
\left(\sqrt{\gbf(u, \cdot)} \,  \nu(\Dbf u) \, \gbf^{ij}(u, \cdot) {\del u \over \del y^j} \right)
\\
& + \Big( \nu(\Dbf u)^{-1}  \gbf^{ij}(u, \cdot) + \nu(\Dbf u) \,
\gbf^{ik}(u, \cdot)\gbf^{jl}(u, \cdot) u_k u_l \Big)
{1 \over 2} {\del \gbf_{ij} \over \del \tau}(u, \cdot),
\endaligned
$$
where we have introduced the nonlinear function
$$
\nu(\Dbf u) : = \frac{1}{\sqrt{1-| \Dbf u|^2}} = \sqrt{1 + |\nabla u|^2} = \nu(\nabla u).
$$
Note that, in fact, $\nu$ depends also on $u$.

%----------------------------------------------------------------------------------------

\subsubsection*{Local formulation of the prescribed mean curvature problem}

We are now ready to introduce a formulation of the problem of interest, in terms of the reference Riemannian metric $\gt$.
Recall that $\gamma$ is a fixed, future-oriented, timelike curve passing through $\pbf$.
Assuming for definiteness that $2c +4 c^2< \cbar$,
from now on we fix some $s \in [c,2c]$ and we introduce the point $\pbf_s := \gamma((s+s^2)r)$
which lies in the future of the base point $\pbf$ since $\gamma$ is a timelike future-oriented curve passing through $\pbf$
for the parameter value $cr$. We then introduce the subset $\Omega_s \subset \{\tau=sr\}$ whose
{\sl boundary is defined} by the condition
$$
\del \Omega_s := \Acal\big( \pbf_s, (s^2+s^3)r \big) \cap \{\tau=sr\}
$$
and which fills up its interior. 
This choice is essential for the mean curvature equation (discussed below)
to admit the Riemannian slices as barrier functions.
Observe that
\be
\label{oumiga}
B_{sr}\big( \gamma(sr), s^{5/2} r/2 \big) \subset \Omega_s \subset B_{sr}\big( \gamma(sr), 2 s^{5/2} r \big).
\ee
Here, $B_{sr}(\gamma(sr), a)$ is the geodesic ball of radius $a$
which lies {\sl in the slice} $\tau=sr$ and is determined by the metric $\gbf_{ij}$ induced on the geodesic leaf.

Finally, given $\alpha \in (0,1)$ and a bounded function $H$ of class $C^\alpha$ defined on $\Omega_s$ and satisfying
the  restriction
$$
n\kbar(sr,r) \leq H \leq n\bark(2s^2r,r),
$$
we seek for a spacelike hypersurface with mean curvature $H$ and boundary
$\del \Omega_s$. Analytically, this is equivalent to solving the Dirichlet problem
\be
\label{dirichlet}
\aligned
\Mcal u & = H \qquad \text{ in } \Omega_s,
\\
u & = sr \qquad  \text{ in } \del \Omega_s,
\endaligned
\ee
in which, therefore, we have prescribed both the boundary of the unknown hypersurface and its mean curvature.
In the present paper, we are mainly interested in the case that $H$ is a constant function.
We also assume that $s$ is sufficiently small.

%--------------------------------------------------------------------------------------------------------------------

\subsection{Statements of the uniform estimates}

To establish the existence of CMC hypersurfaces as graphs over a given geodesic leaf $\Hcal_\tau$,
the main difficulty is to bound
$|\Dbf u|$ away from $1$,
for all functions $u$ satisfying and $u=\tau$ on the boundary and
having their mean curvature pinched in some interval.
Precisely, the rest of this section is devoted to the proof of the following result.

Recall that $(\Mbf, \gbf,\pbf,\Tbf_\pbf)$ denotes a pointed Lorentzian
manifold satisfying the curvature and injectivity radius assumptions \eqref{bd3} at some scale $r>0$ and that
$\gamma$ is a future-oriented timelike geodesic satisfying $\gamma(cr)=\pbf$.

\begin{proposition}[Uniform estimates for CMC hypersurfaces]
\label{folia}
There exist constants $c,\theta>0$ depending on the dimension $n$ only such that the following property holds
with the notation introduced in this section.
For any $s \in [c,2c]$ and $t\in [ n\kbar(sr,r), n\bark(2s^2r,r)]$ there
exists a solution $u$ to the Dirichlet problem \eqref{dirichlet} associated
with the (constant) mean curvature function $H \equiv t$ and such that
$$
\sup_{\Omega_s} |\Dbf u| \leq 1 - \theta, \qquad \sup_{\Omega_s'} r |h| \leq \theta^{-1},
$$
where $\Omega_s'=B_{sr}\big( \gamma(sr),s^{5/2} r/4 \big) \subset \{\tau=sr\}$. 
\end{proposition}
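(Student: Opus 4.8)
The plan is to solve the Dirichlet problem \eqref{dirichlet} with $H\equiv t$ by the continuity method, the crux being a priori bounds that are uniform in $t$ over the prescribed interval: an interior gradient bound keeping $|\Dbf u|$ away from $1$, and a bound on the second fundamental form. First I would set up the standard continuity/degree-theory scheme: connect the linear Dirichlet datum to the nonlinear operator $\Mcal$ through a one-parameter family $\Mcal_\sigma u = \sigma t + (1-\sigma)(\text{something explicit})$, or deform through the coefficients, and note that openness of the solvable set follows from the implicit function theorem applied to the linearized mean-curvature operator (which is uniformly elliptic as soon as $|\Dbf u|$ is controlled), while closedness is exactly the a priori estimate problem. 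Since the coefficients $\gbf_{ij}(u,y)$ and $\del\gbf_{ij}/\del\tau$ have only the limited regularity recorded in \eqref{normal}, the linear theory must be run in Sobolev, not Hölder, spaces and one invokes Nash–Moser / De Giorgi iteration rather than Schauder estimates — as flagged at the end of Section~\ref{BB-01}.

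The heart of the matter is the interior gradient estimate $\sup_{\Omega_s}|\Dbf u|\le 1-\theta$. Here I would use the barrier construction that the whole geometric setup in Section~\ref{BB-0} was arranged to provide. By the inclusions \eqref{oumiga} and the definition of $\del\Omega_s$ via $\Acal(\pbf_s,(s^2+s^3)r)$, the $\gt$-geodesic sphere graphs sitting over $\Hcal_\tau$ furnish upper and lower barriers: by \eqref{bard} their mean curvature is pinched between $n\bark(a,r)$ and $n\kbar(a,r)$, hence (for $a$ chosen appropriately relative to $s$, using $2c+4c^2<\cbar$ and $s\in[c,2c]$) lies on the correct side of the constant $t\in[n\kbar(sr,r),n\bark(2s^2r,r)]$ to serve as a barrier by the comparison principle for $\Mcal$. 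Because these Riemannian spheres are uniformly spacelike with a definite angle to $\del/\del\tau$ — quantitatively, they stay inside $\Ccal^+(\qbf,\cbar r)$ with the cone condition in the definition of $C^+$ — the barrier pinches the graph of $u$ between two hypersurfaces whose gradients are bounded away from $1$ by a constant depending only on $n$; this controls $|\Dbf u|$ on $\partial\Omega_s$ and, propagated into the interior by the maximum principle applied to a suitable function of $|\nabla u|$ (a Korevaar-type or Bartnik–Simon-type test function built from $\nu(\nabla u)$ together with a term controlling the distance to $\partial\Omega_s$, using the curvature bound \eqref{curv}), yields the global bound $1-\theta$ on $\Omega_s$. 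I expect this interior gradient estimate — specifically, showing that the Riemannian-sphere barriers really do enclose the solution despite the coefficients having only $C^\alpha$ (indeed, only $W^{1,a}$-type) regularity, and running the gradient maximum principle in that low-regularity setting — to be the main obstacle.

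Granting the gradient bound, the remaining steps are comparatively routine. On $\Omega_s$ the equation becomes uniformly elliptic with bounded (but only $C^\alpha$) coefficients, so Nash–Moser/De Giorgi iteration gives an interior $C^{1,\beta}$ estimate, and then, because $H\equiv t$ is constant, differentiating the equation and iterating once more (or invoking the Schauder-type estimate valid for the uniformly elliptic operator with the regularity \eqref{normal} allows) controls $\nabla^2 u$ on the smaller ball $\Omega_s'=B_{sr}(\gamma(sr),s^{5/2}r/4)$; feeding this into the formula \eqref{2form} for $h_{ij}$ together with the already-established bound $\nu(\nabla u)=\sqrt{1+|\nabla u|^2}\le\theta^{-1/2}$ and the bound $|A_{ij}|=\tfrac12|\del\gbf_{ij}/\del\tau|\le Cr^{-1}$ from \eqref{normal} yields $\sup_{\Omega_s'}r|h|\le\theta^{-1}$ after relabeling $\theta$. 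Closedness of the continuity method follows, so a solution exists for every $t$ in the stated interval, completing the proof of Proposition~\ref{folia}.
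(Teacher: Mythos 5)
Your proposal correctly identifies the geometric framework the paper builds (barriers from Riemannian geodesic spheres to control the boundary gradient, a continuity scheme with the implicit function theorem for openness, and closedness reducing to a priori estimates), and you rightly flag the interior gradient estimate as the crux. But the method you propose for that crux --- a pointwise maximum principle applied to a Korevaar or Bartnik--Simon-type test function built from $\nu(\nabla u)$ --- is exactly what the paper is forced to \emph{avoid}. Running such a pointwise argument requires differentiating the mean-curvature equation, hence taking spatial derivatives of the coefficient $A_{ij}=\tfrac12\,\del\gbf_{ij}/\del\tau$; but \eqref{normal} controls only the $\tau$-derivatives of $\gbf_{ij}$ pointwise, not the spatial ones, so those terms are not bounded under the hypotheses. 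Likewise, your route to the second-fundamental-form estimate (bootstrap $\nabla^2 u$ from the equation and plug into \eqref{2form}) again needs to differentiate low-regularity coefficients. The paper instead derives the global gradient bound (Lemma~\ref{global}) by an \emph{integral} Nash--Moser iteration on the Weitzenb\"ock identity \eqref{boch} for $|\nabla u|^2$, with the Ricci term controlled via the Gauss equation using only the spacetime curvature bound, and with the $L^{p_0}$ base estimate obtained \`a la Gerhardt by testing against $e^{\lambda u}$; and it derives the bound on $h$ (Lemma~\ref{sec}) by Nash--Moser iteration on Simons' identity \eqref{simon1}, handling the $\nabla\Rm$ terms by integration by parts so that the derivative lands on the cutoff rather than on the curvature. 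These moves --- trading pointwise for integral estimates precisely so that integration by parts can absorb the derivatives one cannot afford --- are the essential mechanism of the proof and are missing from your argument; without them the a priori estimates do not close under the stated curvature-only hypotheses.

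A second, smaller omission: your proof stops at the a priori estimates, but Proposition~\ref{folia} (and the subsequent foliation claim in Theorem~\ref{foli}) also requires showing that the family $u^t$ is monotone in $t$ so that the graphs actually foliate a neighborhood, and that one can tune the parameter $s$ so the foliation passes through the given observer $\pbf$. The paper handles the first point by differentiating the problem in $t$ and applying the maximum principle to \eqref{ub} (this needs the ellipticity \eqref{NNN}, which uses that $s$ is small so $|h|^2$ dominates $\Ricbf(\Nbf,\Nbf)$), and the second by a continuity argument in $s$ (Step~2 of the paper's proof). Neither step appears in your proposal.
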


Observe that the bound on the second fundamental form holds only in a subset of $\Omega_s$,
whose diameter, however, is also of the order $r$.
Theorem~\ref{foli} is immediate once we establish Proposition~\ref{folia}.
The proof of Proposition~\ref{folia} will follow from several preliminary results.
The first lemma below is a direct consequence of the maximum principle for elliptic operators. The other
lemmas will be established in Subsection~\ref{proofs}.

\begin{lemma}[Comparison principle]
\label{compa}
Given two functions $u,w$ satisfying $\Mcal u\geq \Mcal w$ in their domain of definition
and $u\leq w$ along the boundary, one has either
$u<w$ in the interior of their domain of definition or else $u\equiv w$. In particular,
if $\Mcal u \geq n \, \kbar(\taubar, r)$ everywhere
and $u\leq \taubar$ along the boundary, then $u \leq \taubar$.
Similarly, if $\Mcal u\leq n\, \bark(\bartau, r)$ everywhere and $u\geq \bartau$ along the boundary, then $u\geq \bartau$.
\end{lemma}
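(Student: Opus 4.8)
The plan is to reduce the statement to the classical maximum principle for one \emph{linear} uniformly elliptic operator. On the set of spacelike graphs $\{|\Dbf u|^2<1\}$ the operator $\Mcal$ is quasilinear and, written out in the coordinates $\ybf$, has the form
$$
\Mcal u = a^{ij}(y,u,\Dbf u)\,\del_i\del_j u + b(y,u,\Dbf u),
\qquad
a^{ij} = \nu(\Dbf u)^{-1}\Big(\gbf^{ij}(u,\cdot)+\frac{\gbf^{ik}\gbf^{jl}u_ku_l}{1-|\Dbf u|^2}\Big),
$$
where $a^{ij}$ is positive definite with ellipticity constants controlled by a lower bound on $1-|\Dbf u|^2$ together with \eqref{normal}, while $b$ collects the remaining first- and zeroth-order (in $u$) terms, all bounded by \eqref{normal}. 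Given $u,w$ as in the statement, defined on a common domain $\Omega$, both spacelike and (we may assume) of class $C^2(\Omega)\cap C^0(\overline\Omega)$, set $v:=u-w$. Subtracting $\Mcal w$ from $\Mcal u$, adding and removing $a^{ij}(y,u,\Dbf u)\,\del_i\del_j w$, and then writing $a^{ij}(y,u,\Dbf u)-a^{ij}(y,w,\Dbf w)$ and $b(y,u,\Dbf u)-b(y,w,\Dbf w)$ as integrals of their gradients along the segment joining $(w,\Dbf w)$ to $(u,\Dbf u)$, one obtains
$$
0\le \Mcal u-\Mcal w = \Lcal v := a^{ij}(y)\,\del_i\del_j v + b^i(y)\,\del_i v + c(y)\,v,
$$
with $a^{ij}$ uniformly elliptic and $b^i,c\in L^\infty(\Omega)$; here $c$ comes only from the $u$-dependence of the coefficients and is bounded by $C r^{-2}$ by \eqref{normal}--\eqref{curv}.

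Next I would invoke the maximum principle for $\Lcal$. Since $\Omega\subset\Omega_s$ has diameter of order $s^{5/2}r$ with $s$ small (cf. \eqref{oumiga}), one has $\|c\|_{L^\infty(\Omega)}\,(\diam\Omega)^2\le C s^{5}\ll 1$, so the weak maximum principle holds for $\Lcal$ even though $c$ need not have a sign (construct the standard supersolution $e^{\Lambda y^1}$, or cite the Alexandrov--Bakelman--Pucci estimate): from $\Lcal v\ge0$ in $\Omega$ and $v\le0$ on $\del\Omega$ we get $v\le0$ throughout $\Omega$. Knowing $v\le0$, we may replace $c$ by $\min(c,0)$, which only strengthens the inequality $\Lcal v\ge 0$ since $v\le 0$; Hopf's strong maximum principle then applies with a nonpositive zeroth-order coefficient and gives the dichotomy: either $v<0$ everywhere in $\Omega$, or $v$ attains the interior maximum $0$ and is then constant, i.e.\ $v\equiv0$. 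This is the asserted alternative $u<w$ or $u\equiv w$.

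For the two ``in particular'' statements, apply the comparison just proved with $w$ the constant function $w\equiv\taubar$ (resp.\ $w\equiv\bartau$), that is, the spacelike graph describing the geodesic sphere $\Hcal_{\taubar}$ (resp.\ $\Hcal_{\bartau}$) over $\Omega$. For a constant function $\Dbf w=0$ and $\nabla w=0$, so $\Mcal w={A_j}^j|_{\tau=\taubar}$, which by the Hessian comparison \eqref{Hess} (equivalently \eqref{bard}) satisfies $n\,\bark(\taubar,r)\le\Mcal w\le n\,\kbar(\taubar,r)$. Hence $\Mcal u\ge n\,\kbar(\taubar,r)\ge\Mcal w$ with $u\le\taubar=w$ on $\del\Omega$, so $u\le\taubar$ in $\Omega$; symmetrically $\Mcal w\ge n\,\bark(\bartau,r)\ge\Mcal u$ with $\bartau=w\le u$ on $\del\Omega$ yields $\bartau\le u$ in $\Omega$. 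The only delicate point is the absence of a sign for the zeroth-order coefficient $c$, which is handled as above by exploiting that $\Omega_s$ is geometrically small relative to the scale $r^2$ at which $c$ is controlled; the limited regularity of $\gbf_{ij}$ plays no role here, since the maximum principle only requires $a^{ij}$ uniformly elliptic and $b^i,c$ bounded measurable while $u,w$ themselves are classical solutions.
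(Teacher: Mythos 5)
Your argument is correct and fills in, with considerably more detail, what the paper simply asserts as ``a direct consequence of the maximum principle for elliptic operators'': the paper gives no proof of Lemma~\ref{compa} beyond that one-line remark, so there is no paper proof to compare against line by line. Your reduction to a linear uniformly elliptic operator $\Lcal v=a^{ij}\del_i\del_j v+b^i\del_i v+c\,v$ via the segment/integral representation of $a^{ij}(y,u,\Dbf u)-a^{ij}(y,w,\Dbf w)$ and $b(y,u,\Dbf u)-b(y,w,\Dbf w)$ is the standard route, the smallness trick $\|c\|_\infty(\diam\Omega)^2\ll1$ is a legitimate way to neutralize a possibly wrong-signed zeroth-order coefficient, and the subsequent replacement of $c$ by $\min(c,0)$ once $v\le 0$ is known, followed by Hopf, correctly yields the dichotomy ($v$ attains the value $0$ at an interior point iff $v\equiv 0$).

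Two small imprecisions worth flagging. First, $c$ does not come ``only from the $u$-dependence of the coefficients'' in the sense of being bounded by $Cr^{-2}$ from \eqref{normal}--\eqref{curv} alone: the piece of $c$ arising from $\bigl[a^{ij}(y,u,\Dbf u)-a^{ij}(y,w,\Dbf w)\bigr]\del_i\del_j w$ is proportional to $\del^2 w$, so the bound on $c$ implicitly requires a bound on $\del^2 w$. In the two ``in particular'' statements this is harmless ($w$ is constant, so $\del^2 w=0$), and in the way the lemma is actually used $w$ is always a barrier with controlled second derivatives, but the general statement as you phrase it needs this hypothesis made explicit. Second, you should record that uniform ellipticity of $a^{ij}$ requires $u,w$ to be uniformly spacelike; near the boundary this is not automatic, and the usual remedy (apply the maximum principle on exhausting compact subdomains and pass to the limit, using $v\in C^0(\overline\Omega)$) should be mentioned.

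It is also worth noting an alternative, more geometric route the paper implicitly has in mind and which avoids the smallness-of-domain device entirely for the ``in particular'' cases: at an interior maximum $y_0$ of $u$ with value $m>\taubar$, one has $\Dbf u(y_0)=0$ and $\Dbf^2 u(y_0)\le 0$, whence $\Mcal u(y_0)\le A_j^j(m,y_0)\le n\kbar(m,r)<n\kbar(\taubar,r)$ by \eqref{Hess} and the monotonicity of $\kbar$ in $\tau$, contradicting the hypothesis; and symmetrically for the lower barrier. This pointwise touching argument is exactly the one the paper redeploys in the proof of Lemma~\ref{boundary}. Your linearized version is more general but also heavier; either is acceptable.
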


\begin{lemma}[Boundary gradient estimate]
\label{boundary}
For any solution $u$ of \eqref{dirichlet} with mean curvature function satisfying
$n \, \kbar(sr,r) \leq \Mcal u \leq n \, \bark(2s^2r,r)$ one has
\be
\label{48}
| \Dbf u| 
< {1 \over 2} \quad \text{ on the boundary } \del \Omega_s.
\ee
\end{lemma}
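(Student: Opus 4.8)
The plan is to prove the boundary gradient estimate \eqref{48} by constructing explicit upper and lower barriers for $u$ at each boundary point, built from the Riemannian geodesic spheres $\St(\cdot,\cdot)$ introduced above, and then comparing normal derivatives. The key structural fact we exploit is the chain of inclusions \eqref{oumiga}–\eqref{oumiga}: the domain $\Omega_s$ is squeezed between two Lorentzian-slice balls, and its boundary $\del\Omega_s$ is, by construction, a piece of the Riemannian geodesic sphere $\Acal\big(\pbf_s,(s^2+s^3)r\big)\cap\{\tau=sr\}$. Because the graph of a Riemannian geodesic sphere $\St(\pbf',a)$ has mean curvature (with respect to the ambient Lorentzian metric) trapped in $[n\,\bark(a,r),\,n\,\kbar(a,r)]$ by \eqref{bard}, such graphs are genuine sub/supersolutions of the equation $\Mcal w = t$ whenever the radius $a$ is chosen in the right range relative to $t$.

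Concretely, first I would fix a boundary point $\xbf_0\in\del\Omega_s$. For the upper barrier, I would take $w^+$ to be (the graph of) the function whose zero level set is an outer Riemannian sphere $\St(\pbf_s,(s^2+s^3)r)$ — the very sphere defining $\del\Omega_s$ — so that $w^+=sr=u$ on all of $\del\Omega_s$ and, by \eqref{bard}, $\Mcal w^+ \le n\,\kbar((s^2+s^3)r,r)$. Since the prescribed mean curvature satisfies $\Mcal u \ge n\,\kbar(sr,r) \ge n\,\kbar((s^2+s^3)r,r)$ (using monotonicity of $\kbar(\cdot,r)$ and $sr < (s^2+s^3)r$ for $s<1$ — here one uses that $s$ is small), the comparison principle Lemma~\ref{compa} gives $u\le w^+$ in $\Omega_s$, with equality on $\del\Omega_s$. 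For the lower barrier, I would center a slightly smaller Riemannian sphere at a point pushed a bit further into the future along $\gamma$, chosen so that its graph $w^-$ lies below $u$ on the boundary while still being tangent to $\del\Omega_s$ (or comes from the same defining sphere touched from inside); then \eqref{bard} gives $\Mcal w^- \ge n\,\bark(\cdot,r) \ge t = \Mcal u$ for the radius range $[\barc r,\cbar r]$, and Lemma~\ref{compa} yields $u\ge w^-$ with equality at $\xbf_0$. Sandwiching $w^-\le u\le w^+$ with all three equal at $\xbf_0$ forces $|\Dbf u|(\xbf_0)$ to lie between $|\Dbf w^-|(\xbf_0)$ and $|\Dbf w^+|(\xbf_0)$.

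It then remains to bound the spacetime gradients $|\Dbf w^\pm|$ of the barrier graphs on $\del\Omega_s$ by $1/2$, using only the curvature bound and the uniform normal-coordinate estimates \eqref{normal}, \eqref{curv}, \eqref{Hess}, \eqref{hess}. This is a direct computation: the Riemannian sphere $\St(\pbf',a)$ has graph function $u(y)$ with $\nablat$-gradient of unit length, and the splitting $\Dbf^2\dt = \nablat^2\dt - 2(\del\dt/\del\tau)A$ together with $|\del\dt/\del\tau|\le 1$ and \eqref{normal} lets one control $|\del u/\del y^j|$, hence $|\Dbf u|^2 = \gbf^{ij}u_iu_j$. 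The point is that the relevant radii are of size $\sim s^{5/2} r$ while the curvature scale is $r$, so the spheres are very nearly flat at the scale where we probe them near $\del\Omega_s$; quantitatively $|\Dbf w^\pm| = O(s^{1/2})$ or better, which is $<1/2$ once $s$ (equivalently $c$) is chosen small enough — and this is exactly where the standing hypothesis "$s$ is sufficiently small" enters, fixing the dimensional constants $c,\theta$.

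The main obstacle I anticipate is the bookkeeping for the lower barrier: one must choose the auxiliary center point and radius so that simultaneously (a) the graph stays below $u$ on $\del\Omega_s$, (b) it remains future-oriented and spacelike throughout $\Omega_s$ so that $\Mcal$ is even defined on it and Lemma~\ref{compa} applies, and (c) its radius lands in $[\barc r,\cbar r]$ so that \eqref{bard} is available with the correct sign. Getting all three compatible is a geometric constraint that again forces $s$ small, and verifying it carefully — rather than the gradient computation itself — is the delicate part. A secondary technical point is that $\del\Omega_s$ is only the intersection of a Riemannian sphere with the Lorentzian slice $\{\tau=sr\}$, so near its "edge" one should check the barriers are compared on the correct portion of the boundary; this is handled by noting $\del\Omega_s$ is a smooth hypersurface in $\{\tau = sr\}$ inheriting uniform $C^{1,\alpha}$ bounds from \eqref{normal}–\eqref{hess}, so the barrier graphs meet it transversally with controlled angle.
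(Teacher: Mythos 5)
Your overall strategy — squeeze $u$ between barriers built from the constant slice and the Riemannian geodesic sphere, then bound the boundary gradient by the barrier's gradient — is the paper's, but the direction of your comparison is backwards, and the small arithmetic error that causes it propagates through the whole argument.

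You assert ``$sr < (s^2+s^3)r$ for $s<1$ small,'' but for $s$ small one has $(s^2+s^3)r < sr$ (this is $s^2+s^3<s$, i.e.\ $s+s^2<1$). Since $\kbar(\cdot,r)$ is \emph{decreasing}, the correct inequality is $\kbar((s^2+s^3)r,r) > \kbar(sr,r)$, so your chain $\Mcal u \geq n\,\kbar(sr,r) \geq n\,\kbar((s^2+s^3)r,r) \geq \Mcal w^+$ fails at the middle step. More fundamentally, the spacelike (lower) portion of $\St(\pbf_s,(s^2+s^3)r)$ — the graph $\baru$ with $\baru = sr$ on $\del\Omega_s$ and $\baru < sr$ inside — is the \emph{lower} barrier, not the upper one: since $\bark$ is also decreasing and $(s^2+s^3)r < 2s^2 r$, one has $\Mcal\baru \geq n\,\bark((s^2+s^3)r,r) > n\,\bark(2s^2r,r) \geq \Mcal u$, so Lemma~\ref{compa} gives $u \geq \baru$. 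The correct \emph{upper} barrier is simply the constant slice $\tau \equiv sr$, whose mean curvature is $\leq n\,\kbar(sr,r) \leq \Mcal u$ by the Hessian comparison~\eqref{Hess}; the paper phrases this as an interior-maximum contradiction, but it is the same comparison. This also removes the ``delicate bookkeeping'' you worry about for the lower barrier: there is no need to push an auxiliary sphere further into the future, and indeed choosing the center, radius, spacelikeness, and ordering for such a sphere would be more trouble than the problem requires.

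Once the pinching $\baru \leq u \leq sr$ is in hand (with equality on $\del\Omega_s$), the boundary gradient of $u$ is controlled by that of $\baru$, and the estimate $|\Dbf\baru| < s^{1/2}$ (which is exactly your last-paragraph computation, and is where ``$s$ small'' is used) gives~\eqref{48}. So the gradient computation you sketch at the end is sound and matches the paper; what needs fixing is the orientation of the comparison and the false monotonicity step.
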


\begin{lemma}[Global gradient estimate]
\label{global}
Under the assumptions of Lemma~\ref{boundary} one has
$$
\sup_{\Omega_s} |\nu( \Dbf u)| \leq C_1(n),
$$
where the constant $C_1(n)$ depends on the dimension, only.
\end{lemma}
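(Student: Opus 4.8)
The plan is to establish the global gradient estimate of Lemma~\ref{global} by the classical maximum-principle technique for prescribed mean curvature equations, adapted to the Lorentzian setting and to the low-regularity framework of this paper. The quantity to control is $\nu(\Dbf u) = \sqrt{1+|\nabla u|^2} = -\gbf(\Nbf,\Tbf_\gamma)$, which measures how far the spacelike hypersurface $\Sigma$ tilts away from the slice $\{\tau = \mathrm{const}\}$. First I would fix notation: since we already have the boundary gradient bound $|\Dbf u| < 1/2$ on $\del\Omega_s$ from Lemma~\ref{boundary}, it suffices to prove that any interior maximum of $\nu$ is controlled by a dimensional constant. The idea is to compute the equation satisfied by $\nu$ (or, more conveniently, by $\log\nu$, or by a suitably chosen test function $\varphi(\nu)$) using the Gauss and Codazzi equations and the fact that $\Mcal u \equiv t$ is constant. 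Differentiating the normal vector along the hypersurface produces the second fundamental form $h$, and differentiating a second time and tracing with $g^{ij}$ produces a term $|h|^2\nu$ with a favorable sign, a term involving the ambient Riemann curvature $\Rm$, a term involving the gradient of the mean curvature (which vanishes here since $H\equiv t$), and terms involving $A_{ij} = \frac12 \del\gbf_{ij}/\del\tau$ and its derivatives — i.e. at most {\sl one} derivative of the metric and the curvature itself.

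The key computation is the Simons-type identity for the Laplacian of $\nu$ on $\Sigma$. Concretely, writing $v := -\gbf(\Nbf, E)$ where $E$ is a suitable timelike vector field (here $\Tbf_\gamma$, or the coordinate field $\del/\del\tau$), one obtains a Bochner-type formula of the schematic form
\be
\label{simonsnu}
\Delta_g v = v\,|h|^2 + (\text{terms linear in }\nabla H) + v \cdot \big(\Rm(\Nbf,\cdot)\cdot\big) + (\text{first-derivative-of-metric terms}),
\ee
where $\Delta_g$ is the intrinsic Laplacian. Since $H \equiv t$ the gradient terms drop out. The point is that every term on the right other than $v\,|h|^2$ is bounded, in absolute value, by $C(n)\, r^{-1}\, v\,(1+|\nabla u|)$ or $C(n)\, r^{-2}\, v$, using the normal-coordinate estimates~\eqref{normal} on $\gbf_{ij}$, $\del\gbf_{ij}/\del\tau$ and $\nabla_{\del/\del\tau}\del\gbf_{ij}/\del\tau$, together with the curvature bound~\eqref{curv}. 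At a point where $v$ attains an interior maximum one has $\Delta_g v \le 0$ and $\nabla v = 0$; feeding this into~\eqref{simonsnu} and discarding the nonnegative term $v\,|h|^2$ yields $0 \le -c\, r^{-2} v^2 + C\, r^{-1} v^2 + \ldots$, which is only useful if the ``good'' sign can be extracted. The standard device is to run the argument instead on the product $\eta\,v$ where $\eta = e^{-\mu\tau}$ or $\eta = \cosh(\mu(\tau - \tau_0))$ for a large dimensional constant $\mu$; the convexity of $\tau$ recorded in the Hessian estimate~\eqref{Hess}--\eqref{hess} then contributes a term $\mu^2 \eta v \cdot |\nabla\tau|^2_g \ge c\,\mu^2\,r^{-2}\,\eta v$ at the maximum, which dominates the bad terms once $\mu$ is fixed large, forcing $v$ (hence $\nu$) to be bounded where the maximum is interior, and otherwise the boundary bound of Lemma~\ref{boundary} applies.

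There are two points that require care. The first, and the main obstacle, is that the classical derivation of~\eqref{simonsnu} assumes the ambient metric is smooth enough to commute several covariant derivatives, whereas here $\gbf$ has only limited regularity: the normal-coordinate bounds~\eqref{normal} control $\del\gbf_{ij}/\del\tau$ and one further $\tau$-derivative, but not full second spatial derivatives of $\gbf_{ij}$ in a pointwise sense. The way around this is that the identity~\eqref{simonsnu} should be derived and used in {\sl weak (integral) form} — multiplying by a test function, integrating by parts on $\Sigma$, and invoking the De~Giorgi--Nash--Moser iteration (which the paper explicitly signals it will use) rather than the pointwise maximum principle; all spatial second derivatives of $\gbf_{ij}$ then appear only under an integral sign where they can be integrated by parts onto the test function, and what remains are exactly the quantities controlled by~\eqref{normal} and~\eqref{curv}. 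The second point is orientation/sign bookkeeping: one must verify that $-\gbf(\Nbf,\Tbf_\gamma) \ge 1$ always (Cauchy--Schwarz in the Lorentzian sense, using that both vectors are future-timelike and $\Nbf$ is unit) so that $v$ is genuinely comparable to $\nu$, and that the curvature term enters~\eqref{simonsnu} with the indicated structure so that $|\Rm|_\gt \le C r^{-2}$ is the only input needed. Once these are in place, the constant $C_1(n)$ emerges purely from $n$, the absolute constants in~\eqref{normal}--\eqref{curv}, and the choice of $\mu$.
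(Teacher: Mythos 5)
The paper's own proof applies the Weitzenb\"ock (Bochner) identity to the level function $u$ itself, introduces $v=(1+|\nabla u|^2-k)_+$, and runs a Moser iteration (Step 1) to show $\sup_{\Omega_s}\nu\leq C\|v\|_{L^{p_0}}$ for some $p_0>2n$; it then, in a logically separate Step 2, derives the needed $L^{p_0}$ bound by Gerhardt's $e^{\lambda u}$ multiplier trick. Your proposal instead writes a Jacobi/Simons-type equation for the tilt function $v=-\gbf(\Nbf,\Tbf_\gamma)$. That is a genuinely different identity (the Bartnik--Simon/Gerhardt route based on $\Delta\nu$ rather than $\Delta|\nabla u|^2$), and it would also work in a sufficiently smooth setting; the paper's choice of working with $\Delta|\nabla u|^2$ has the practical advantage that only $\nabla\Delta u=\nu H-A_j{}^j$ appears, which ties everything back to the prescribed-curvature equation and to the normal-coordinate bounds~\eqref{normal} in a very direct way.

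However, there is a concrete gap in your proposal at the point where you switch from the pointwise maximum principle to a weak/Moser formulation. The exponential multiplier $\eta=e^{-\mu\tau}$ (or $\cosh$) is designed exactly to make the \emph{pointwise} maximum principle close: at an interior max of $\eta v$, the convexity of $\tau$ supplies a $\mu^2|\nabla\tau|_g^2\,\eta v$ term that dominates the cubic bad terms. You correctly observe that this pointwise argument cannot be run here, because it requires evaluating $\Delta v$ pointwise, which in turn requires control of second spatial derivatives of $\gbf_{ij}$ that the low-regularity hypotheses do not give. But once you retreat to an integral/Moser argument, the multiplier no longer closes the estimate by itself: Moser iteration reduces $\|\nu\|_{L^\infty}$ to $\|\nu\|_{L^{p_0}}$ for some large $p_0$, and you still need an independent bound on the $L^{p_0}$ norm to anchor the iteration. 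Your write-up stops short of producing that anchor. This is precisely the content of Step~2 in the paper's proof, where Gerhardt's $e^{\lambda u}$ device is exploited \emph{in integrated form} to prove $\int_\Sigma|\nabla u|^q\,dv_\Sigma\leq C_q$ for every $q$. So the missing ingredient in your proposal is the passage from ``weak Simons identity'' to an actual $L^{p_0}$ bound on $\nu$; without it the Moser iteration has nothing to start from, and the argument does not close. A related but lesser caveat: you should also spell out, in the low-regularity weak form, that the $\Ric(\Nbf,\Nbf)$ and $\nabla\Tbf$ contributions produce terms of order at most $r^{-2}\nu^3$ (compare the cubic $C(1+|\nabla u|^2)^3$ in the paper's \eqref{key3}), since it is exactly the degree of this nonlinearity that determines which exponent $p_0$ must be reached in Step~2.
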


Now, in view of Lemmas~\ref{boundary} and \ref{global} and by standard arguments \cite{GilbargTrudinger},
one can check that for each $t\in [\kbar(sr,r),\bark(2s^2r,r)]$ the
Dirichlet problem \eqref{dirichlet} admits a smooth solution $u$
determining a slice with constant mean curvature $t$.
 
Note that, by Lemma~\ref{global}, the induced metric on
$\Sigma_t$ is equivalent to the metric $\gbf_{ij}$ on the domain $\Omega$ so that
we can use, for instance, Sobolev inequalities on $\Sigma_t$.

\begin{lemma}[Interior estimates for the second fundamental form]
\label{sec}
Under the assumptions of Lemma~\ref{boundary}, for all $q\in [1,\infty)$ there exist
positive constants $C_2(n)$ and $C_3(n,q)$ such that for every $p' \in \Sigma \setminus \del \Sigma$ 
$$ 
\aligned
& |h(p') | \leq \frac{C_2(n)}{d(p', \del \Sigma)},
\\
& \left(\frac{1}{d(p', \del \Sigma)^n}\int_{B\big(p',d(p', \del \Sigma)/4\big)} |\nabla h|^q \, dv_\Sigma\right)^{1/q}
\leq \frac{C_3(n,q)}{d(p', \del \Sigma)^2},
\endaligned
$$
where $dv_\Sigma$ is the induced volume form on $\Sigma$ and
$d(p',\del \Sigma)$ is the distance to the boundary $\del \Sigma$
associated with the induced metric $g_{ij}$ on $\Sigma$.
\end{lemma}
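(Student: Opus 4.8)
The plan is to establish the pointwise bound on $|h|$ by a rescaling-and-blow-up argument combined with the interior Schauder/elliptic estimates for the prescribed mean curvature equation, and then to upgrade to the integral estimate on $\nabla h$ by the same rescaling together with the equation that $h$ satisfies (a Simons-type identity). First I would fix an interior point $p' \in \Sigma \setminus \del\Sigma$, set $d := d(p',\del\Sigma)$, and work in the geodesic ball $B(p',d)$ of the induced metric. By Lemma~\ref{global} the induced metric $g_{ij}$ is uniformly equivalent to the background metric $\gbf_{ij}$, so distances, volumes and Sobolev constants in $(\Sigma,g)$ are comparable to those computed with $\gbf$; this lets me move freely between the intrinsic picture and the coordinate picture in the normal coordinates $\ybf$. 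Rescale the metric (and coordinates) by $d^{-2}$, i.e. consider $\hat g := d^{-2} g$, so that $B(p',d)$ becomes a ball of unit radius. Under this rescaling $h$ scales like $d\, h$, the mean curvature equation $\Mcal u = t$ becomes a prescribed mean curvature equation with right-hand side $d\, t$, which is bounded by a dimensional constant since $t \in [n\kbar(sr,r), n\bark(2s^2r,r)]$ is of order $1/r$ and $d \lesssim r$ (as $\Omega_s$ has diameter of order $r$); and the rescaled ambient curvature $|\Rm|$, bounded by $C r^{-2}$ via \eqref{curv}, becomes bounded by $C d^2 r^{-2} \le C$.

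The core of the argument is the interior gradient and a~priori $C^{1,\alpha}$ estimate for the rescaled problem. On the unit ball, the rescaled height function $\hat u$ solves a uniformly elliptic, quasilinear equation in divergence form (the operator $\Mcal$ written in local coordinates, as displayed in the excerpt) with coefficients controlled by the $C^1$ bounds on $\gbf_{ij}$ and the sup bounds on $\del\gbf_{ij}/\del\tau$ from \eqref{normal}, and with $|\Dbf\hat u| \le 1-\theta$ by Lemma~\ref{global}, so the ellipticity is uniform. Standard interior estimates for such equations (De~Giorgi--Nash--Moser for the gradient, then Schauder once the coefficients are seen to be Hölder) give $\|\hat u\|_{C^{1,\alpha}(B_{1/2})} \le C(n)$; from the formula \eqref{2form} for $h_{ij}$ in terms of $\nabla_i\nabla_j u + A_{ij}$ and $\nu(\Dbf u)$, together with interior $W^{2,q}$/$C^{1,\alpha}$ bounds on $\hat u$, one reads off $|\hat h(p')| \le C_2(n)$ on the unit ball. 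Undoing the rescaling, $|h(p')| = d^{-1} |\hat h(p')| \le C_2(n)/d(p',\del\Sigma)$, which is the first assertion. For the second assertion I would differentiate the mean curvature equation once: $h$ (or equivalently $\nabla^2 u$) satisfies a linear elliptic equation whose right-hand side involves $\Rm$, $\nabla\Rm$ is \emph{not} available, but — and this is the point — in the \emph{vacuum} or general Lorentzian setting one only has the curvature in $L^\infty$; so rather than a pointwise bound on $\nabla h$ I use the interior $W^{1,q}$ estimate for the linearized equation. On the rescaled unit ball, $\hat u \in C^{1,\alpha}$ with the elliptic equation having $L^\infty$ coefficients depending on $\gbf$ and its first derivatives; elliptic $W^{2,q}$ theory (Calderón--Zygmund) gives $\hat u \in W^{2,q}(B_{1/4})$ with $\|\hat u\|_{W^{2,q}(B_{1/4})} \le C_3(n,q)$, hence $\hat h \in W^{1,q}(B_{1/4})$ with the same bound; scaling back, $\nabla h$ scales like $d^2\,\widehat{\nabla h}$ and the volume like $d^n$, producing exactly $\big(d^{-n}\int_{B(p',d/4)} |\nabla h|^q \, dv_\Sigma\big)^{1/q} \le C_3(n,q)/d^2$.

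The main obstacle I expect is the \textbf{limited regularity of the coefficients}: because $\gbf_{ij}$ is only controlled in $C^1$ (with $\del_\tau\gbf_{ij}$ and $\nabla_{\del_\tau}\del_\tau\gbf_{ij}$ bounded, per \eqref{normal}) and the curvature only in $L^\infty$, one cannot invoke classical Schauder estimates naively — the Christoffel symbols $\Gammabf^k_{ij}$ are merely bounded and measurable in the relevant norm, so the operator $\Dbf_i\Dbf_j u$ has rough lower-order terms. The resolution, as flagged in the text, is to run a Nash--Moser / De~Giorgi iteration to first get the interior gradient bound and Hölder continuity of $\Dbf u$ with constants depending only on $n$ (and the fixed bounds in \eqref{normal}, \eqref{curv}), and only then bootstrap: with $\Dbf u \in C^\alpha$ the principal coefficients of the equation for $\nabla^2 u$ become $C^\alpha$, which is enough for interior $W^{2,q}$ estimates for all finite $q$. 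A secondary technical point is verifying that the rescaled right-hand sides (the prescribed mean curvature $d\,t$ and the curvature terms $d^2\Rm$) really are bounded by dimensional constants uniformly over $p'$ and over $s\in[c,2c]$; this follows because $d(p',\del\Sigma)$ is bounded above by the diameter of $\Sigma$, which by \eqref{oumiga} and Lemma~\ref{global} is of order $r$, so $d\,r^{-1}$ and $d^2 r^{-2}$ are bounded. Everything else — the Simons-type identity for $\nabla h$, the precise scaling exponents, and the passage between $g$ and $\gbf$ norms — is routine given Lemma~\ref{global}.
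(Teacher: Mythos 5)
Your proposal takes a genuinely different route from the paper: you try to get the pointwise bound on $h$ by rescaling and deriving $C^2$-type interior estimates for the height function $u$ from the quasilinear PDE $\Mcal u = t$, whereas the paper works \emph{directly} with the second fundamental form $h$ as the unknown, via Simons' identity \eqref{simon1}. The distinction is not cosmetic — it is where your argument breaks down.

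The pointwise bound $|h(p')| \leq C_2(n)/d$ requires $\nabla^2 u \in L^\infty$, i.e.\ a $C^{1,1}$ or $W^{2,\infty}$ estimate on $u$, since by \eqref{2form} $h_{ij}$ is built from the full Hessian $\nabla_i\nabla_j u$. A $C^{1,\alpha}$ bound on $u$ says nothing about $h$ pointwise, and a $W^{2,q}$ bound ($q<\infty$) only gives $h\in L^q$, not $h\in L^\infty$. To upgrade to $W^{2,\infty}$ you would need Schauder (i.e.\ $C^\alpha$ regularity of \emph{all} coefficients and of the right-hand side). Your proposed fix — ``once $\Dbf u \in C^\alpha$ the coefficients become $C^\alpha$'' — fixes only the part of the coefficients that depends on $u$; it does not make the ambient data $\Gammabf^k_{ij}(\tau,y)$ or $\del_\tau\gbf_{ij}(\tau,y)$ Hölder in $y$. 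The bounds \eqref{normal} control $\del_\tau\gbf_{ij}$ and $\nabla_{\del_\tau}\del_\tau\gbf_{ij}$ in $L^\infty$, and the curvature bound gives $L^\infty$ control on what amounts to $\del^2\gbf$, but none of this gives a uniform $C^\alpha$ modulus for the Christoffel symbols in the $\ybf$-coordinates. With merely bounded lower-order coefficients, Calderón–Zygmund gives $W^{2,q}$ for every finite $q$, never $W^{2,\infty}$, so the pointwise bound on $h$ is genuinely out of reach by this route.

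The paper's Step~1 avoids this cul-de-sac precisely because Simons' identity is an elliptic equation \emph{for $h$ itself} in which the principal part is the intrinsic Laplacian and all the bad terms appear in a form amenable to integration by parts. In particular the $\nabla\Rm$-terms on the right of \eqref{simon1}, which you would have no way to bound pointwise (only $|\Rm|$ is controlled, not $|\nabla\Rm|$), are absorbed by moving the derivative onto the test function $\psi|h|^q$ and onto $h$; this is what makes the Nash–Moser iteration close with constants depending only on the $L^\infty$ bounds. Your ``Simons-type identity'' is mentioned but relegated to the $\nabla h$ estimate, where it is less essential; in the paper it is the engine of the $L^\infty$ estimate. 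For the $\nabla h$ estimate the paper then switches to harmonic coordinates on $\Sigma$ (Jost–Karcher), where the induced metric is genuinely $C^{1,\alpha}$/$W^{2,q}$, and views the right-hand side of \eqref{simon1} as an element of $W^{-1,q}$; Sobolev regularity then yields $\nabla h \in L^q$. Your second assertion could conceivably be rescued along those lines, but the first cannot be obtained from interior elliptic estimates for $u$ in the normal coordinates.

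A secondary, smaller issue: you assert that a $C^{1,\alpha}$ bound on $\hat u$ already ``reads off'' $|\hat h(p')|\leq C_2(n)$; this is false as written, since $h$ is second-order in $u$. Even your own parenthetical appeal to $W^{2,q}$ bounds does not give a pointwise bound at $p'$. The scaling bookkeeping ($d\,t$, $d^2\Rm$ bounded) is fine, but it does not compensate for the missing $W^{2,\infty}$ step.
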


Observe that the upper bound in the above lemma blows-up if the point $p'$ approaches the boundary of the CMC slice,
and that for $p' \in \Omega_s'$ the factor $d(p', \del \Sigma)$ is of order $r$, as required for
Proposition~\ref{folia}.

\begin{lemma}[Time-derivative of the level function]
\label{tidu}
Under the assumptions of Lemma~\ref{boundary} there
exist constants $C_4(n), C_5(n)>0$ such that
$$ 
C_4(n) r^{2} \leq - {\del u \over \del t} \leq C_5(n) r^{2}
 \qquad \text{ on } \Omega_s'.
$$
\end{lemma}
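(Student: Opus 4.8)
The plan is to differentiate the defining relation $\Mcal u^t = t$ with respect to the parameter $t$, which gives a linear elliptic equation for the quantity $\varphi := \partial u / \partial t$, and then to apply the maximum principle together with Harnack-type estimates for this linearized operator. More precisely, writing the mean curvature operator as $\Mcal u = a^{ij}(y, u, \Dbf u)\, \Dbf_i \Dbf_j u + b(y, u, \Dbf u)$ with $a^{ij} = \nu^{-1} g^{ij}$ (up to the reordering implicit in \eqref{2form}), one differentiates in $t$ to obtain a relation of the form $L\varphi = 1$ on $\Omega_s$ with $\varphi = 0$ on $\partial \Omega_s$, where $L$ is the linearization of $\Mcal$ at $u^t$: $L\varphi = a^{ij}\Dbf_i\Dbf_j\varphi + (\text{lower order terms in } \varphi, \Dbf\varphi)$. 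The key structural fact is that $L$ has no zeroth-order term with the wrong sign; in fact the zeroth-order coefficient of $L$ is nonpositive because $\Mcal$ is monotone in $u$ in the appropriate direction (increasing $u$ pushes the graph toward the tip of the Lorentzian cone, where the mean curvature of the geodesic spheres $\Hcal_\tau$ increases like $n/\tau$, so $\partial_u \Mcal < 0$ on the relevant region). Hence $L$ satisfies the hypotheses of the maximum principle, and since $L\varphi = 1 > 0$ with $\varphi|_{\partial\Omega_s} = 0$, we conclude $\varphi < 0$ in the interior, i.e. $-\partial u/\partial t > 0$.

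The next step is to make this quantitative. For the \emph{lower} bound $-\partial u/\partial t \geq C_4(n) r^2$, I would use that $L\varphi = 1$ and construct an explicit supersolution (barrier) of the form $w = -\mu \big( \rho^2 - d(\cdot,\partial\Omega_s)^2 \big)$ or a suitable radial function of the $\gbf_{ij}$-distance to $\partial\Omega_s$, chosen so that $Lw \leq 1$ and $w \geq 0$ on the boundary; since $\Omega_s$ contains a geodesic ball of radius $\sim s^{5/2}r$ and is contained in one of radius $\sim 2 s^{5/2} r$ by \eqref{oumiga}, and since the coefficients $a^{ij}$ are uniformly elliptic and the lower-order coefficients are $O(r^{-1})$ on the region of interest (by \eqref{normal}, \eqref{curv}, and the global gradient bound of Lemma~\ref{global}), one gets a barrier forcing $-\varphi \geq C_4(n) r^2$ on the smaller ball $\Omega_s' = B_{sr}(\gamma(sr), s^{5/2}r/4)$. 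For the \emph{upper} bound $-\partial u/\partial t \leq C_5(n) r^2$ I would instead argue from above: either build a subsolution $\widetilde w \leq 0$ with $L\widetilde w \geq 1$ that vanishes on $\partial\Omega_s$ and satisfies $\widetilde w \geq -C_5 r^2$ on $\Omega_s'$ — here the uniform ellipticity and the bound on $|h|$ from Lemma~\ref{sec} enter to control the first-order coefficients of $L$ — or apply an interior Harnack inequality to the nonnegative solution $-\varphi$ of $L(-\varphi) = -1 \leq 0$, comparing its value at the center $\gamma(sr)$ to its (controlled, nonzero from the lower bound) value at a fixed fraction of the radius. All constants depend only on $n$ because every geometric quantity that appears — the ellipticity ratio of $a^{ij}$, the size of $\Omega_s$ and $\Omega_s'$ relative to $r$, the $C^0$ and $C^1$ bounds on $u$, and the $L^\infty$ and integral bounds on $h$ and $\nabla h$ — has already been pinned down in terms of $n$ alone by Lemmas~\ref{compa}--\ref{sec}.

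The main obstacle I anticipate is the regularity of the linearized equation: the coefficients $a^{ij}$ and the lower-order terms of $L$ involve $\gbf_{ij}$, $\partial_\tau \gbf_{ij}$, the Christoffel symbols $\Gammabf^k_{ij}$, and second derivatives of $u$, and under the hypotheses of this paper these are only known with the limited regularity recorded in \eqref{normal} (essentially Lipschitz metric, bounded $\partial_\tau \gbf_{ij}$, and the Hessian bounds built into Lemma~\ref{sec}). So the differentiation in $t$ and the maximum principle must be justified for a \emph{non-smooth} elliptic operator — this is where one invokes the already-established smoothness of each individual solution $u^t$ (the slices are smooth by the standard theory cited after Lemma~\ref{global}) to differentiate in the parameter, but the $t$-derivative of the \emph{coefficients} is only as regular as $\partial_\tau \gbf_{ij}$ and $\Dbf^2 u$ allow. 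The clean way around this is to avoid differentiating the coefficients altogether: write $\Mcal u^{t} - \Mcal u^{t'} = t - t'$ and expand the difference $u^t - u^{t'}$ using the mean-value form $\Mcal u^t - \Mcal u^{t'} = \overline L_{t,t'}(u^t - u^{t'})$ where $\overline L_{t,t'}$ has coefficients given by integrals of $D\Mcal$ along the segment joining the two solutions — this operator is merely bounded-measurable-elliptic, which is exactly the setting of the Aleksandrov–Bakelman–Pucci estimate and the Krylov–Safonov Harnack inequality, both of which require only measurable coefficients and a two-sided ellipticity bound. Passing $t' \to t$ then yields the two-sided bound on $-\partial u/\partial t$ on $\Omega_s'$ with constants depending only on $n$, completing the proof of Lemma~\ref{tidu}.
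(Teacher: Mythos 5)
Your overall strategy — differentiate $\Mcal u^t = t$ in $t$, observe that the linearization has a favourably-signed zeroth-order coefficient, and combine the maximum principle with explicit barriers — is exactly the paper's. The linearized equation is recorded in the paper as~\eqref{ub}: $\big(\Delta - (|h|^2 + \Ricbf(\Nbf,\Nbf))\big)\big(\nu(\nabla u)\,\tfrac{\del u}{\del t}\big) = 1$ on $\Sigma$, with $\tfrac{\del u}{\del t}=0$ on $\del\Omega_s$; the two-sided bound follows by the maximum principle plus a barrier.

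Your justification of the sign condition, however, is not the right mechanism and, as written, is self-contradictory: increasing $u$ moves the graph \emph{away} from the vertex of the cone (toward larger $\tau$), where $n/\tau$ \emph{decreases} — both steps of your heuristic are reversed, even though the conclusion $\partial_u\Mcal<0$ is correct. More fundamentally, the zeroth-order coefficient of the linearized mean-curvature operator is governed by the geometry of the CMC slice $\Sigma$ itself, namely $-(|h|^2+\Ricbf(\Nbf,\Nbf))\nu$, not by the mean curvature of the background leaves $\Hcal_\tau$. The paper's argument, spelled out in~\eqref{NNN}, is the quantitative inequality $|h|^2 \geq H^2/n \geq s^{-1/2}r^{-2}$ (Cauchy--Schwarz plus the prescribed value $H=t\sim n/(sr)$), which dominates $|\Ricbf(\Nbf,\Nbf)|\leq C(n)r^{-2}$ once $s$ is small. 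Without that bound you do not actually know the operator is in the maximum-principle regime, so this is a genuine gap.

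The barrier construction also differs: the paper works \emph{intrinsically} on $\Sigma$, taking the distance function $\rho$ from the center $\pbf''=(u(\gamma(sr)),\gamma(sr))$, invoking the Laplacian comparison $\Delta\rho\leq C'/\rho$ (valid by the Gauss-equation Ricci lower bound $R_{ij}\geq -C'r^{-2}g_{ij}$), and perturbing $\nu\,\tfrac{\del u}{\del t}$ by a cut-off $\eps\varphi\big(4\rho/(s^{5/2}r)\big)$. This is cleaner than a distance-to-the-boundary barrier, since it avoids any appeal to regularity of $\del\Omega_s$. Finally, the machinery you bring in at the end — mean-value-form linearization with merely measurable coefficients, ABP, Krylov--Safonov Harnack — is heavier than needed: the paper has already established (via the implicit function theorem and Schauder estimates, Step~1 of the proof of Proposition~\ref{folia}) that each $u^t$ is smooth and depends smoothly on $t$, so the intrinsic linearized operator has well-controlled coefficients pointwise by Lemmas~\ref{global} and~\ref{sec}, and the classical maximum principle suffices.
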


%-----------------------------------------------------------------------------------------------------------

\subsection{Derivation of the uniform estimates}
\label{proofs}

\begin{proof}[Proof of Lemma~\ref{boundary}] We use here the maximum principle stated in Lemma~\ref{compa}.
The part of the Riemannian geodesic sphere
$\St\big(\pbf_s,(s^2+s^3)r\big)$ (defined by $\gt$) ``below'' $\Omega_s$,
that is the part corresponding to $\tau\leq sr$,
is the graph $y \mapsto (\underline{u}(y),y)$ of a function $\underline{u}$ over $\Omega_s$ whose boundary values
are $sr$ on $\del\Omega_s$. Since $s$ is sufficiently small, one easily checks that, for instance,
\be
\label{grabaru}
| \Dbf \baru|< s^{1/2} \qquad \text{ on } \Omega_s
\ee
and, in particular, $\baru$ satisfies \eqref{48} along the boundary $\del \Omega_s$.

Suppose now that there exists a $C^2$ spacelike hypersurface $(u(y),y)$
defined over $\Omega_s$ having the same boundary values as the function $\underline{u}$
and such that its mean curvature $H$ remains bounded in the interval
$[n\kbar(sr,r), n\bark((s^2+s^{5/2})r,r)]$. Let us set
$$
\mbar = \sup_{y\in \Omega_s}u(y), \qquad
\barm = \sup_{y\in \Omega_s} \dt((u(y),y),\pbf_s)
$$
and use the following comparison technique.

Note that the range of the function $u$ lies between $(s-2s^3)r$ and $(s+2s^3)r$,
since $u$ is spacelike (so the norm of its gradient can not exceed $1$),
its boundary value is $sr$, and the diameter of the set $\Omega_s$ is $4s^3r$ at most.
We are going to show the pinching property
$$
\baru-sr\leq u-sr\leq 0,
$$
which immediately implies the desired boundary gradient estimate \eqref{48}.

First of all, we claim that $\mbar=sr$. If this were not true, then the maximum of $u$ would be achieved at some
point $y_0$ in the interior of $\Omega_s$. Since the graph of $u$ is below the graph of $\tau\equiv \mbar$
and both graphs are tangent at the point $(y_0,\mbar)$, we conclude that at the point
$(y_0,u(y_0))$ the mean curvature of $u$ is less or equal to that of $\tau\equiv \mbar$.
However, in view of the Hessian estimate \eqref{Hess} this is a contradiction if $\mbar>sr$.

Considering next the lower bound for $u$, we claim that $u \geq \underline {u}$ on $\Omega_s$.
Otherwise, by contradiction there would exist a point $y_1 \in \Omega_s$ such that (at least)
$$
\dt((u(y_1),y_1),\pbf_s) = \barm < (s^2+2s^3)r.
$$
By comparing, at the base point $(u(y_1),y_1)$,
the mean curvature of the graph $u$ and the one of the sphere $S\big(\qbf',\dt((u(y_1),y_1),\pbf_s) \big)$,
 we find that
$$
\Mcal u(y_1)> n\bark((s^2+2s^3)r,r),
$$
which contradicts our assumption $\Mcal u \leq n\bark(2s^2r,r)$.
\end{proof}

\begin{proof}[Proof of Lemma~\ref{global}] {\it Step 1.} We will first show that, for some
 sufficiently
large $p$, the sup norm of $\nu(\Dbf u)$ is
bounded by its $L^p$ norm. By scaling, we may assume $r=1$ from
now on. Here, as in the rest of this paper, the main difficulty is
making sure that all constants arising in the following arguments
depend on the injectivity radius and curvature bounds, only.   It
will be convenient to work with the intrinsic form of the
mean-curvature operator $\Mcal$, but the expression in coordinates
will be also used in the end of the argument in order to control
certain Sobolev constants.

Recall that, on the hypersurface $\Sigma$,
\be
\label{equa3}
\Delta u + A_j^j = \nu(\nabla u) \, H,
\ee
where $\Delta$ denotes the Laplace operator
on the slice and $H$ is the prescribed mean curvature
function. Observe that since the second fundamental form of the
geodesic sphere is bounded, we have
$$
|A_j^j| = \left| g^{ij}\frac{\partial \textbf{g}_{ij}}{\partial \tau} \right|
\leq
C \, g^{ij} \gbf_{ij}\leq \frac{C}{1-|\Dbf u|^2},
$$
hence
\be
\label{est3}
|\Delta  u | \leq C \, |\nu(\nabla u)|^2.
\ee
Observe that the coefficients of the Laplace operator on $\Sigma$
are nothing but metric coefficients on which, at this stage of the analysis,
we have an $L^\infty$ control, only.

We are going to use the classical Weitzenb\"ock identity applied to the level
function~$u$
\be
\label{boch} \Delta |\nabla u|^2
= 2 \, |\nabla^2 u|^2 + 2 \, \la \nabla u, \nabla \Delta u\ra + 2 \, \Ric(\nabla u, \nabla u).
\ee
First, recalling that $G= (u,y)$ we estimate the Ricci curvature term by relying on the Gauss formula
\be
\nonumber
\aligned
R_{ijkl} = &
\Rbf_{\alpha\beta\gamma\delta}G^{\alpha}_iG^{\beta}_jG^{\gamma}_kG^{\delta}_{l}
- \big( h_{ik} h_{jl} - h_{il} h_{jk} \big)\\
=
& \textbf{R}_{ijkl}+\textbf{R}_{0jkl}u_i+\textbf{R}_{i0kl}u_j+\textbf{R}_{ij0l}u_k+\textbf{R}_{ijk0}u_l
    + u_iu_k\textbf{R}_{0j0l}
\\
& +u_iu_l\textbf{R}_{0jk0} +u_ju_k\textbf{R}_{i00l}+u_ju_l\textbf{R}_{i0k0}- \big( h_{ik}
h_{jl} - h_{il} h_{jk} \big),
\endaligned
\ee
where $G_{i}=G_{*}(\frac{\partial}{\partial
y^i})=\frac{\partial}{\partial y^i}+\frac{\partial u}{\partial
y^{i}}\frac{\partial}{\partial \tau}.$
 The spacetime curvature being uniformly bounded, we find
$$
\aligned
\Rbf_{\alpha\beta\gamma\delta} G^\alpha_i G^\beta_j G^\gamma_k G^\delta_l g^{jl}
& \geq -C \, (1+ |\nabla u|^2) \, \gbf_{ik}
\\
& = - C \, (1+ |\nabla u|^2) \, (g_{ik} + u_i u_k).
\endaligned
$$
Taking the trace of the Gauss formula, we obtain a lower bound
for the Ricci curvature of the hypersurface:
$$
\aligned R_{ik} &\geq h_{il}h_{kj}g^{lj}-H h_{ik}-C (1+ |\nabla u|^2)(g_{ik}+u_iu_k),
\endaligned
$$
and therefore
$$
\aligned \Ric(\nabla u, \nabla u) &\geq -C  (1+ |\nabla u|^2)^3.
\endaligned
$$
In turn, from \eqref{boch}  we deduce the key inequality
\be
\label{key3}
\aligned
& \Delta |\nabla u|^2 - 2 |\nabla^2 u|^2
 \geq 2 \, \la \nabla u, \nabla \left( \Delta u\right) \ra
      - C \, \big(1+ |\nabla u|^2\big)^3,
\endaligned
\ee
which is an {\sl intrinsic} statement written on the hypersurface $\Sigma$ and the constant $C$
depends on the spacetime curvature bound, only.

Next, to estimate the gradient of $u$ we consider the function
$$
v=v(\nabla u) := (1+|\nabla u|^2 - k)_+,
$$
where $k$ is chosen suitably large so that, thanks to the boundary
gradient estimate in Lemma~\ref{boundary}, the function $v(|\nabla
u|)$ vanishes on the boundary of the hypersurface $\Sigma$.
Multiplying \eqref{key3} by $v^q$ for $q \geq 1$ integrating
over the hypersurface, and using Green's formula we obtain
$$
\aligned
& \int_\Sigma \Big( q \, v^{q-1} |\nabla v|^2 + 2 v^q
\, |\nabla^2 u|^2 \Big) \, dv_\Sigma
\\
& \leq \int_\Sigma \Big( 2 q \, v^{q-1} \la \nabla v, \nabla u\ra \, \Delta u
                            + 2 \, v^q \, |\Delta u|^2 + C \, (v^{q+3}+v^q) \Big) \,  dv_\Sigma.
\endaligned
$$

At this juncture, we observe that the higher-order term $\Delta u$
is controlled by the prescribed mean curvature equation
\eqref{est3}. We obtain
$$
\aligned & \int_\Sigma \Big( q \, v^{q-1} |\nabla v|^2 + 2 v^q \,
|\nabla^2 u|^2 \Big) \, dv_\Sigma
 \leq (q+1) \, C \int_\Sigma \left( v^{q+3}+v^{q-1} \right) \,
dv_\Sigma
\endaligned
$$
and thus
\be
\label{key5}
\int_\Sigma \left| \nabla (v^{(q+1)/2}) \right|^2 \, dv_\Sigma
\leq
(q+1)^2 \, C \int_\Sigma \big( v^{q+3}+v^{q-1}\big) \, dv_\Sigma.
 \ee

To make use of \eqref{key5} it is convenient to return to our notation in coordinates,
by observing that
$$
\aligned
\sqrt{det(g)} & = \sqrt{1-|\Dbf u|^2} \, \sqrt{det(\gbf)},
\\
|\nabla (v^{(q+1)/2})|^2 & \geq \big(\frac{q+1}{2}\big)^2 v^{q-1}
|\Dbf v|^2,
\endaligned
$$
so that
$$
C \, |\nabla (v^{(q+1)/2})|^2 \, \sqrt{det(g)} \geq |\Dbf
v^{(q+\frac{1}{2})/2}|^2 \, \sqrt{det(\gbf)},
$$
where we used (if $v>0$)
$$
C \, v^{-1/2} \geq \sqrt{1-|\Dbf u|^2} \geq C' \, v^{-1/2}.
$$

Therefore, provided we now assume that $q \geq 3/2$, \eqref{key5} takes
the following coordinate-dependent form:
\be
\label{key5b}
\int_{\Omega_s} |\Dbf
(v^{(q+\frac{1}{2})/2})|^2 \, dy  \leq (q+1)^2 C \int_{\Omega_s} \big( v^{q+3 - 1/2} + v^{q-1-1/2} \big)\, dy.
\ee
By Sobolev's inequality in the local coordinates under consideration, we have
$$
\left( \int_{\Omega_s} w^{2n/(n-1)} \, dy \right)^{(n-1)/n}
\leq
C \int_{\Omega_s} \big( |\Dbf  w|^2 + w^2 \big) \, dy,
$$
which we apply to the function $w := v^{(q+1/2)/2}$.
Recalling that $r=1$ (after normalization) and observing that the domain of integration in $y$
is bounded,
we deduce from \eqref{key5b} that
$$
\left( \int_{\Omega_s} v^{(q+\frac{1}{2})n/(n-1)} \, dy \right)^{(n-1)/n}
\leq C \, (q+1)^2 \int_{\Omega_s} \big( v^{q+3-1/2} + v^{q-1-1/2}\big) \, dy
$$
or, equivalently, for all $p>2$
\be
\label{key5c}
\left( \int_{\Omega_s} v^{pn/(n-1)} \, dy \right)^{(n-1)/(pn)}
\leq
C^{1/p} \, p^{2/p} \left(\int_{\Omega_s} \big( v^{p+2}+v^{p-2} \big) \, dy\right)^{1/p}.
\ee

Without loss of generality, we may assume that $\| v\|_{L^\infty(\Omega_s)}  \geq 1$, for
otherwise the result is immediate. Then, \eqref{key5c} leads to the main estimate
$$
\aligned
& \max\Big(1, \left( \int_{\Omega_s} v^{pn/(n-1)} \, dy \right)^{(n-1)/(pn)} \Big)
\\
&
\leq C^{1/p} \, p^{2/p} \| v\|_{L^\infty(\Omega_s)}^{2/p}  \max\Big( 1, \left( \int_{\Omega_s} v^p \, dy\right)^{1/p}\Big).
\endaligned
$$
It remains to iterate the above estimate, which yields
$$
\aligned
& \| v\|_{L^\infty(\Omega_s)}
\leq C' \, \| v\|_{L^\infty(\Omega_s)}^\alpha  \left( \int_{\Omega_s} v^{p_0} \, dy \right)^{1/{p_0}},
\\
& \alpha:= {2 \over p_0} \sum_{k=0}^\infty (1-1/n)^k = {2n\over p_0}.
\endaligned
$$
In conclusion, provided that $p_0>2n$ the sup norm of $v$ is uniformly
bounded by its $L^{p_0}$ norm.

\

\noindent{\it Step 2.} It remains to derive an estimate for some $L^{p_0}$ norm.
Following \cite{Gerhardt}, we return to the
inequality \eqref{est3} satisfied by the function $u$ and, for every $\lambda$,
we write
$$
\aligned
\Delta (e^{\lambda \, u})
& = \lambda e^{\lambda \, u} \Delta u + \lambda^2 e^{\lambda \, u} |\nabla u|^2
\\
& \geq
- C \, \lambda \, e^{\lambda \, u}  (\nu(\nabla u))^2 + \lambda^2 e^{\lambda \, u} |\nabla u|^2.
\endaligned
$$ 
Combining this estimate with a direct calculation from \eqref{key3} (similar to the one in Step~1 above),
we obtain
$$
\aligned
\Delta \left( v^q e^{\lambda \, u} \right)
\geq
& \lambda^2 \, v^{q+1} \, e^{\lambda \, u} + \lambda v^{q-1} e^{\lambda u} \Big(
\la \nabla u, \nabla v \ra - C v \, (v+1) \Big)
\\
& + q v^{q-1} e^{\lambda u} \Big( 2 |\nabla^2 u|^2 - C v^3 + 2 \la \nabla u, \nabla (\Delta u) \ra \Big)
\\
& + q (q-1) v^{q-2} e^{\lambda u} |\nabla v|^2.
\endaligned
$$
Then, by integrating over the hypersurface $\Sigma$, integrating by parts, using
\eqref{est3}  to control the term $\Delta u$, and finally
choosing $\lambda$ sufficiently large, we arrive at
$$
\int_\Sigma |\nabla u|^q \, dv_\Sigma \leq C_q'.
$$
This completes the proof of Lemma~\ref{global}.
\end{proof}

\begin{proof}[Proof of Lemma~\ref{sec}]
{\it Step 1.} We are going to control the sup norm of $h$, and to this end we
will use Nash-Moser's iteration technique.
Note that the elliptic equation satisfied by the second fundamental form a~priori
has solely $L^\infty$ coefficients.
By scaling we can assume $r=1$. We consider an arbitrary point $p' \in \Sigma$ and
we set $\delta:=d(p',\del \Sigma)$. Simons' identity \cite{Simons} for the hypersurface reads
\be
\label{simon1}
\aligned
&\Delta h_{ij} = \Delta h_{ij} - (tr h)_{ij}
\\
& = |h|^2 \, h_{ij}-(tr h)
h_{ik}h_{lj}g^{kl}-\Rbf_{ipjq}h_{kl}g^{pk}g^{ql}+\Rbf_{jplq}h_{ik}g^{pq}g^{kl}
\\
& \quad +\nabla_{p}(\Rbf_{qj\Nbf i})g^{pq} -
\nabla_{j}(\Rbf_{i\Nbf}),
\endaligned
\ee
in which the Hessian $(tr h)_{ij}$ vanishes since $\Sigma$ has constant mean curvature.
Recall here that $\Nbf$ is the future-oriented normal to $\Sigma$.
Thanks to \eqref{simon1} we obtain
\be
\label{simon2}
\aligned
\Delta |h|^2
\geq \,
& 2 \, |\nabla h|^2 + 2 \, |h|^4 - C \, (|\Rm|_\Nbf +1) \, |h|^3
\\
& + 2 \, \la\nabla_p (\Rbf_{qj\Nbf i})g^{pq} - \nabla_j (\Rbf_{i\Nbf}), h_{ij}\ra.
\endaligned
\ee

Let $\varphi$ be a smooth, non-negative, non-increasing cut-off
function which equals $1$ in the interval $[0,1/2]$ and $0$ in
$[1,\infty)$.  Then, the function
$\psi := \varphi \circ \kappa$ with $\kappa := (d(p',\cdot) /\delta)$ is a
cut-off function on the CMC hypersurface $\Sigma$ which vanishes near the boundary $\del \Sigma$.

Fix some $q \in [1,\infty)$. Multiplying (\ref{simon2}) by $\psi|h|^q$, integrating over $\Sigma$,
and then integrating by parts, we arrive at
$$
\aligned
& \int_\Sigma \psi |h|^q \Delta |h|^2  \, dv_\Sigma
\\
& \geq \int_\Sigma \Big(
\psi |h|^q\Big( 2|\nabla h|^2 + 2 \, |h|^4 - C \, |h|^3 -C \, |\textbf{Rm}|_\Nbf (q+1) \, |\nabla h|\Big)
\\
&
 \qquad \qquad - C \, |\textbf{Rm}|_\Nbf \, |\nabla \psi| \, |h|^{q+1} \Big) \, dv_\Sigma.
\endaligned
$$
Using
$$
\int_\Sigma \psi \, |h|^q \, \Delta |h|^2  \, dv_\Sigma
\leq
\int_\Sigma 2 \, |\nabla \psi| \, |h|^{q+1} \, |\nabla h| \, dv_\Sigma
  - \int_\Sigma 2 q \, \psi \, |h|^q \, |\nabla |h||^2 \, dv_\Sigma
$$
and Cauchy-Schwartz's inequality, we obtain
\be
\label{int2'}
\aligned
& \int_\Sigma \psi \, |\nabla h|^2 \, |h|^q + \int_\Sigma \psi \, |h|^{q+4}  \, dv_\Sigma
\\
&
\leq C \, \int_\Sigma \left( \left( \frac{|\varphi'|^2}{\delta^2 \varphi} + \varphi \right)
\circ \kappa \, |h|^{q+2}
 + (q+1)^2 \, \psi \, |h|^q
 + \frac{1}{\delta} \, \left| \varphi' \circ \kappa \right|  \, |h|^{q+1} \right)  \, dv_\Sigma,
\endaligned
\ee
in which we can always choose $\varphi$ so that $|\varphi'|^2\leq C \, |\varphi|$.

Then, by H\"older's inequality we have
\be
\label{iter}
\aligned
\left( \int_\Sigma \psi \, |h|^{q+4}  \, dv_\Sigma \right)^{1/(q+4)}
& \leq C_q \, \delta^{\frac{n}{q+4}-1}.
\endaligned
\ee
In view of Lemma \ref{global}, the hypersurface is uniformly spacelike
and so we have the Sobolev inequality
$$
\aligned
\left( \int_\Sigma|\psi \, |h|^{q+2}|^{\frac{n}{n-1}} dv_\Sigma \right)^{\frac{n-1}{n}} 
\leq
C' \, \int_\Sigma \big| \nabla(\psi |h|^{q+2}) \big| \, dv_\Sigma.
\endaligned
$$
Combining  this with (\ref{int2'}) and suitably choosing the function $\varphi$, we find
that for all $i=1,2,\ldots$
\be
 \aligned
& \left(
\int_{B(p',\frac{\delta}{2}+\frac{\delta }{2^{i+1}})} \, |h|^{\frac{n(q+2)}{n-1}} \, dv_\Sigma \right)^{\frac{n-1}{n(q+2)}}
\\
& \leq
\left(2^i\frac{C' (2+q)^2}{\delta}\right)^{\frac{1}{q+2}}
\left( \int_{B(p',\frac{\delta}{2}+\frac{\delta }{2^i})} |h|^{q+2}\, dv_\Sigma \right)^{\frac{1}{q+2}}.
\endaligned
\ee
Using the Nash-Moser's iteration technique we deduce that
\be
\label{Moser}
\sup_{B(p',\delta/2)} |h|
\leq C_q \, \delta^{-\frac{n}{q+2}} \left( \int_{B(p',3\delta/4)} |h|^{q+2}
      \, dv_\Sigma\right)^{\frac{1}{q+2}}.
\ee
Finally, choosing $q=0$ in (\ref{iter}) and $q=2$ in \eqref{Moser}, we find
$$
\sup_{B(p',\delta/2)} |h| \leq \frac{C''}{\delta}.
$$

\noindent{\it Step 2.} Next, by relying on the sup norm estimate that we just established, we can
estimate the covariant derivative of $h$. We need an $L^p$ estimate for the equation \eqref{simon1}.
From Gauss equation we see that the curvature of the hypersurface is bounded by $C' \delta^{-2}$ on
the ball $B(p',\delta/2)$. By introducing harmonic coordinates on the (Riemannian) slice $\Sigma$,
as in \cite{JostKarcher}, we see that the metric coefficients belong to the H\"older space $C^{1,\alpha}$.
Since the right-hand side of \eqref{simon1}
belongs to the Sobolev space $W^{-1,q}$ for any $q \in (1, \infty)$, thanks to the Sobolev regularity property
for elliptic operators (in fixed coordinates) we find
$$
\left( \frac{1}{\delta^n}\int_{B(p',\delta/4)}|\nabla h|^q dv_\Sigma \right)^{1/q}
\leq \frac{C_q}{\delta^2}
$$
for some constant $C_q>0$,
which completes the proof of Lemma~\ref{sec}.
\end{proof}

%------------------------------------------------------------------------------------------------------------

\begin{proof}[Proof of Lemma~\ref{tidu}] We need now to estimate the time-derivative of the level set function $u$.
Set $\pbf'' = (u(\gamma(sr)),\gamma(sr)) \in \Sigma$ and consider the geodesic distance function $\rho=\rho(\pbf'',\cdot)$
associated with the {\sl induced metric} on the CMC hypersurface $\Sigma$. By the Gauss equation,
the Ricci curvature is bounded (especially from below) by
$$
R_{ij} \geq -\frac{C'}{r^2} \, g_{ij}.
$$
Hence, thanks to the Laplacian comparison theorem, the distance function $\rho$ is a supersolution for
the operator $-\Delta + C'/\rho$, that is, in the weak sense
$$
\Delta \rho\leq \frac{C'}{\rho}.
$$

Let $\varphi$ be the (non-increasing) cut-off function introduced in the proof of
Lemma~\ref{sec}. Then,
by differentiating with respect to $t$ the equation \eqref{dirichlet} satisfied by the solution $u$
and in view of the bounds on $h$ in Lemma~\ref{sec},
we obtain
\be
\label{ubb}
\Big( \Delta - |h|^2 - \Ricbf(\Nbf,\Nbf) \Big)
\Big( \nu(\nabla u) \, \frac{\del u}{\del t} + \eps \, \varphi\Big(\frac{4\rho}{s^{5/2} r}\Big) \Big)
\geq
1-\eps \frac{C'}{r}\geq 0,
\ee
where we have set $\eps := r^2 / C'$. Finally, applying the maximum principle (see \eqref{NNN} below for the ellipticity property)
we conclude that
$$
- C' \, r^2 \leq \frac{\del u}{\del t} \leq - \frac{1}{C'} \, r^2
\qquad
\text{ on } \Omega_s '= B_{sr}\Big(\gamma(sr),s^{5/2} r/4\Big)\subset\{\tau=sr\}.
$$
\end{proof}

%------------------------------------------------------------------------------------------------------

\begin{proof}[Proof of Proposition~\ref{folia} and Theorem \ref{foli}]
{\it Step 1.}
The first variation $\Lcal\Mcal (X)$ of the mean curvature along an arbitrary vector field $X$ reads
\be
\label{implicit}
\Lcal\Mcal (X) = \Delta \la X,\Nbf\ra - \big(|h|^2 + \Ricbf(\Nbf,\Nbf) \big) \, \la X,\Nbf\ra
- \la X,\nabla H\ra,
\ee
where we recall that $\Nbf$ is the unit normal vector field to the hypersurface and $h$ is its second fundamental form.
In the case of graphs, the linearization of the mean curvature equation around
a {\sl constant} mean curvature hypersurface reads
$$
\Lcal\Mcal (\varphi)=\Delta \big( \nu(\nabla u) \, \varphi \big)
      - \big( |h|^2 + \Ricbf(\Nbf,\Nbf) \big) \, \nu(\nabla u) \, \varphi.
$$
Under our assumptions, this operator is uniformly invertible, since
\be
\label{NNN}
|\Ricbf(\Nbf,\Nbf)|\leq C(n) r^{-2}, \qquad
|h|^2\geq \frac{H^2}{n}\geq s^{-1/2} r^{-2}
\ee
and we choose $s$ sufficiently small so that the term $|h|^2$ dominates $\Ricbf(\Nbf,\Nbf)$.

More precisely, we have the following important conclusion.

By the implicit function theorem, the linearized mean curvature operator $\Lcal\Mcal$
on the space of all
spacelike $C^{2,\alpha}$ functions $u$, with fixed boundary value $cr$ on $\del \Omega_s$
and with $\alpha \in (0,1)$,
is locally invertible around any smooth hypersurface of constant mean curvature.
In consequence, starting from any fixed spacelike hypersurface $u$ with constant mean curvature
$$
2\kbar(sr,r)\in [\kbar(sr,r),\bark(2s^2r,r)]
$$
and using the implicit
function theorem, we can find a smooth family of spacelike hypersurfaces $u^t$ with constant mean curvature $t$
varying in an $\eps$-neighborhood of $2\kbar(sr,r)$. From Lemma~\ref{global} and Schauder's estimate, we then
deduce higher-order uniform estimates for $u^t$. So, the function $u^t$ is smooth and we can take
a convergent subsequence of values $t$ converging to the end-points of the interval.

Next, by continuation we may still use the implicit function theorem and extend
 the smooth family under consideration
for all mean curvature parameter values in the interval $[\kbar(sr,r),\bark(2s^2r,r)]$. In other words,
we conclude that for any $t\in [\kbar(sr,r),\bark(2s^2r,r)]$ we can solve the Dirichlet
problem \eqref{dirichlet} of prescribed mean curvature equal to $t$,
and, moreover, the solution $u$ depends smoothly upon the mean curvature $t$.

Next, differentiating with respect to $t$ the conditions satisfied by the solution $u$  we obtain
\be
\label{ub}
\aligned
\Big( \Delta - \big(|h|^2 + \Ricbf(\Nbf,\Nbf)\big) \Big)\Big( \nu(\nabla u) \, \frac{\del u}{\del t} \Big)
     & = 1 \qquad \text{ in } \Omega_s,
\\
\frac{\del u}{\del t} & = 0 \qquad \text{ along } \del\Omega_s.
\endaligned
\ee
As already observed, by the maximum principle we have
$$
- C r^2 < \frac{\del u}{\del t} < 0.
$$
This property shows that the family of CMC hypersurfaces forms a foliation of the
region under consideration.

\

\noindent{\it Step 2.} In this last part of the construction we choose the geodesic slice over which the CMC foliation
should be based.
We observe that the foliation constructed in Step~1
need not pass through the given observer $\pbf=\gamma(cr)$.
To cope with this difficulty, we now vary the parameter $s$
in order to ensure that the foliation contains a neighborhood of $\pbf$.
We proceed as follows.

For any $s\in [c,2c]$ we use the notation $u^{(s)}$ for the function describing the CMC hypersurface
constructed over the reference domain $\Omega_s\subset \{\tau=sr\}$ for the chosen value of the mean curvature
$$
2\kbar(sr,r)\in  [\kbar(sr,r),\bark(2s^2r,r)].
$$
Now, we emphasize that
$u^{(s)}$ depends continuously upon the parameter $s$. Indeed, in view of \eqref{implicit} we can apply the
implicit function theorem and we see that the solution depends smoothly upon the parameters arising in the domain of definition and upon the boundary values.
Therefore, recalling the result in Lemma~\ref{global}, given any function $u^{(c)}$ we may extend it to a whole
family $u^{(s)}$ smoothly for all $s \in [c,2c]$.

Then, by Lemma~\ref{compa} we have $ u^{(s)} \geq \tau(s)r$ for some $\tau(s)$ satisfying
$
2\kbar(sr,r)=\bark(\tau(s)r,r),
$
which implies that $\tau(2c)>c$, at least when $c$ is suitably small.
Since we have
$u^{(c)}(\gamma(cr)) < cr$, by continuity there is some $s_0 \in [c,2c]$ such that
$
u^{(s_0)}(\gamma(cr)) = cr.
$
Therefore, we have constructed a family of CMC hypersurfaces $u^t$ with
constant mean curvature 
$$
t \in [n \kbar(s_0r,r), n\bark(2s_0^2r,r)] \qquad \text{ for some } s_0\in[c,2c]
$$
over some geodesic slice $\Omega_{s_0}\subset \{\tau=s_0r\}$.
Most importantly, the point $\gamma(cr)$ lies in the CMC hypersurface
with mean curvature $2\kbar(s_0r,r)$.

In addition, by a direct computation we obtain
$$
\Dbf t = \sqrt{1-|\Dbf u|^2} \, \Big(\frac{\del u}{\del t}\Big)^{-1} \frac{(-1,\nabla u)}{\sqrt{1-|\Dbf u|^2}}
$$
and, in view of Lemma~\ref{tidu}, the proof is now completed.
\end{proof}

%---------------------------------------------------------------------------------------------------

\subsection{Further geometric estimates}

For any $p' \in \Sigma_t$ with $\delta r = d(p',\del \Sigma_t)$
and thanks to our estimate of the second fundamental form in $B(p',\delta/2)$ and Gauss equation,
we see that the curvature of the hypersurface is bounded by $C \, \delta^{-2} r^{-2}$. By choosing harmonic
coordinates as in \cite{JostKarcher} and using the $L^p$ estimates in
(\ref{ub}), the function $\lambda= - \nu(\nabla u) \, \frac{\del u}{\del t}$ satisfies (for any $q \in [1, \infty)$)
\be
\label{lambdatwo}
\left( \frac{1}{\delta^n r^n} \int_{B(\pbf',\delta/4)} |\nabla ^2 \lambda|^q \, dv_\Sigma\right)^{1/q}
\leq \frac{C_q}{\delta ^2 r^2}.
\ee

In addition, let us investigate the geometry of the boundary $\del \Sigma$ of the foliation leaves. More precisely,
we can estimate its second fundamental form $\II_{\del \Sigma}$, as follows.

\begin{proposition}[Boundary of the CMC foliation]
\label{bcmc}
The CMC hypersurfaces constructed in the proof of Theorem~\ref{foli}
also satisfy the uniform estimate
$$ 
|\II_{\del \Sigma}| \leq \frac{C(n)}{r}.
$$
\end{proposition}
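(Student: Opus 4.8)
The plan is to recognize $\del\Sigma$ as the transverse intersection of two geodesic spheres of the reference Riemannian metric $\gt$ and to control its second fundamental form by Hessian comparison; the only genuinely quantitative point is a uniform lower bound on the angle at which the two spheres meet. To unwind the structure of $\del\Sigma$: by construction every CMC leaf $\Sigma_t$ has the same boundary, namely $\del\Omega_s=\Acal\big(\pbf_s,(s^2+s^3)r\big)\cap\{\tau=sr\}$, and this is exactly the intersection of the $\gt$-geodesic sphere $\St\big(\pbf_s,(s^2+s^3)r\big)$ with the slice $\Hcal_{sr}=\{\tau=sr\}$; moreover $\Hcal_{sr}$ is itself a $\gt$-geodesic sphere, namely $\St(\qbf,sr)$, since $\tau=\dt(\qbf,\cdot)$. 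Writing $\rho:=\dt(\pbf_s,\cdot)$ and $a:=(s^2+s^3)r$, the submanifold $\del\Sigma$ is thus the level set $\{f=a\}$ of $f:=\rho|_{\Hcal_{sr}}$ inside the Riemannian manifold $(\Hcal_{sr},\gbf_{ij})$.

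Next I would reduce the problem to bounding $\II_{\del\Sigma}$ inside $\Hcal_{sr}$. Since Lemma~\ref{boundary} gives $|\Dbf u|<1/2$ along $\del\Omega_s$, the leaf $\Sigma_t$ is uniformly spacelike there and $g_{ij}=\gbf_{ij}-u_iu_j$ is uniformly equivalent to $\gbf_{ij}$; a short computation comparing the induced connections of $\Sigma_t$ and $\Hcal_{sr}$ along $\del\Sigma$ (whose difference involves only $\Dbf u$ and the tensor $A$, with $|A|\le C/r$ recorded below) then shows that $\II_{\del\Sigma}$, computed as a hypersurface of $\Sigma_t$ or of $\Hcal_{sr}$, agrees up to terms of order $1/r$. (Equivalently, this also bounds the full vector-valued second fundamental form of $\del\Sigma$ in $\Mbf$, whose $\del/\del\tau$-component is just $\pm A|_{\del\Sigma}$.) So it suffices to treat $\del\Sigma\subset\Hcal_{sr}$.

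For that I would use the level-set identity $|\II_{\del\Sigma}(X,Y)|=|(\nabla^{\Hcal})^2 f(X,Y)|\big/|\nabla^{\Hcal}f|$ for $X,Y\in T\del\Sigma$, where $\nabla^{\Hcal}$ is the Levi-Civita connection of $\gbf_{ij}$. For the numerator, the Gauss-type formula for restricting a Hessian to a hypersurface gives $(\nabla^{\Hcal})^2 f(X,Y)=\Dbf^2\rho(X,Y)\pm\langle\Dbf\rho,\del/\del\tau\rangle_\gbf\,\II^{\Mbf}_{\Hcal_{sr}}(X,Y)$; here $\II^{\Mbf}_{\Hcal_{sr}}=\pm A$ with $|A|\le C/r$ by the Hessian comparison estimate \eqref{Hess} (recall $E=(\Dbf\tau)^\perp=T\Hcal_{sr}$ and $\tau\sim sr$), $|\langle\Dbf\rho,\del/\del\tau\rangle_\gbf|=|\del\rho/\del\tau|\le|\nablat\rho|_\gt=1$, and $|\Dbf^2\rho(X,Y)|\le C(n)\,r^{-1}|X|\,|Y|$ by the Hessian comparison estimate \eqref{hess}, since $T\del\Sigma\subset(\nablat\rho)^\perp$ and $\rho\sim s^2r$ (the bound there is of order $1/\rho\sim1/(s^2r)$, and the factor $s^{-2}$ is absorbed using $s\ge c$). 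Hence $|(\nabla^{\Hcal})^2 f|\le C(n)\,r^{-1}$.

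The hard part is bounding the denominator $|\nabla^{\Hcal}f|$ from below. Since $\nablat\rho$ has the same $T\Hcal_{sr}$-component as $\Dbf\rho$, one gets $|\nabla^{\Hcal}f|^2=1-(\del\rho/\del\tau)^2$, so the crux is to show that $\del\rho/\del\tau$ stays uniformly away from $\pm1$ along $\del\Sigma$, i.e.\ that the sphere $\St(\pbf_s,a)$ meets the slice $\{\tau=sr\}$ at a definite angle rather than tangentially. This is precisely what the choice $a=(s^2+s^3)r$ attached to the center $\pbf_s=\gamma((s+s^2)r)$, which lies a height $s^2r$ above the slice, was designed to guarantee: since $a\sim s^2r$ is much smaller than the curvature scale $r$ of $\gt$, comparison with the flat model shows that the minimizing $\gt$-geodesic from $\pbf_s$ to a point of $\del\Sigma$ meets $\del/\del\tau$ at an angle $\theta_0$ with $\cos\theta_0=\tfrac{s^2r}{(s^2+s^3)r}+O(s^4)=\tfrac1{1+s}+O(s^4)$, whence $|\nabla^{\Hcal}f|^2=\sin^2\theta_0+O(s^5)\ge c(n)>0$ for $s\in[c,2c]$; this is exactly the quantitative content of the inclusion \eqref{oumiga}. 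Combining with the numerator bound gives $|\II_{\del\Sigma}|\le C(n)/r$ inside $\Hcal_{sr}$, hence inside $\Sigma_t$ by the reduction above. The main obstacle is this lower bound on the intersection angle; granting it, together with the Hessian comparison estimates already established in this section, the rest is routine.
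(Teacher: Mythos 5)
Your proposal is correct and follows essentially the same route as the paper: it recognizes $\del\Sigma$ as the transverse intersection of a $\gt$-geodesic sphere $\St(\pbf_s,(s^2+s^3)r)$ with the slice $\Hcal_{sr}=\{\tau=sr\}=\St(\qbf,sr)$, uses the level-set formula $\II=\mathrm{Hess}/|\nabla|$ together with the Hessian comparison estimates \eqref{Hess}--\eqref{hess}, and correctly identifies the only quantitative point as the uniform transversality of the two spheres (the lower bound on $|\nabla_{\Hcal}\dt|$, equivalently an upper bound on $|\del\dt/\del\tau|$ away from $1$), which is indeed what the choice of center $\pbf_s=\gamma((s+s^2)r)$ and radius $(s^2+s^3)r$ was engineered to provide. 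The only cosmetic difference from the paper is the transfer from $\Hcal$ to $\Sigma$: the paper computes both $\II^{\Hcal}_{\del\Sigma}$ and $\II^{\St}_{\del\Sigma}$ and then writes $N_{\del\Sigma}=a\,N_{\Hcal}+b\,N_{\St}$ with $a,b$ uniformly bounded (using the same angle estimate, phrased there as a lower bound on $\la N_{\Hcal},N_{\St}\ra_\gbf$ away from $1$ via triangle comparison), while you bound only $\II^{\Hcal}_{\del\Sigma}$ and observe that together with the $\del/\del\tau$-component $\pm A|_{\del\Sigma}$ this already controls the full vector-valued second fundamental form of $\del\Sigma$ in $\Mbf$, hence every scalar projection including $\II^\Sigma_{\del\Sigma}$; both variants rest on the same underlying estimate. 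Your ``short computation comparing the induced connections'' phrasing in the reduction step is a bit loose on its own (it risks being circular, since the difference of normals involves $N_{\del\Sigma}^\Sigma$ itself), but the parenthetical remark about the full vector-valued second fundamental form is the right way to make it precise, so no real gap.
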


\begin{proof} Recall that, for any tangent vector fields $X,Y$ along $\del \Sigma$, the scalar
$\II_{\del \Sigma}(X,Y)$ is defined as $\gbf(\Dbf_{X}Y,N_{\del \Sigma})$, where  $N_{\del \Sigma}$ is the
normal vector field of $\del \Sigma$ in the hypersurface $\Sigma$.

On the other hand, since $\del \Sigma=\del \Omega$ is obtained by the intersection of two level surfaces
$\Hcal:=\{\tau = const.\} $ and $\St:=\{\dt=const.\}$, $\del \Sigma$ may be regarded
as a hypersurface of codimension $1$ in either $\Hcal$ or $\St$.
The second fundamental form $\II_{\del \Sigma}^{\Hcal}$ of $\del \Sigma$ in $\Hcal$ reads
\be
\II_{\del \Sigma}^{\Hcal}(X,Y)
= \gbf(\Dbf_{X}Y, N_{\Hcal})=\frac{1}{|\nabla \dt|} \, \nabla_{\Hcal}^2 \dt(X,Y),
\ee
where $\dt$ is regarded as a function on $\Hcal$ and $\nabla_{\Hcal}$ denotes the covariant derivative
associated with the induced metric on $\Hcal$, while
$N_{\Hcal}$ is the normal vector of $\del \Sigma$ in $\Hcal$.
Similarly, we have
\be
\II_{\del \Sigma}^\St(X,Y)
= \gbf(\Dbf_{X}Y, N_\St)
= \frac{1}{|\nabla \tau|} \, \nabla_\St^2 \bar{\tau}(X,Y).
\ee

By a direct computation we find that
$$
\nabla_{\Hcal}^2 \dt(X,Y)
= \Dbf^2 \dt(X,Y) - \gbf \Big(\Dbf\dt, \frac{\del}{\del \tau}\Big) \, A(X,Y).
$$
So, we have
\be
\frac{1}{ c^2 \, C' \, r} \, \gbf (X,Y)
\leq \nabla_{\Hcal}^2 \dt(X,Y)
\leq \frac{C'}{c^2 r} \, \gbf(X,Y)
\ee
and similar inequalities for $\nabla_\St^2 \tau(X,Y)$.
On the other hand, by the triangle comparison theorem for the Riemannian metric $\gt$, there exists
a constant $C''$ (depending on $c$) such that
$$
\la N_{\Hcal}, N_\St \ra_\gbf \geq 1 + \frac{1}{C''}.
$$
This implies the existence of two functions $a, b$ that are bounded by some uniform constant $C$ and satisfy
$N_{\del \Sigma} = a \, N_{\Hcal} + b \, N_\St$. This completes the proof of Proposition~\ref{bcmc}.
\end{proof}

%===================================================================================================================

\section{Local coordinates ensuring the optimal regularity}
\label{FO}

\subsection{Main statements for this section}

From now on we assume that the manifold $M$ satisfies the Einstein vacuum equations. We will now prove:

\begin{theorem}[Local coordinates ensuring the optimal regularity]
\label{1kind}
Given $\eps>0$ and $q \in [1,\infty)$ there exists a constant $c(n,\eps)$
satisfying $\lim_{\eps \to 0} c(n,\eps,q) = 0$ such that the following property holds.
Let $(\Mbf, \gbf, \pbf,\Tbf_\pbf)$ be a pointed Lorentzian manifold satisfying the curvature and
injectivity radius bounds \eqref{bd3} at some scale $r>0$, together with
Einstein field equation $\Ricbf = 0$.
Then, there exists a local coordinate chart $\xbf=(x^\alpha)$ satisfying $x^\alpha(p)=0$,
defined for all $|\xbf|^2 := (x^0)^2 + (x^1)^2 + \ldots + (x^n)^2 < r_1^2$ with $r_1:=c_1(n,\eps)r$,
and such that
$$
\aligned
& \sup_{\mbox{\small \boldmath$|x|$} \leq r_1}  \Big( |\gbf_{\alpha\beta}-\etabf_{\alpha\beta}|
+ r \, |\del \gbf_{\alpha\beta}| \Big) \leq \eps,
\\
& \frac{1}{r^{n+1-2q}} \int_{\mbox{\small \boldmath$|x|$}\leq r_1} |\del^2 \gbf_{\alpha\beta}|^q \, dx
\leq C(\eps,q),
\endaligned
$$
where $\etabf_{\alpha\beta}$ is the Minkowski metric in these local coordinates.
\end{theorem}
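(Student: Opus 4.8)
The plan is to combine the CMC foliation from Theorem~\ref{foli} with spatially harmonic coordinates on each slice --- exactly the CMC--harmonic coordinates announced in Theorem~\ref{main} --- and then to exploit the Einstein equations to promote the spatial $W^{2,q}$ bounds of Theorem~\ref{main}(ii) into full \emph{spacetime} $W^{2,q}$ bounds. First I would fix the foliation $\bigcup_t \Sigma_t$ produced by Theorem~\ref{foli}, renormalize (via the scaling $r=1$), and on each slice $\Sigma_t$ apply Jost and Karcher's theorem \cite{JostKarcher}: by Lemma~\ref{sec} and the Gauss equation the intrinsic curvature of $\Sigma_t$ is bounded by $C\delta^{-2}r^{-2}$ on balls well inside $\Omega_s'$, and by Lemma~\ref{global} the induced metric is uniformly comparable to $\gbf_{ij}$, so the Riemannian harmonic radius on $\Sigma_t$ is bounded below by a definite multiple of $r$. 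This gives harmonic coordinates $x=(x^1,\dots,x^n)$ on $\Sigma_t$ with $\|g_{ij}\|_{W^{2,q}}\le C(q,n)$ after rescaling, i.e. precisely the spatial estimates in Theorem~\ref{main}. To get a \emph{spacetime} chart I would transport these coordinates in $t$ by declaring $x^i$ to be harmonic on each slice (not Lie-dragged along $\Nbf$), which forces a nonzero shift vector $\xi$; the lapse is $\lambda=-\nu(\nabla u)\,\partial u/\partial t$, which is of order $r^2$ and two-sided bounded by Lemma~\ref{tidu}, so the metric takes the ADM form \eqref{metricform0} with $\lambda$ and $g_{ij}$ under control and, after the final rescaling $t\mapsto$ (const)$\,r^{-2}t$, we are exactly in the setting of Theorem~\ref{main}.

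The second and main part is to upgrade spatial regularity to spacetime regularity using $\Ricbf=0$. Here I would write out the components of the Ricci tensor in the ADM/CMC gauge. The $\Ricbf_{00}$-type equation (the Hamiltonian/lapse equation) reads, schematically,
\be
\Delta_{\Sigma_t}\lambda = \big(|h|^2 + \Ricbf(\Nbf,\Nbf)\big)\lambda + (\text{terms in }\partial t),
\ee
which, being an elliptic equation \emph{on each slice} with $W^{2,q}$-controlled coefficients (via the spatial metric) and $L^q$-controlled right-hand side (via the bound on $h$ in Lemma~\ref{sec} and on $\partial u/\partial t$ in Lemma~\ref{tidu}), yields $\lambda\in W^{2,q}(\Sigma_t)$ with a uniform bound --- this is \eqref{lambdatwo}. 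Likewise the momentum constraint $\Ricbf_{0i}=0$ controls $\partial_i\lambda$ combined with divergence-type quantities in $h$, and an elliptic equation for the shift $\xi^i$ (coming from the requirement that the $x^i$ stay harmonic, i.e. $\partial_t(\sqrt{g}\,g^{ij})$ has the right form) gives $\xi^i\in W^{2,q}(\Sigma_t)$ uniformly. Finally the spatial Einstein equations $\Ricbf_{ij}=0$, rewritten in harmonic coordinates on $\Sigma_t$ as $g^{kl}\partial_k\partial_l g_{ij} = Q_{ij}(g,\partial g) + (\text{time derivatives of }g,\lambda,\xi)$, are what controlled the spatial Hessian in the first place; the new point is to differentiate \emph{in $t$}. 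One differentiates the evolution equations $\partial_t g_{ij} = -2\lambda h_{ij} + \Lie_\xi g_{ij}$ and $\partial_t h_{ij} = \dots$ and uses the elliptic equations just established for $\lambda$ and $\xi$ plus Simons' identity (as in Lemma~\ref{sec}, Step 2) to bound $\partial_t g_{ij}$, $\partial_t h_{ij}$, and hence $\partial^2_{tt}g_{ij}$ and $\partial^2_{ti}g_{ij}$ in $L^q$ on each slice; integrating in $t$ over the interval of definite size from Theorem~\ref{foli} and using the uniform comparability of $dv_{\Sigma_t}$ to Lebesgue measure in the $x$-coordinates produces the claimed spacetime bound $r^{-(n+1-2q)}\int_{|\xbf|\le r_1}|\partial^2\gbf_{\alpha\beta}|^q\,dx\le C(\eps,q)$.

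The remaining point is closeness to Minkowski. After the two rescalings ($r=1$, then $t\mapsto c^{-1}r^{-2}t$) the metric \eqref{metricform0} has $\lambda$ and $g_{ij}$ bounded between $e^{\pm C(n)}$ and $|\xi|_g$ small, as in Theorem~\ref{main}(ii); to get the stronger statement $|\gbf_{\alpha\beta}-\etabf_{\alpha\beta}| + r|\partial\gbf_{\alpha\beta}|\le\eps$ one shrinks the coordinate ball: on a ball of radius $r_1 = c_1(n,\eps)r$ the $W^{2,q}$ bound plus Morrey/Sobolev embedding gives $\partial\gbf_{\alpha\beta}\in C^{0,\alpha}$ with modulus of continuity controlled independently of the point, and since the metric agrees with $\etabf$ to first order \emph{at} $p$ (choose the harmonic coordinates and the time normalization so that $g_{ij}(p)=\delta_{ij}$, $\lambda(p)=1$, $\xi(p)=0$, $\partial\gbf_{\alpha\beta}(p)=0$ --- the last using $h(p)$, $\partial_i\lambda(p)$, $\partial_i\xi(p)$ can be absorbed by a linear change of coordinates, or simply noting these first derivatives are bounded and the ball is small), integrating the Hölder bound over the small ball gives $|\gbf_{\alpha\beta}-\etabf_{\alpha\beta}|+r|\partial\gbf_{\alpha\beta}|\le C(n,q)\,c_1(n,\eps)^{\alpha}$, which is $\le\eps$ for $c_1$ small; and $c(n,\eps,q)\to 0$ as $\eps\to 0$ as required.

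The hard part will be the elliptic bootstrap for the lapse and shift with only $L^\infty$-controlled coefficients at the outset: one must carefully order the estimates (first the spatial harmonic-coordinate $C^{1,\alpha}$ control of $g_{ij}$, then the $L^q$ estimate for $h$ and $\partial_t u$, then $W^{2,q}$ for $\lambda$, then $W^{2,q}$ for $\xi$, and only then differentiate in $t$) so that at each stage the coefficients of the elliptic operator have just enough regularity (Hölder coefficients suffice for $W^{2,q}$ Calderón--Zygmund estimates) and the source term lies in the right $L^q$ or $W^{-1,q}$ space. A secondary technical nuisance is that all of this lives on slices with boundary, so the good estimates only hold on the slightly smaller domain $\Omega_s'$ (as already flagged after Lemma~\ref{sec} and in Proposition~\ref{bcmc}), which is why the final coordinate ball has radius a definite fraction of $r$ rather than $r$ itself.
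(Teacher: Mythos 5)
Your proposal and the paper's proof of Theorem~\ref{1kind} diverge in a way that matters, and there is a genuine gap in the hardest step.

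\textbf{Different coordinate construction.} The paper does \emph{not} prove Theorem~\ref{1kind} in CMC--harmonic coordinates with a shift vector. Instead it uses spatial coordinates $y^i$ that are \emph{Lie-dragged} along the vector field $\Dbf t/(-\gbf(\Dbf t, \Dbf t))$, so that the metric takes the shift-free form \eqref{metricform1}, $\gbf = -\lambda^2\,dt^2 + g_{ij}\,dy^i dy^j$. It then establishes (Lemmas~\ref{propo32}, \ref{zhongyao}, Proposition~\ref{3t}) that the CMC time function $t$ has well-controlled \emph{third-order} covariant derivatives with respect to the reference Riemannian metric $\gbfhat := \lambda^2\,dt^2 + g_{ij}\,dy^i dy^j$. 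The final chart is obtained by applying Cheeger--Gromov--Taylor and Jost--Karcher to the $(n+1)$-dimensional Riemannian manifold $(\bigcup_t\Sigma_t,\gbfhat)$, after which $\gbf = \gbfhat - 2\lambda^2\,dt\otimes dt$ inherits $W^{2,q}$ regularity from $\Dbfhat^3 t$ and $\Dbfhat^2\lambda$. Your construction --- spatially harmonic coordinates on each slice, with a nontrivial shift, then ADM elliptic equations for $\lambda$ and $\xi$ --- is the machinery the paper reserves for Theorem~\ref{main} in Section~4, and that proof explicitly feeds on the lapse estimates of Section~3 (e.g.\ \eqref{nbgji1} invokes Lemma~\ref{zhongyao}). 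So your route does not avoid Section~3's content; it secretly presupposes it.

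\textbf{The genuine gap: time derivatives of the lapse.} Your sketch treats the lapse estimate as a single application of elliptic regularity to $\Delta\lambda = -1 + (|k|^2 + \Ricbf(\Nbf,\Nbf))\lambda$, ``yielding $\lambda\in W^{2,q}(\Sigma_t)$''. That gives only \emph{spatial} control. To bound $\partial^2_{tt}g_{ij}$ you must control $\partial_t\lambda$ and $\partial^2_{tt}\lambda$, and these satisfy \emph{new} elliptic equations obtained by differentiating \eqref{lapse} in $t$; they are not consequences of ``the elliptic equations just established for $\lambda$ and $\xi$.'' The time-differentiated lapse equation \eqref{lapsetime1} carries source terms such as $4\lambda\la k_{ij},\nabla_i\nabla_j\lambda\ra$ which, given only $\nabla^2\lambda\in L^2$ and $k\in L^2$, is a priori only $L^1$; the paper's Step~2 of Lemma~\ref{propo32} has to exhibit a divergence structure $-Q+\nabla_l V^l$ and run a careful Sobolev iteration (choosing the exponent $\eps\le 4/(n-4)$, then Moser when $n\le 3$) just to reach $L^{2n/(n-4)}$ control of $\partial\lambda/\partial t$; Step~3 for $\partial^2\lambda/\partial t^2$ is harder still and is where the Ricci-flat condition is essential to cancel bad curvature terms (the paper flags ``This is the first instance where we use our assumption that the manifold is Ricci-flat''). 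None of this is replaced by ``differentiate the evolution equations and use Simons' identity.'' Without these estimates your claimed $L^q$ bounds on $\partial^2_{tt}g_{ij}$, $\partial^2_{ti}g_{ij}$ do not follow. A secondary inaccuracy: the momentum constraint $\Ricbf_{0i}=0$ constrains $k$ (it is the Codazzi divergence equation), not $\nabla\lambda$ as you suggest.

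In short, the overall strategy (CMC foliation plus harmonic coordinates plus Einstein equations) is right and matches the paper's philosophy, but you have transposed the later Section~4 coordinates onto Section~3's theorem, and the central technical content --- the second-order-in-time lapse estimates of Lemma~\ref{propo32} and the third-order time-function bounds of Proposition~\ref{3t} --- is where your proposal needs to, but does not, engage.
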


The proof of this theorem will be given at the end of this section, after establishing several preliminary
results of independent interest.

The main observation made in the present section is that the time
function $t$ associated with the CMC foliation constructed in the
previous section admits well-controlled covariant derivatives {\sl
up to third-order;} cf.~Proposition~\ref{3t} below. Consequently,
by following our arguments given earlier in
\cite[Proposition~9.1]{ChenLeFloch} we are led to the desired
optimal regularity result in Theorem~\ref{1kind}. Recall that in
the earlier work \cite{ChenLeFloch} we relied on a coordinate
system in which the metric coefficients $\gbf$ had well-controlled
first-order derivatives only; indeed, the time function in
\cite{ChenLeFloch} was simply taken to be the geodesic distance
function, which is only twice differentiable under the curvature
and injectivity radius bounds. In contrast, in the present paper
we have constructed a more regular time function $t$ (the mean
curvature of the spacelike slices) which turns out to have
third-order regularity.

Consider the constant mean curvature foliation $\Sigma_t$ given by Theorem~\ref{foli}, and observe
that the time function $t$ (together with the Lorentzian metric $\gbf$)
provides us with a natural flow $\Phi^m$ associated with the vector field
$$
{\Dbf t \over -\gbf(\Dbf t, \Dbf t)},
$$
such that the parameter $m$ may differ from $t$ by a constant. Denote by $\tau$ the normal time function introduced
in Section~2 and recall that, in the interior of the slice,
$$
\frac{\del u}{\del \tau} < 0, \qquad
\big\la \frac{\Dbf t}{-\gbf(\Dbf t, \Dbf t)},\Dbf \tau \big\ra < 0,
$$
while the vector field $\Dbf t$ vanishes identically on the boundary.
By starting from any arbitrary point $\pbf_t \in \Sigma_t$, the integral curve $\Phi^m(\pbf_t)$ intersects
each CMC slice exactly once and the flow $\Phi^m$ preserves the CMC foliation.

Let $y=(y^i)$ be spatial coordinate chosen arbitrary on a given slice $\Sigma_{t_0}$,
and let us use the flow $\Phi^m$, in order to transport these coordinates to any other slice $\Sigma_t$.
Together with the mean curvature function $t$, these spatial coordinates provide us with spacetime coordinates
$\ybf=(t,y^i)$. The metric $\gbf$ then takes the form
\be
\label{metricform1}
\gbf = -\lambda(t,y)^2 \, dt^2 + g_{ij}(t,y) \, dy^idy^j
\ee
and satisfies the ADM equations
\be
\label{evo}
\aligned
\frac{\del g_{ij}}{\del t} & = -2 \lambda \, k_{ij},
\\
\frac{\del k_{ij}}{\del t} & = -\nabla_{i}\nabla_{j}\lambda -\lambda \, g^{pq}k_{ip}k_{qj} + \lambda \, \Rbf_{i \Nbf j \Nbf},
\endaligned
\ee
where $\lambda>0$ is the lapse function and $k_{ij}$ is the second fundamental form of $\Sigma_t$
expressed in the coordinates under consideration in this section

From now on, without loss of generality we set $r=1$. The central technical estimate of the present section
concerns the lapse function and is stated in the following lemma.

\begin{lemma}[Second-order estimates for the lapse function]
\label{propo32}
Under the assumption of Theorem~\ref{1kind} and with the above notation,
the function $\lambda$ satisfies
$$
\int_\Sigma
\left(
|\nabla^2\lambda|^2 + \Big|\frac{\del \lambda}{\del t}\Big|^2 + \Big|\frac{\del^2 \lambda}{\del t^2}\Big|^2
\right) \, dv_\Sigma \leq C(n).
$$
\end{lemma}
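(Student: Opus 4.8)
The plan is to derive an elliptic equation (in the spatial variables on a fixed slice) for the lapse function $\lambda$ and then to bootstrap the $W^{2,2}$ estimate from the already-established bounds on the CMC foliation. The starting point is the standard fact that, along a CMC foliation by slices of mean curvature $t$ with unit normal $\Nbf$, the lapse $\lambda = -\gbf(\Dbf t,\Dbf t)^{-1/2}$ relative to the mean-curvature time satisfies the \emph{lapse equation}
\be
\label{lapseeq-proposal}
\Delta \lambda - \big( |k|^2 + \Ricbf(\Nbf,\Nbf)\big)\,\lambda = -\frac{\del t}{\del t} = -1,
\ee
which is exactly the content of the linearization formula \eqref{implicit} applied to the flow vector field $\Dbf t/(-\gbf(\Dbf t,\Dbf t))$ (so that $\la X,\Nbf\ra = \lambda$ and $\Mcal$ is constant $=t$ on each slice with $\del\Mcal/\del t = 1$). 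Here $\Delta$ is the Laplace–Beltrami operator of the induced metric $g_{ij}$ on $\Sigma$. Under the Einstein vacuum equation $\Ricbf = 0$ the zeroth-order coefficient reduces to $|k|^2$, and by Proposition~\ref{folia} (the bound $r|h|\le\theta^{-1}$, here $r=1$) together with the interior estimates of Lemma~\ref{sec}, $|k|^2$ is bounded in $L^\infty$ on the region $\Omega_s'$; moreover Lemma~\ref{tidu} and the global gradient estimate Lemma~\ref{global} give two-sided bounds $C_4 \le \lambda \le C_5$ on $\Omega_s'$.

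The key steps, in order, are as follows. First, establish \eqref{lapseeq-proposal} carefully as an identity valid in the interior of the slices, checking that the forcing term is $-1$ because $t$ itself is the mean curvature (this is essentially \eqref{ub} rewritten without the $\nu(\nabla u)$ factor absorbed, or rather with $\lambda = -\nu(\nabla u)\,\del u/\del t$ as in \eqref{lambdatwo}). Second, use the harmonic coordinates on each slice $\Sigma$ provided by Jost–Karcher \cite{JostKarcher} — legitimate since, by Lemma~\ref{sec} and the Gauss equation, the intrinsic curvature of $\Sigma$ is bounded on balls of definite size and the injectivity radius of $\Sigma$ is controlled — so that in these coordinates the coefficients of $\Delta$ lie in $C^{1,\alpha}$ (indeed $g_{ij}\in W^{2,p}$ for all $p$); then elliptic $L^p$ regularity applied to \eqref{lapseeq-proposal} with bounded right-hand side and bounded zeroth-order term yields $\lambda \in W^{2,p}$ on a slightly smaller ball, with $\int_\Sigma |\nabla^2\lambda|^2\,dv_\Sigma \le C(n)$. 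This is essentially \eqref{lambdatwo} with $q=2$, which the excerpt has already recorded. Third, handle the time derivatives: $\del\lambda/\del t$ and $\del^2\lambda/\del t^2$ are controlled by differentiating \eqref{lapseeq-proposal} in $t$ and using the ADM evolution equations \eqref{evo} to express $\del g_{ij}/\del t$, $\del k_{ij}/\del t$ (hence $\del(|k|^2+\Ricbf(\Nbf,\Nbf))/\del t$) in terms of $\lambda$, $\nabla^2\lambda$, $k$, and the spacetime curvature, all of which are already bounded; then $\del\lambda/\del t$ solves an elliptic equation of the same type with a right-hand side controlled in $L^2$, and one more differentiation gives $\del^2\lambda/\del t^2$. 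Combining the three contributions yields the stated bound.

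The main obstacle I expect is \textbf{the time-derivative estimates}, specifically controlling $\del^2\lambda/\del t^2$ in $L^2$. Differentiating \eqref{lapseeq-proposal} twice in $t$ forces terms involving $\del^2 k_{ij}/\del t^2$ and $\del^2 \Ricbf(\Nbf,\Nbf)/\del t^2$; via \eqref{evo} the former brings in $\nabla^2(\del\lambda/\del t)$ and products like $\lambda\,\Rbf_{i\Nbf j\Nbf}$ differentiated in $t$, so one needs an $L^2$ bound on third-order \emph{mixed} derivatives of the metric, or equivalently on $\nabla^2(\del\lambda/\del t)$, which itself requires elliptic regularity applied to the $t$-differentiated lapse equation with a right-hand side one must verify lies in $L^2$. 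This is the point where the third-order regularity of the time function $t$ — advertised in the discussion preceding the lemma and to be established in Proposition~\ref{3t} — is genuinely used: it is what makes the spacetime curvature term $\Rbf_{i\Nbf j\Nbf}$ and its $t$-derivative sit in the right spaces. A secondary technical nuisance is keeping all constants dependent on $n$ only (and on the fixed region $\Omega_s'$ of definite size), which requires being careful that the Jost–Karcher harmonic radius, the Sobolev constants on the slices, and the ellipticity constants in \eqref{NNN} are all uniform — but the groundwork for this is already in Lemmas~\ref{global}, \ref{sec}, and \ref{tidu}.
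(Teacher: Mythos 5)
Your general plan (derive the lapse equation, get $\nabla^2\lambda\in L^2$ by elliptic estimates, then control the time derivatives by differentiating the equation in $t$ and using the ADM evolution equations) is aligned with the paper at the coarsest level, but it diverges from and falls short of the paper's actual argument in several essential ways.

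First, the estimate claimed in the lemma is a \emph{global} $L^2$ estimate over the whole slice $\Sigma$, up to and including the boundary $\del\Sigma$. Your proposed route to $\int_\Sigma|\nabla^2\lambda|^2$ via Jost--Karcher harmonic coordinates and interior elliptic $L^p$ regularity only yields bounds at a fixed distance from $\del\Sigma$ (this is precisely \eqref{lambdatwo}, which blows up as $\delta\to 0$). The paper instead gets \eqref{2lamda} by a purely integral argument: one first proves the $L^2$ bound \eqref{l2k} on $|k|^2$ over the whole slice by integrating the Weitzenb\"ock identity for $u$ and controlling the resulting boundary term with Proposition~\ref{bcmc}; one then uses the barrier estimate \eqref{lamda}, $0\le\lambda\le C(c_0-\dt)$, together with the maximum principle and the Laplacian comparison \eqref{bijiao}, to obtain a boundary gradient bound on $\lambda$; and only then integrates Bochner's formula \eqref{bolamda}, reducing the boundary contribution via the identity \eqref{machinery}. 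None of these ingredients appears in your proposal, and without them the boundary term in the Bochner integration is uncontrolled.

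Second, your account of the time-derivative estimates underestimates the difficulty and gets a logical dependency backwards. You invoke Proposition~\ref{3t} (third-order regularity of $t$) to justify the $\del^2\lambda/\del t^2$ estimate, but in the paper Proposition~\ref{3t} is proved \emph{after} and \emph{from} Lemma~\ref{propo32} and Lemma~\ref{zhongyao}; it cannot be used here without circularity. What actually makes the argument work is the specific algebraic structure revealed after plugging in $\Ricbf=0$: the equation \eqref{lapsetime1} for $\del\lambda/\del t$ has a right-hand side that decomposes as $-Q+\nabla_lV^l$ with $Q\in L^2$ and $V\in L^4$, and the key claim \eqref{lapsetime2} for $\del^2\lambda/\del t^2$ is that the right-hand side is $\nabla_iV^i+f_1+f_2+f_3$ with $V\in L^2$, $f_1=k\ast f_1'$ with $f_1'\in L^2$, $f_2\in L^{2n/(n+2)}$, and $f_3\lesssim|\nabla^2\lambda|^2$. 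Establishing this structure requires the detailed term-by-term computations \eqref{a1}--\eqref{a6} (in particular the cancellations modulo the quantities controlled in \eqref{nec}), the commutation formulas for $\del/\del t$ with $\nabla$, and the second Bianchi identity applied to $\del\Rbf_{i\Nbf j\Nbf}/\del t$. Your proposal glosses over all of this by asserting that the right-hand side is ``controlled in $L^2$'' after differentiating, which is exactly the nontrivial point: several of the terms (e.g.\ those involving $\nabla^2(\del\lambda/\del t)$ or $k\ast\nabla^2\lambda$) are not individually in $L^2$, and the argument works only because of the divergence structure and the weighted estimates \eqref{4lamda}, \eqref{5lamda}, \eqref{a7}--\eqref{a9} set up in Steps 1 and 2. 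Without identifying these cancellations, the proposed bootstrap does not close.
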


For any $\delta>0$ we set $\Sigma^{\delta}:=\{x\in \Sigma \, / \,  d(x,\del \Sigma)\geq \delta\}$
and, away from the boundary of the slices, we can improve Lemma~\ref{propo32}, as follows.

\begin{lemma}[Higher-order interior estimates for the lapse function]
\label{zhongyao}
Under the assumption of Theorem~\ref{1kind} and with the above notation,
for any $\delta>0$ and $q \in [1,\infty)$
one has
$$
\aligned
& \sup_{\Sigma^\delta}
\Big( |k|+|\nabla \lambda| + |\nabla^2 \lambda| + \Big|\frac{\del \lambda}{\del t}\Big|
  + \Big|\nabla\frac{\del \lambda}{\del t}\Big| + \Big|\frac{\del^2 \lambda}{\del t^2}\Big|\Big)
  \leq C(n, \delta),
\\
& \int_{\Sigma^\delta} \Big( |\nabla k|^q+|\nabla^3\lambda|^q
    + \Big|\nabla^2\frac{\del \lambda}{\del t}\Big|^q + \Big|\nabla\frac{\del^2 \lambda}{\del t^2}\Big|^q
    \Big) \, dv_\Sigma
\leq C(n,q,\delta).
\endaligned
$$
\end{lemma}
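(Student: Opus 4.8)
\textbf{Proof proposal for Lemma~\ref{zhongyao}.}

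The plan is to upgrade the $L^2$-type bounds of Lemma~\ref{propo32} to pointwise and higher-order $L^q$ bounds away from $\del\Sigma$, using the standard elliptic bootstrap for the ADM system together with the interior estimates already available for $h$ (Lemma~\ref{sec}) and for $\lambda=-\nu(\nabla u)\,\partial u/\partial t$ (see \eqref{lambdatwo}). First I would fix $\delta>0$ and work on the slightly larger region $\Sigma^{\delta/2}$, so that cut-off functions supported there still vanish before reaching $\del\Sigma$ and every constant produced is allowed to depend on $\delta$. On $\Sigma^{\delta/2}$ the second fundamental form $k$ is bounded by Lemma~\ref{sec} (with $d(p',\del\Sigma)\gtrsim\delta$), hence by Gauss' equation the intrinsic curvature of $\Sigma_t$ is bounded there, and — invoking Jost--Karcher \cite{JostKarcher} exactly as in the proof of Lemma~\ref{sec} — one has harmonic coordinates on $\Sigma_t$ in which the spatial metric lies in $C^{1,\alpha}\cap W^{2,q}$ with uniform norms. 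All the elliptic estimates below are then carried out in these fixed harmonic coordinates, where the Laplacian has $C^{0,\alpha}$ (indeed $W^{1,q}$) coefficients.

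Next I would run the bootstrap in the following order. (1) \emph{Lapse, step one.} The lapse satisfies the elliptic equation obtained from the constraint/evolution structure, namely $\Delta\lambda=\big(|k|^2+\Ricbf(\Nbf,\Nbf)\big)\lambda - \partial_t(\text{tr}\,k)$; since $\text{tr}\,k=t$ is constant on $\Sigma_t$, the inhomogeneity is $\big(|k|^2+\Ricbf(\Nbf,\Nbf)\big)\lambda$, which is bounded on $\Sigma^{\delta/2}$, so interior $W^{2,q}$ and then $C^{1,\alpha}$ estimates give $\sup_{\Sigma^\delta}(|\nabla\lambda|+|\lambda|)\le C(n,\delta)$ and, combined with $\text{Ric}\in W^{1,q}$ via the contracted Bianchi identity and Einstein's equation, $\lambda\in W^{3,q}_{\mathrm{loc}}$, i.e. $\int_{\Sigma^\delta}|\nabla^3\lambda|^q\le C(n,q,\delta)$ and $\sup_{\Sigma^\delta}|\nabla^2\lambda|\le C(n,\delta)$. (2) \emph{Curvature of $h$.} From Simons' identity \eqref{simon1} and the now-bounded right-hand side, $\int_{\Sigma^\delta}|\nabla k|^q\le C(n,q,\delta)$ — this is essentially Step~2 of Lemma~\ref{sec}, applied on the smaller domain. (3) \emph{Time derivatives of the lapse.} Differentiating the ADM equations \eqref{evo} and the lapse equation with respect to $t$, and using $\partial_t g_{ij}=-2\lambda k_{ij}$, $\partial_t k_{ij}=-\nabla_i\nabla_j\lambda-\lambda g^{pq}k_{ip}k_{qj}+\lambda\Rbf_{i\Nbf j\Nbf}$, one sees that $\partial_t\lambda$ solves an elliptic equation of the same form with a right-hand side built from $\nabla^2\lambda$, $k$, $\nabla k$, $\lambda$ and the (spacetime) curvature and its first derivatives — all controlled by the previous steps. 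Hence $\sup_{\Sigma^\delta}(|\partial_t\lambda|+|\nabla\partial_t\lambda|)\le C(n,\delta)$ and $\int_{\Sigma^\delta}|\nabla^2\partial_t\lambda|^q\le C(n,q,\delta)$. (4) \emph{Second time derivative.} Differentiating once more in $t$ produces an elliptic equation for $\partial_t^2\lambda$ whose inhomogeneity now also involves $\partial_t k$ (bounded via step (3) and \eqref{evo}) and $\partial_t(\text{curvature})$, which in the vacuum case is controlled through the evolution equations for the curvature; this yields $\sup_{\Sigma^\delta}|\partial_t^2\lambda|\le C(n,\delta)$ and $\int_{\Sigma^\delta}|\nabla\partial_t^2\lambda|^q\le C(n,q,\delta)$. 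Collecting (1)--(4) gives exactly the two displayed bounds.

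The main obstacle, and the point requiring the most care, is the interplay between the \emph{time regularity} and the \emph{a priori limited regularity} of the spacetime curvature: in the vacuum case the only input is $\sup|\Rm|\le r^{-2}$, so each $t$-differentiation of the elliptic equations must be paid for by differentiating the curvature, and one must check that these curvature derivatives are available in the right norms. The resolution is that $\Ricbf=0$ makes the full Riemann tensor obey a wave equation, but here we do not need that — we only need that, \emph{on each fixed slice and in the harmonic coordinates}, the quantities $\Rbf_{i\Nbf j\Nbf}$ and their tangential and $\partial_t$-derivatives are expressible, via Gauss--Codazzi and the already-established bounds on $k,\nabla k,\lambda,\nabla^2\lambda$, in terms of intrinsic slice geometry that we control; the Einstein equations are what close this loop by eliminating the Ricci part. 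A secondary technical nuisance is that all estimates must be localized with cut-offs that degenerate near $\del\Sigma$, so at each step one loses a definite fraction of the distance to the boundary — this is harmless since $\delta$ is fixed and only finitely many steps are performed, but it must be tracked so that the constants genuinely depend only on $n$, $q$, $\delta$.
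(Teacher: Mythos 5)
Your overall strategy is the paper's: work on $\Sigma^\delta$ where Lemma~\ref{sec} bounds $k$, use Gauss and Cheeger--Gromov--Taylor to get a uniform lower injectivity bound on the slice, invoke Jost--Karcher to produce harmonic coordinate charts with $W^{2,q}\cap C^{1,\alpha}$ spatial metric, and then run an elliptic bootstrap on the equations for $k$, $\lambda$, $\partial_t\lambda$, $\partial_t^2\lambda$, with Lemma~\ref{propo32} supplying the $L^2$ starting data. The overall plan is sound, but two points in your bootstrap need repair.

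First, the order in (1)--(2) is backwards. To get $\lambda\in W^{3,q}_{\rm loc}$ from the lapse equation \eqref{lapse} you need its right-hand side $-1+|k|^2\lambda$ (note $\Ricbf(\Nbf,\Nbf)=0$ in vacuum; there is no occasion to invoke ``$\Ric\in W^{1,q}$ via contracted Bianchi,'' which is not well-posed here) to lie in $W^{1,q}$, and that requires $k\in W^{1,q}$, which is your step (2). So the correct sequence is: metric $\in W^{2,q}$ (Jost--Karcher) $\Rightarrow$ Simons' identity \eqref{simon1} gives $k\in W^{1,q}$ (the right-hand side of \eqref{simon1} is only $W^{-1,q}$, because of the undifferentiated-curvature-gradient terms, so $k\in W^{1,q}$ is all one gets) $\Rightarrow$ lapse equation gives $\lambda\in W^{3,q}$. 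The bound $\sup|\nabla^2\lambda|\le C$ does follow from the first pass of $W^{2,q}$ regularity with only $k\in L^\infty$, but the $\int|\nabla^3\lambda|^q$ bound sits one level higher and cannot be claimed before $\nabla k\in L^q$ is established.

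Second, in steps (3) and (4) you assert that the right-hand sides of the time-differentiated lapse equations are ``all controlled by the previous steps,'' and that $\partial_t(\text{curvature})$ is ``controlled through the evolution equations for the curvature.'' As written this hides the entire difficulty: differentiating \eqref{lapsetime1} in time produces $\partial_t\Rbf_{i\Nbf j\Nbf}$, which a priori involves a \emph{full} first spacetime derivative of $\Rm$ — not bounded by any hypothesis. The reason this is harmless is not a generic ``evolution equation for the curvature'' but the specific conversion, via the second Bianchi identity and Codazzi, of $\Dbf_0\Rbf_{ikjl}$ into $-\Dbf_j\Rbf_{ik0l}-\Dbf_l\Rbf_{ikj0}$, i.e.\ into spatial derivatives that can be integrated by parts, with the Ricci pieces killed by the vacuum equation. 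That computation is the hard part of Lemma~\ref{propo32} (its Step 3), and it is exactly why the paper proves propo32 first and then reduces zhongyao to the routine elliptic bootstrap you describe. If your intent is to cite propo32 for this, say so explicitly; if you intend to re-derive it inside the bootstrap, then the Bianchi/Codazzi reduction and the divergence-form rewriting of the terms must be spelled out, because without them the claimed $W^{-1,q}$ control of $\Delta(\partial_t^2\lambda)-|k|^2\partial_t^2\lambda$ does not follow. Relatedly, the interior elliptic estimate of the form $\|v\|_{W^{2,q}(\Omega')}\le C(\|f\|_{L^q(\Omega)}+\|v\|_{L^q(\Omega)})$ requires an a~priori $L^q$ bound on $v=\partial_t\lambda$ (respectively $\partial_t^2\lambda$) on the larger region; this is precisely what propo32 supplies via its $L^2$ bounds and Sobolev, and it should appear explicitly in your steps (3) and (4) rather than only in the opening sentence of the plan.
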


Finally, based on Lemma~\ref{zhongyao} we prove that the time
function $t$ admits well-controlled third-order derivatives. Here,
we use the covariant derivative $\Dbfhat$ associated with the
reference Riemannian metric in the coordinates under
consideration, that is,
$
\gbfhat := \lambda(t,y)^2 \, dt^2 + g_{ij}(t,y) \, dy^idy^j.
$

\begin{proposition}[Third-order estimates for the time-function]
\label{3t}
Under the assumption of Theorem~\ref{1kind} and with the above notation, for all $q \in [1,\infty)$ one has
$$
\aligned
& \sup_{\Sigma^\delta} \Big( |\Dbf^2 t| + |\Dbfhat{}^2 t| + |\Dbfhat{}^2 \lambda| \Big) \leq C(n, \delta),
\\
& \int_{\Sigma^\delta}  \Big( |\Dbf^3t|^q + |\Dbfhat{}^3 t|^q \Big) \, dv_\Sigma \leq C(n,q,\delta),
\endaligned
$$
where all the norms are computed with the reference metric $\gbfhat$.
\end{proposition}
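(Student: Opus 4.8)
\textbf{Proof plan for Proposition~\ref{3t}.}
The plan is to express the spacetime Hessian and third derivatives of $t$ in terms of the ADM quantities associated with the CMC slicing, and then to invoke the estimates on $\lambda$ and $k$ already available from Lemmas~\ref{propo32} and~\ref{zhongyao} together with the interior second-fundamental-form bounds from Lemma~\ref{sec}. The starting point is that, in the coordinates $\ybf=(t,y^i)$ of \eqref{metricform1}, one has $\Dbf t = -\lambda^{-2}\,\Nbf$ up to the normalization $\gbf(\Dbf t,\Dbf t)=-\lambda^{-2}$, where $\Nbf$ is the future unit normal to $\Sigma_t$. Differentiating this relation covariantly, the components of $\Dbf^2 t$ split into a part tangent to the slice, which is governed by the second fundamental form $k_{ij}$ of $\Sigma_t$ (recall $k_{ij}=-\tfrac12\lambda^{-1}\del_t g_{ij}$ from \eqref{evo}), and a part involving $\nabla\lambda$ and $\del\lambda/\del t$; schematically $\Dbf^2 t \sim \lambda^{-2} k + \lambda^{-3}\nabla\lambda + \lambda^{-4}(\del\lambda/\del t)$. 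On $\Sigma^\delta$, Lemma~\ref{sec} (or rather its consequence in the CMC-foliation context, together with $r$-independent lower bounds on $\lambda$ from the closeness estimates) gives $|k|\leq C(n,\delta)$, and Lemma~\ref{zhongyao} gives $|\nabla\lambda|, |\del\lambda/\del t|\leq C(n,\delta)$; combining these yields the pointwise bound $|\Dbf^2 t|\leq C(n,\delta)$. The bound $|\Dbfhat{}^2 t|\leq C(n,\delta)$ then follows because $\gbfhat$ and $\gbf$ differ only by the sign of the $dt^2$ term, so their Christoffel symbols differ by quantities controlled by $\nabla\lambda$, $\del_t\lambda$ and $k$, all already bounded on $\Sigma^\delta$; likewise $\Dbfhat{}^2\lambda$ is estimated from $\nabla^2\lambda$, $\nabla(\del\lambda/\del t)$, $\del^2\lambda/\del t^2$, $|k|$ and $|\nabla\lambda|$, all of which are in the sup-norm list of Lemma~\ref{zhongyao}.

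For the $L^q$ bounds on the third derivatives one differentiates the above identities once more. Thus $\Dbf^3 t$ is a sum of terms of the form (rational expression in $\lambda$) times one of: $\nabla k$, $\del k/\del t$, $\nabla^2\lambda$, $\nabla(\del\lambda/\del t)$, $\del^2\lambda/\del t^2$, together with lower-order products of already-bounded quantities. The term $\del k/\del t$ is handled by the second ADM evolution equation in \eqref{evo}, namely $\del_t k_{ij} = -\nabla_i\nabla_j\lambda - \lambda\, g^{pq}k_{ip}k_{qj} + \lambda\,\Rbf_{i\Nbf j\Nbf}$, so $|\del k/\del t|$ is pointwise controlled on $\Sigma^\delta$ by $|\nabla^2\lambda|$, $|k|^2$ and the curvature bound $|\Rm|\leq C r^{-2}= C$ (after the normalization $r=1$), hence $|\del k/\del t|\leq C(n,\delta)$; its further spatial derivative, $|\nabla(\del k/\del t)|$, is in $L^q(\Sigma^\delta)$ by the $L^q$ bounds on $\nabla^3\lambda$ in Lemma~\ref{zhongyao}, the $L^q$ bound on $\nabla k$ in Lemma~\ref{zhongyao}, and the fact that spatial derivatives of the curvature appearing through $\nabla\Rm$ are controlled by the Bianchi identity in the Ricci-flat case (where $\nabla\Rm$ can be re-expressed and estimated, or simply because $\Rm$ is controlled in $W^{1,q}$ on $\Sigma^\delta$ via harmonic coordinates as in \cite{JostKarcher}). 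Collecting these, every summand of $\Dbf^3 t$ lies in $L^q(\Sigma^\delta)$ with the stated bound, and the passage to $\Dbfhat{}^3 t$ is again routine since $\gbfhat-\gbf$ and the difference of their connections are controlled in the relevant norms by the already-established estimates on $\lambda$ and $k$.

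The main obstacle, and the step that deserves care, is the interior elliptic regularity needed to put $\nabla^3\lambda$ and $\nabla k$ in $L^q$ on $\Sigma^\delta$ with constants depending only on $n$, $q$, $\delta$ (equivalently, this is exactly the content of Lemma~\ref{zhongyao}, which I would cite, but its proof is where the real work sits): the lapse $\lambda$ satisfies an elliptic equation of the form $\Delta\lambda = (|k|^2+\Ricbf(\Nbf,\Nbf))\lambda + (\text{terms from }\del_t\text{ of the CMC condition})$ whose coefficients are only in $L^\infty$ a priori, so one must first upgrade the slice metric to $C^{1,\alpha}$ via Jost–Karcher harmonic coordinates on $\Sigma$ (legitimate because the intrinsic curvature of $\Sigma^\delta$ is bounded by Gauss's equation plus the $|h|\leq C/\delta$ bound of Lemma~\ref{sec}), then bootstrap: $\lambda\in W^{2,q}\Rightarrow k\in W^{1,q}$ via \eqref{evo}$\Rightarrow$ better coefficients $\Rightarrow\lambda\in W^{3,q}$, each step using Calderón–Zygmund estimates in fixed harmonic coordinates and the Einstein equation $\Ricbf=0$ to trade curvature terms for metric derivatives. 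Keeping every constant in this bootstrap dependent only on $n$, $q$, $\delta$ — in particular controlling the harmonic radius of $\Sigma^\delta$ from below — is the technical heart of the matter; once Lemma~\ref{zhongyao} is in hand, Proposition~\ref{3t} is a bookkeeping exercise in the ADM formalism as sketched above.
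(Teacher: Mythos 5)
Your plan is essentially the paper's own proof: both express $\Dbf^2 t$ through the spacetime Christoffel symbols of the ADM metric (equivalently, via differentiating $\Dbf t = -\lambda^{-1}\Nbf$; note your $-\lambda^{-2}\Nbf$ has the wrong power, and the schematic $\lambda^{-2},\lambda^{-3},\lambda^{-4}$ prefactors in $\Dbf^2 t$ should all be $\lambda^{-1}$, though this is harmless since $\lambda$ is pinched), compare $\Dbf$ and $\Dbfhat$ by the explicit formulas $\Gamma(\gbf)-\Gamma(\gbfhat)\sim k/\lambda,\ \nabla\lambda$, and then bound everything by Lemma~\ref{zhongyao} and the ADM evolution of $k$. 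One caution: your aside about needing $\nabla(\del k/\del t)\in L^q$ and controlling $\nabla\Rm$ "via Bianchi" is both unnecessary and not actually justified at this regularity — under $\Rm\in L^\infty$ and $g\in W^{2,q}$ only, $\Rm\in W^{1,q}$ is not available — but since $\Dbf^3 t$ only requires $\del_t k$ itself (not its spatial derivative), the claim is never used and the argument survives.
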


%-----------------------------------------------------------------------------------------------

\subsection{Derivation of the key estimates on the lapse function}

This section is devoted to the proof of Lemma~\ref{propo32}.

\noindent{\it Step 1. Zero-order estimates in time.}
By integrating Weitzenb\"ock identity \eqref{boch} and observing that $|\Delta u|+|\nabla u|\leq C$
on a CMC slice $\Sigma$, we obtain
\be
\int_\Sigma |\nabla^2 u |^2 dv_\Sigma \leq C+\int_{\del \Sigma}
(\nabla^2 u) \big(n_{\del \Sigma}, n_{\del \Sigma} \big),
\ee
where $n_{\del \Sigma} := \frac{\nabla u}{|\nabla u|}$ is the unit normal
vector field of the boundary $\del \Sigma$ on $\Sigma$. Since
\be
\label{machinery}
\aligned
(\nabla^2 u) \big( n_{\del \Sigma}, n_{\del \Sigma} \big)
& = \Delta u - tr_{g}\big( \nabla^2u\big)
\\
& = \Delta u - |\nabla u| \, tr_{g} \big( \II_{\del\Sigma}\big) \quad \text{ along } \Sigma,
\endaligned
\ee
we conclude with the boundary estimate in Proposition~\ref{bcmc} that the second fundamental form $k$
is uniformly bounded in the $L^2$ norm
\be
\label{l2k}
\int_\Sigma |k|^2 \, dv_\Sigma \leq C.
\ee
Observe that this estimate covers the whole slice up to its boundary (in contrast with the interior
sup-norm estimate given by Lemma~\ref{sec}).

We use the notation introduced in Section~2.
Recall that the Riemannian distance function $\dt = \dt(\gamma(s + s^2),\cdot)$ takes the constant value
$c_0 = s^2 + s^3$ on the boundary $\del \Sigma$,
and that $c_0 - \dt$ is proportional to the intrinsic distance function to the boundary of the slice, i.e.
\be
\label{jeiduan}
{1 \over C'} \, d(\cdot,\del \Sigma) \leq c_0 - \dt \leq C' \, d(\cdot,\del \Sigma).
\ee
Moreover, by the Laplacian comparison lemma for distance functions and relying on our curvature assumption
we have also
\be
\label{bijiao}
\frac{C'}{c^2}\geq \Delta \dt \geq \frac{1}{c^2 C'} \qquad \text{ on the hypersurface } \Sigma.
\ee

Now, taking the trace of the second identity in \eqref{evo} and recalling that $\Sigma=\Sigma_t$ has constant mean
curvature we obtain the {\sl elliptic equation satisfied by the lapse function}
\be
\label{lapse}
\Delta \lambda = -1 + \big( |k|^2 + \Ricb(N,N)\big) \, \lambda.
\ee
In view of \eqref{bijiao} and recalling that $\lambda>0$ we deduce
$$
\Delta \Big( \lambda+c^2 C' \dt \Big) \geq 0
$$
so that, thanks to the maximum principle,
\be
\label{lamda}
0 \leq \lambda \leq c^2 C' (c_0-\dt).
\ee
In particular, in view of \eqref{jeiduan} this implies the desired gradient estimate {\sl along the boundary} at least
\be
\label{bdylamda}
\sup_{\del \Sigma}|\nabla \lambda| \leq C'.
\ee

Next, by \eqref{l2k} and \eqref{lamda}, the right-hand side of \eqref{lapse} belongs to $L^2$, which
yields us a bound for the Laplacian of the lapse function
\be
\label{one}
\int_\Sigma|\Delta \lambda|^2  \, dv_\Sigma \leq C'.
\ee
On the other hand, by multiplying \eqref{lapse} by the function $\lambda$ and integrating by parts,
we find the $L^2$ gradient estimate
\be
\label{two}
\int_\Sigma |\nabla \lambda|^2 \, dv_\Sigma\leq C'.
\ee

Finally, by observing that
$$
\int_\Sigma |\Delta \lambda|^2\leq C',
\quad
\Delta \lambda\mid_{\del \Sigma}=-1,
\quad
R{ic}(\nabla\lambda,\nabla \lambda) \geq - C' \, |\nabla \lambda|^2,
$$
integrating Bochner formula
\be
\label{bolamda}
\Delta |\nabla \lambda|^2=2|\nabla^2\lambda|^2+2\la \nabla \lambda, \nabla
\Delta \lambda\ra+2R{ic}(\nabla \lambda,\nabla \lambda),
\ee
and then using \eqref{bdylamda} together with a similar calculation as in \eqref{machinery},
we arrive at an estimate of {\sl all second-order spatial derivatives} of $\lambda$:
\be
\label{2lamda} \int_\Sigma  |\nabla^2 \lambda|^2  \, dv_\Sigma \leq C',
\ee
which is one of the estimates stated in Lemma~\ref{propo32}.

Furthermore, we can also control certain nonlinear functions. Multiplying the identity
\be
\label{idi}
\Delta \lambda^2 = 2 \big( |k|^2+\Ricb(N,N) \big) \, \lambda^2 - 2 \, \lambda + 2 \, |\nabla\lambda|^2,
\ee
by $|\nabla \lambda|^2 \, \lambda^{-(1-\eps)}$ on one hand and by $|k|^2$ on the other hand,
for all $\eps \in (0,1)$ we find
\be
\label{4lamda}
\int_\Sigma \lambda^{-1+\eps} \, |\nabla\lambda|^4 \, dv_\Sigma
\leq C_\eps,
\qquad
\int_\Sigma |k|^2|\nabla \lambda|^2 \, dv_\Sigma \leq C.
\ee
Thus, by multiplying \eqref{bolamda} by $|\nabla \lambda|^2$ and then using \eqref{4lamda}, we obtain
\be
\label{5lamda}
\int_\Sigma \Big(
|\nabla^2\lambda|^2 |\nabla \lambda|^2 + |k_{ij} \, \nabla_i\lambda|^2 \, |\nabla\lambda|^2 \Big) \, dv_\Sigma
\leq C,
\ee
where we used the Gauss equation for the expression of $\Ric$. In particular, this provides us with a control
of
$$
\int_\Sigma \big| \nabla |\nabla \lambda|^4 \big| \, dv_\Sigma \leq C.
$$
Moreover, since the boundary values of $|\nabla \lambda|$ are uniformly bounded, by Sobolev inequality we also
have
$$
\int_\Sigma |\nabla \lambda|^{\frac{4n}{n-1}} \, dv_\Sigma  \leq C.
$$

\

\noindent{\it Step 2. First-order estimates in time.} This is the first instance where we use our assumption
that the manifold is Ricci-flat.
By differentiating \eqref{lapse} in time, we obtain
that
\be
\label{lapsetime0}
\aligned
\Delta \Big( \frac{\del \lambda}{\del t} \Big)
=
& \Big\la \frac{\del g_{ij}}{\del t},\nabla_i\nabla_j\lambda \Big\ra
    + \big( |k|^2 + \Ricbf(\Nbf,\Nbf) \big) \, \frac{\del \lambda}{\del t}
    + 2 \, \lambda \, \big\la \frac{\del k_{ij}}{\del t},k_{ij} \big\ra
\\
& - 2 \, \frac{\del g_{ij}}{\del t} \, k_{kl}k_{rs}g^{sl}g^{ik}g^{jr}\lambda
  + \Ricbf({\nabla}_{\frac{\del}{\del t}} \Nbf,\Nbf) \, \lambda
  + \big( {\nabla}_{\frac{\del}{\del t}}\Ricbf \big)(\Nbf,\Nbf) \, \lambda
\\
& + t \, |\nabla \lambda|^2 - 2 \, g^{il}g^{kj}k_{ij}\nabla_l \lambda\nabla_k\lambda
  + 2 \, \lambda g^{kl} \, \nabla_l\lambda \, \Rbf_{k0},
\endaligned
\ee
where we used
\be
\label{tig}
\aligned
\frac{\del}{\del
t}\Gamma^k_{ij}&=-g^{kl} \Big(
\nabla_i\lambda k_{lj}+\nabla_j\lambda
k_{li}-\nabla_l\lambda k_{ij}+\lambda(\nabla_i
k_{lj}+\nabla_jk_{li}-\nabla_l k_{ij}) \Big),
\\
-g^{ij}\frac{\del}{\del t}\Gamma^k_{ij}&=2\lambda g^{kl}
g^{ij}\nabla_ik_{lj}+2 g^{ij} g^{kl}k_{li}\nabla_j\lambda - t
g^{kl}\nabla_l \lambda\\& = -2\lambda g^{kl} \Rbf_{l0}+2
g^{ij} g^{kl}k_{li}\nabla_j\lambda - t g^{kl}\nabla_l \lambda,
\endaligned
\ee
as well as Codazzi equation
$\nabla_i k_{lj}-\nabla_{l}k_{ij}=\Rbf_{il0j}$.
Plugging in the vacuum Einstein equation $\Ricbf=0$, we arrive at the
{\sl equation satisfied by the derivative of the lapse function}
\be
\label{lapsetime1}
\aligned
\Delta \Big( \frac{\del \lambda}{\del t} \Big) - |k|^2 \, \frac{\del \lambda}{\del t}
& = 2 \, \la \Rbf_{iNjN},k_{ij}\ra \, \lambda^2 - 4 \, \lambda \la k_{ij}, \nabla_i\nabla_j\lambda \ra
\\
& \quad + 2 \, k_{ij}k_{kl}k_{rs} \, g^{sl}g^{ik}g^{jr} \lambda^2 + t \, |\nabla \lambda|^2
 - 2 \, g^{il}g^{kj}k_{ij} \, \nabla_l \lambda\nabla_k\lambda
 \\
& = - Q + \nabla_l V^l,
\endaligned
\ee
in which, thanks to our estimates in Step~1,
$$
\int_\Sigma \Big( |Q|^2+|V|^4 \big) \, dv_\Sigma \leq \Cbar,
\qquad
V^l = -2 g^{il}g^{kj}k_{ij}\lambda\nabla_k\lambda.
$$

By multiplying \eqref{lapsetime1} by $\big(\frac{\del \lambda}{\del t}\big)^{1+\eps}$ on both
sides, using Sobolev inequality for the function $\big(\frac{\del \lambda}{\del t} \big)^{1+\eps/2}$,
recalling $\frac{\del\lambda}{\del t}\mid_{\del \Sigma}=0$, and finally
integrating by parts, we conclude that 
\be
\label{lapsetime0'}
\aligned
& C'' \left(
\int_\Sigma \left| \frac{\del \lambda}{\del t} \right|^{\frac{n}{n-2}(2+\eps)} \, dv_\Sigma
 \right)^{(n-2)/n} \hskip-.5cm
\leq \int_\Sigma \left| \nabla \frac{\del \lambda}{\del t} \right|^2
          \left| \frac{\del \lambda}{\del t} \right|^{\eps} \, dv_\Sigma
       + \int_\Sigma  |k|^2 \, \left| \frac{\del \lambda}{\del t} \right|^{2+\eps} \, dv_\Sigma
\\
& \leq \left( \int_\Sigma \left| \frac{\del \lambda}{\del t} \right|^{2(1+\eps)}\right)^{1/2}
 \left( \int_\Sigma |Q|^2 \, dv_\Sigma \right)^{1/2}
\\
& \quad + \left( \int_\Sigma \left| \nabla\frac{\del \lambda}{\del t} \right|^2
      \left|\frac{\del \lambda}{\del t} \right|^{\eps} \, dv_\Sigma\right)^{1/2}
 \left(\int_\Sigma \left| \frac{\del \lambda}{\del t}\right|^{2\eps} \, dv_\Sigma \right)^{1/4}
     \left( \int_\Sigma |V|^4  \, dv_\Sigma \right)^{1/4}.
\endaligned
\ee
So, we should take $\eps\leq \frac{4}{n-4}$ if $n\geq 5$, but can take arbitrary $\eps>0$ if $n \leq 4$.
We conclude that
\be
\label{543}
\aligned
& \int_\Sigma \left| \frac{\del \lambda}{\del t} \right|^{\frac{2n}{n-4}}   \, dv_\Sigma
\leq C \qquad \text{ if } n \geq 5,
\\
& \int_\Sigma \left| \frac{\del \lambda}{\del t} \right|^q  \, dv_\Sigma \leq C_q
\qquad \, \, \, \text{ if } n =4,
\\
& \sup_\Sigma \left| \frac{\del \lambda}{\del t} \right| \leq C
 \qquad \, \, \, \, \, \qquad \text{ if } n \leq 3.
\endaligned
\ee
In the case $n\leq 3$ above, we used once more Nash-Moser's iteration technique.

Next, multiplying Bochner formula
$$
\Delta \left| \nabla \frac{\del \lambda}{\del t} \right|^2
=
\left| \nabla^2 \frac{\del \lambda}{\del t} \right|^2
 + 2 \, \la \nabla \frac{\del \lambda}{\del t}, \nabla \Delta\frac{\del \lambda}{\del t}\ra
 + 2 \, Ric\left(\nabla \frac{\del \lambda}{\del t},\nabla \frac{\del \lambda}{\del t}\right)
$$
by $(c_0-\dt)^2$, using
$$
\int_\Sigma \Big( \left| \nabla \frac{\del \lambda}{\del t} \right|^2
    + (c_0-\dt)^2 \, \left| \Delta\frac{\del \lambda}{\del t} \right|^2 \Big) \, dv_\Sigma
\leq C,
$$
and then integrating by parts, we find
\be
\label{a7}
\int_\Sigma (c_0-\dt)^2 \left| \nabla^2 \Big( \frac{\del \lambda}{\del t} \Big) \right|^2 dv_\Sigma \leq C.
\ee

Multiplying \eqref{idi}
by $|\nabla \frac{\del \lambda}{\del t}|^2$  and integrating by parts, we have
\be
\label{a8}
\aligned
\int_\Sigma 2|\nabla\lambda|^2|\nabla \frac{\del \lambda}{\del t}|^2  \, dv_\Sigma
&
\leq C
+ \int_\Sigma 4\lambda \, \left| \nabla^2 \frac{\del \lambda}{\del t} \right|
     \, \left| \nabla\frac{\del \lambda}{\del t} \right|
     \, \left| \nabla \lambda \right| \, dv_\Sigma
\\
& \leq  C + \int_\Sigma \left| \nabla\lambda\right|^2 \, \left| \nabla \frac{\del \lambda}{\del t} \right|^2 \, dv_\Sigma
    + 4 \, \int_\Sigma \lambda^2 \, \left|\nabla^2 \frac{\del \lambda}{\del t} \right|^2 \, dv_\Sigma
\endaligned
\ee
and, after using \eqref{lamda} and \eqref{a7},
\be
\label{a9}
\aligned
\int_\Sigma |\nabla\lambda|^2 \, \left| \nabla \frac{\del \lambda}{\del t} \right|^2 \, dv_\Sigma
& \leq C.
\endaligned
\ee

Now, multiplying Bochner formula \eqref{bolamda} by $|\frac{\del \lambda}{\del t}|^2$ and then integrating by parts,
we obtain
\be
\aligned
& \int_\Sigma \Big(
2 \, \left| \nabla^2\lambda \right|^2 \, \left| \frac{\del \lambda}{\del t} \right|^2
   + 2 \, R{ic}\big(\nabla \lambda,\nabla \lambda\big) \, \left| \frac{\del \lambda}{\del t} \right|^2
   \Big) \, dv_\Sigma
\\
& \leq C' \int_\Sigma \Big(
|\nabla\lambda|^2 \, \left| \nabla \frac{\del \lambda}{\del t} \right|^2
     + |\nabla^2\lambda|^2 \, \left| \frac{\del \lambda}{\del t} \right|^2 \Big) \, dv_\Sigma
  + C' \int_\Sigma (\Delta \lambda)^2 \, \left| \frac{\del \lambda}{\del t} \right|^2 \, dv_\Sigma
\endaligned
\ee
and, thanks to \eqref{a9} and \eqref{lapsetime0'},
$$ 
\aligned
\int_\Sigma \Big(
\left| \nabla^2\lambda \right|^2 \, \left| \frac{\del \lambda}{\del t} \right|^2
    + |k_{ij}\nabla_j \lambda|^2 \, \left| \frac{\del \lambda}{\del t}\right|^2 \Big) \, dv_\Sigma
\leq C.
\endaligned
$$

\

\noindent{\it Step 3. Second-order estimates in time.}
We now turn to the most involved estimate concerning the function $\frac{\del^2
\lambda}{\del t^2}$ which, we claim, satisfies an equation of the form
\be
\label{lapsetime2}
\Delta \Big( \frac{\del^2 \lambda}{\del t^2} \Big) -|k|^2\frac{\del^2 \lambda}{\del t^2}
= \nabla_iV^i+f_1+f_2+f_3,
\ee
where
$$ 
\aligned \int_\Sigma |V^i|^2 dv \leq C
\endaligned
$$
and $f_1\in L^1$ has the form $f_1 = k \ast f_1'$ (that a linear combination of such products)
with
$$
\int_\Sigma |f_1'|^2\leq C,
\qquad
\int_\Sigma |f_2|^{\frac{2n}{n+2}}\leq C,
$$
and $f_3$ is bounded pointwise by $C' \, |\nabla^2\lambda|^2$. This is one of the key
observations in the present paper.

To establish \eqref{lapsetime2} we differentiate {\eqref{lapsetime1}} in time. It is
not hard to show that all terms arising in the right-hand side of the
equation, except those of form $\nabla_iV^i$ with $\int_\Sigma |V|^2\leq \Cbar$,
belongs to $L^1$ uniformly. We emphasize that we may arrange the other terms by introducing
new terms $V^i$'s so that they all have the desired form in \eqref{lapsetime2}.

Let us now deduce from \eqref{lapsetime2} that
\be
\label{roug2}
\int_\Sigma
\left( \Big|\nabla\frac{\del^2 \lambda}{\del t^2}\Big|^2
   + \Big|\frac{\del^2 \lambda}{\del t^2}\Big|^2 \right) \, dv_\Sigma
\leq C''.
\ee
Namely, by multiplying both sides of \eqref{lapsetime2} by $\frac{\del^2 \lambda}{\del t^2}$,
then integrating by parts, and using
$\frac{\del^2 \lambda}{\del t^2}\mid_{\del\Sigma}=0$, we obtain
\be
\label{lapsetime2b}
\aligned
& \int_\Sigma \Big(
\left| \nabla\frac{\del^2 \lambda}{\del t^2} \right|^2
    + |k|^2 \, \left| \frac{\del^2 \lambda}{\del t^2} \right|^2 \Big) \, dv_\Sigma
\\
& \leq
\int_\Sigma \Big(
|V| \, |\nabla \frac{\del^2 \lambda}{\del t^2}| + \left| \frac{\del^2 \lambda}{\del t^2} \right| \, |k| \, |f_1'| \Big)
\, dv_\Sigma
\\
& \quad + \left( \int_\Sigma \left| \frac{\del^2 \lambda}{\del t^2} \right|^{\frac{2n}{n-2}} \right)^{\frac{n-2}{2n}}
\, \left( \int_\Sigma |f_2|^{2n \over n+2)} \right)^{n+2 \over 2n}
   + C \, \int_\Sigma \left| \frac{\del^2 \lambda}{\del t^2} \right| \, |\nabla^2\lambda|^2 \, dv_\Sigma.
\endaligned
\ee
Since $Ric\geq -Cg$, by multiplying Bochner formula \eqref{bolamda}) by
$|\frac{\del^2 \lambda}{\del t^2}|$ we obtain
\be
\label{ibolamda}
\aligned
2 \, \int_\Sigma |\nabla^2\lambda|^2 \, \left| \frac{\del^2 \lambda}{\del t^2} \right| \, dv_\Sigma
\leq
& \int_\Sigma  \Big(
\big| \nabla|\nabla \lambda|^2 \big| \, \left| \nabla \frac{\del^2 \lambda}{\del t^2} \right|
+ 2 \, (\Delta \lambda)^2 \, \left| \frac{\del^2 \lambda}{\del t^2} \right| \Big) \, dv_\Sigma
\\
&
+ \int_\Sigma  \Big(
|\nabla \lambda \Delta \lambda| \, \left| \nabla \frac{\del^2 \lambda}{\del t^2}\right|
 + C \, |\nabla \lambda|^2 \, \left| \frac{\del^2 \lambda}{\del t^2} \right| \Big) \, dv_\Sigma.
\endaligned
\ee

By Cauchy-Schwartz inequality and Sobolev inequality, we then
have
\be
\aligned
& \int_\Sigma \Big(
\left| \nabla\frac{\del^2 \lambda}{\del t^2} \right|^2
    + \left| \frac{\del^2 \lambda}{\del t^2} \right|^2 \Big) \, dv_\Sigma
\\
&
\leq  C \int_\Sigma |f_1'|^2 \, dv_\Sigma
  + C \, \left( \int_\Sigma |f_2|^{\frac{2n}{n+2}} \, dv_\Sigma \right)^{\frac{n+2}{n}}
     + C \int_\Sigma |\nabla^2\lambda|^2|\nabla \lambda|^2 \, dv_\Sigma
\\
& \quad
+ \int_\Sigma \Big(
\frac{(\Delta \lambda)^4}  {|k|^2}+|\nabla \lambda \Delta \lambda|^2+|\nabla\lambda|^4 \Big) \, dv_\Sigma.
\endaligned
\ee
Hence, by combining with \eqref{l2k}, \eqref{5lamda}, \eqref{lapse}, \eqref{lamda}, and \eqref{4lamda} we obtain
$$
\int_\Sigma
\Big(
|\nabla^2\lambda|^2|\nabla \lambda|^2+\frac{(\Delta \lambda)^4}
{|k|^2}+|\nabla \lambda \Delta \lambda|^2+|\nabla\lambda|^4 \Big) \, dv_\Sigma
\leq C,
$$
which gives the desired estimate \eqref{roug2}.

In summary, by combining the estimates already established in Lemma~\ref{sec} and in this proof,
we thus have
\be
\label{nec}
\aligned
& \sup_\Sigma  \lambda \, |k|
  + \int_\Sigma |k|^2 \, \Big( \left| \frac{\del \lambda}{\del t} \right|^2 + |\nabla \lambda|^2 + 1 \Big) \, dv_\Sigma
\\
&  +  \int_\Sigma  \Big(
|\nabla^2\lambda|^2 \, \big( |\nabla \lambda|^2 + \left| \frac{\del \lambda}{\del t} \right|^2 + 1 \big)
  + \left| \nabla \frac{\del\lambda}{\del t} \right|^2 \Big) \, dv_\Sigma
\\
& + \int_\Sigma  \Big( \lambda^2 \, \left| \nabla^2\frac{\del \lambda}{\del t} \right|^2
  + |\nabla\lambda|^4 + |\nabla\lambda|^2 \, |\nabla \frac{\del \lambda}{\del t}|^2
  \Big) \, dv_\Sigma
\\
& + \int_\Sigma \Big( \big| k_{ij} \nabla_j \lambda \big|^2 \, \left| \frac{\del \lambda}{\del t} \right|^2
  + \left| \frac{\del k_{ij}}{\del t} \right|^2 \, \Big) \ dv_\Sigma
\leq C.
\endaligned
\ee
In the rest of this proof, we use the notation $A\approx B$ when $A-B$ is controled by the
left-hand side of \eqref{nec}.

We can compute
\be
\label{a1}
\aligned
& \frac{\del}{\del t} \, \left(
\Delta \frac{\del \lambda}{\del t}-|k|^2\frac{\del \lambda}{\del t}
\right)
- \left( \Delta \frac{\del^2 \lambda}{\del t^2} - |k|^2 \, \frac{\del^2 \lambda}{\del t^2}
\right)
\\
& = \nabla_i \Big( 2 \, \lambda \, k_{ij}\nabla_j\frac{\del \lambda}{\del t} \Big)
    - t \, \big\la \nabla\lambda, \nabla\frac{\del \lambda}{\del t} \big\ra
    - 4 \, \lambda \, k_{ij}k_{jk}k_{ki} \, \frac{\del \lambda}{\del t}
    - 2 \, \frac{\del k_{ij}}{\del t}k_{ij} \, \frac{\del \lambda}{\del t}
\\
& \approx 2 \, \la\nabla_i\nabla_j\lambda, k_{ij}\ra \, \frac{\del \lambda}{\del t} \approx 0
\endaligned
\ee
thanks to \eqref{nec}, and
\be
\label{a2}
\aligned
\frac{\del}{\del t} \, \left( 2 \, \la \Rbf_{i \Nbf j\Nbf},k_{ij}\ra \lambda^2
\right)
& = 4 \, \la \Rbf_{i \Nbf j \Nbf},k_{ij}\ra \lambda\frac{\del \lambda}{\del t}
    + 8 \, \Rbf_{i\Nbf q\Nbf}k_{pq}k_{ip}\lambda^3
    \\
& \quad +2\la\Rbf_{i\Nbf j\Nbf},\frac{\del k_{ij}}{\del t}\ra \, \lambda^2
    + 2 \, \la\frac{\del\Rbf_{i\Nbf j\Nbf}}{\del t},k_{ij} \ra \, \lambda^2
\\
& \approx 2 \, \big\la\frac{\del\Rbf_{i\Nbf j\Nbf}}{\del t},k_{ij} \big\ra \, \lambda^2.
\endaligned
\ee
We can also compute the following:
\be
\label{a3}
\aligned
& \frac{\del}{\del t} \Big( -4 \lambda \, \la k_{ij},\nabla_i\nabla_j\lambda \ra \Big)
\\ 
& = - 4 \, \big\la k_{ij},\nabla_i\nabla_j\lambda \big\ra \, \frac{\del \lambda}{\del t}
     - 4 \lambda \, \big\la \frac{\del k_{ij}}{\del t},\nabla_i\nabla_j\lambda \big\ra
      - 4 \lambda \, \big\la k_{ij},\nabla_i\nabla_j\frac{\del\lambda}{\del t} \big\ra
\\
& \quad -8 \lambda \, k_{ij}k_{jk} \, \nabla_i\lambda\nabla_k\lambda
        + 4 \, \lambda \, |k_{ij}|^2|\nabla \lambda|^2
        + 4 \lambda^2 \, k_{ij}\nabla_ik_{kj}\nabla_k\lambda
\\
& \approx \nabla_i \big( 4 \lambda^2 \, k_{ij} k_{kj} \, \nabla_k\lambda \big)
        - 8 \lambda \, k_{ij}k_{kj} \, \nabla_i\lambda\nabla_k\lambda
        - 4 \lambda^2 \, k_{ij} k_{kj} \, \nabla_i\nabla_k\lambda,
\endaligned
\ee
\be
\label{a4}
\aligned
& \frac{\del}{\del t}\Big( 2 k_{ij}k_{kl}k_{rs}g^{sl}g^{ik}g^{jr}\lambda^2 \Big)
\\
& = 4 \, k_{ij}k_{jk}k_{ki} \, \lambda \, \frac{\del \lambda}{\del t}
   + 12 \lambda^3 \, k_{ij}k_{jk}k_{ks}k_{si}
+ 6 \, \lambda^2 \, \frac{\del k_{ij}}{\del t} k_{js}k_{si} \approx 0,
\endaligned
\ee
\be
\label{a5}
\aligned
\frac{\del}{\del t}\Big( t \, |\nabla \lambda|^2 \Big)
& = |\nabla \lambda|^2 + t 2 \lambda \, k_{ij}\nabla_i\lambda\nabla_j\lambda
    + 2 \, \nabla\Big( t \lambda \, \nabla\frac{\del \lambda}{\del t} \Big)
    - 2 t \, \lambda \Delta \frac{\del \lambda}{\del t} \approx 0,
\endaligned
\ee
and finally
\be
\label{a6}
\aligned
& \frac{\del}{\del t}\Big( -2 \, g^{il}g^{kj}k_{ij}\nabla_l \lambda\nabla_k\lambda \Big)
\\
& = - 8 \, \lambda \, k_{ij}k_{jk} \, \nabla_i\lambda \nabla_k\lambda -2\frac{\del k_{ij}}{\del t}
          \nabla_i\lambda\nabla_j\lambda
- 4 k_{ij} \, \nabla_i\frac{\del \lambda}{\del t}\nabla_j\lambda
\\
&\approx \nabla_i\big( \lambda \nabla_i |\nabla\lambda|^2 \big)
-\lambda \, \Delta|\nabla\lambda|^2 \approx 0.
\endaligned
\ee

To deal with the curvature term in \eqref{a1} we need some recall property of the curvature.
We have
$$
{\del \over \del t} \Rbf_{ikjl}
= \Dbf_0\Rbf_{ikjl} + \Gamma^\alpha_{0,i}(\gbf) \, \Rbf_{\alpha kjl}
  + \Gamma^\alpha_{0,k}(\gbf) \, \Rbf_{i\alpha jl}
  + \Gamma^\alpha_{0,j}(\gbf) \, \Rbf_{ik\alpha l}
  + \Gamma^\alpha_{0,l}(\gbf) \, \Rbf_{ikj\alpha}
$$
with
$$ 
\aligned
& \Gamma^0_{00}(\gbf)
= \frac{1}{\lambda}\frac{\del \lambda}{\del t},
    \qquad
   \Gamma^0_{0i}(\gbf) =\frac{1}{\lambda }\frac{\del \lambda}{\del y^i},
   \qquad  \Gamma^0_{ij}(\gbf) =\frac{1}{2 \lambda^2}\frac{\del g_{ij}}{\del t},
   \\
&  \Gamma^k_{00}(\gbf) =-\lambda g^{kl}\frac{\del \lambda
}{\del y^{l}},
     \qquad \Gamma^k_{i0}(\gbf) =\frac{1}{2 }g^{kl}\frac{\del g_{li} }{\del t},
     \qquad \Gamma^k_{ij}(\gbf) ={\Gamma}^k_{ij},
\endaligned
$$
$$
\aligned
\Dbf_j\Rbf_{ik0l}
= \nabla_j \Rbf_{ik0l} - \Gamma^0_{ji}(\gbf) \, \Rbf_{0k0l}
 - \Gamma^0_{jk}(\gbf) \, \Rbf_{i00l} - \Gamma^\alpha_{j0}(\gbf) \, \Rbf_{ik\alpha l}.
\endaligned
$$
Recall also the second Bianchi identity,
$$
\aligned
\Dbf_0\Rbf_{ikjl}& =-
\Dbf_j\Rbf_{ik0l}-\Dbf_l\Rbf_{ikj0}.
\endaligned
$$
We obtain
$$
\aligned
\frac{\del}{\del t}\Rbf_{i\Nbf j\Nbf}
& = - \Big(\frac{\del}{\del t} \Rbf_{ikjl} \Big) \, g^{kl}
    + \lambda \, g^{-2}\ast \textbf{Rm}\ast k
\\
& = \lambda g^{-1} {\nabla}(\Rbf_{\Nbf \ast\ast\ast})
    + \lambda \, g^{-2} \ast k \ast \textbf{Rm}
    + g^{-2} \ast \nabla \lambda \ast k \ast \Rbf_{\Nbf\ast\ast\ast}
\\
& \quad +\lambda \, \Rbf_{\Nbf\ast \Nbf\ast} \ast k \ast g^{-2}.
\endaligned
$$
So, we have
$$ 
\aligned
\frac{\del}{\del t} \Big( \la \Rbf_{iNjN},k_{ij}\ra \lambda^2 \Big) \approx 0,
\endaligned
$$
which completes the proof of Lemma~\ref{propo32}.

%--------------------------------------------------------------------------------------------------------------------------

\subsection{Proofs of the main statements}

\begin{proof}[Proof of Lemma~\ref{zhongyao}]
Since the second fundamental form is bounded in each slice $\Sigma^\delta$, then according to Gauss equation
the intrinsic curvature of $\Sigma^\delta$ is also uniformly bounded by
$C \, \delta^{-2}$.
Hence, from the injectivity radius theorem of Cheeger, Gromov, and Taylor
for Riemannian manifolds \cite{CheegerGromovTaylor}, it follows that
the injectivity radius of $\Sigma^\delta$ is uniformly bounded from below
by $C \, \delta$. Next, using the theorem of Jost and Karcher \cite{JostKarcher},
we can find a fixed number of harmonic coordinate charts covering $\Sigma^\delta$ and
in which the metric is equivalent to the Euclidean metric and has $W^{2,q}$ regularity for
each $q\in [1,\infty)$.
In addition, by Sobolev's embedding theorem, the metric coefficients also belong to the H\"older space
$C^{1,\alpha}$ for all $\alpha \in (0,1)$.

Next, using an $L^q$ estimate from the equation \eqref{simon1} satisfied by the second fundamental form
in these harmonic coordinates, we deduce that $k_{ij} \in W^{1,q}$ for all $q$.
We also observe that the Christoffel symbols are of class $C^\alpha$, so that this
also provides us that $\nabla k \in L^q$. All implied constants are uniform and only depend on the dimension
$n$ and the distance $\delta$ to the boundary of the slice. 

Then, using a standard $W^{2,q}$ regularity estimate for equation \eqref{lapse} satisfied by the lapse function
(see, for instance, \cite{GilbargTrudinger})
and noting that
$g \in W^{2,p}$, we deduce that $\del^3\lambda \in L^q$.
Here,
$\del^3\lambda$ stands for any natural derivative in the constructed harmonic coordinates.
Since $\del \Gamma \in L^q$, we obtain the third-order covariant derivatives $|\nabla^3\lambda|\in L^q$.
Finally, we emphasize that for the spatial regularity of
$\frac{\del \lambda}{\del t}$ and $\frac{\del^2\lambda}{\del t^2}$,  we need the crucial estimates
established earlier in Lemma~\ref{propo32}.  We use an $W^{2,q}$ regularity estimate to equations
\eqref{lapsetime1} and use Lemma~\ref{propo32}; this leads to the desired estimate for
$\frac{\del \lambda}{\del t}$. Finally, the above estimates imply
$\Delta \frac{\del^2 \lambda}{\del t^2}-|k|^2\frac{\del^2 \lambda}{\del t^2}\in W^{-1,q}$,
and we use again Lemma~\ref{propo32} and an $L^p$ regularity estimate in order to control $\frac{\del^2
\lambda}{\del t^2}$.
\end{proof}

%------------------------------------------------------------------------------------------------

\begin{proof}[Proof of Proposition~\ref{3t}]
We now want to control the covariant derivatives of the function $t$.
Since this question is independent of the choice of coordinates, then
on the fixed slice $\Sigma^\delta$ we choose finitely many spatially harmonic coordinates patches
as in the previous proof, and we use them our new coordinates $y^i$. Then, on this fixed
time slice, the spatial metric belongs to $W^{2,q}$ for all $q \in [1,\infty)$.
In particular, the Christoffel symbols $\Gamma(\gbf)=(\Gamma^k_{ij})$ are uniformly bounded and
$\del_y \Gamma(g)\in L^q$. Combining \eqref{metricform1} and Lemma~\ref{zhongyao} together,
we see that $\Gamma(\gbf)$ is bounded and $|\del_{t,y} \Gamma(\gbf)|\in L^q$ at
slice $\Sigma^\delta$ in these particular harmonic coordinates and at this fixed time.

Now, we calculate the covariant derivatives of $t$ in the coordinates chosen above. Since
$\Dbf_{\alpha\beta}^2t = -{\Gamma(\gbf)}_{\alpha\beta}^0$, we
have $\sup_{\Sigma^\delta} |\Dbf^2t| \leq C$. For the third-order
derivative, we write
$
\Dbf^3t = \del_{t,y} \Gamma(\gbf) + \Gamma(\gbf) \ast \Gamma(\gbf) \in L^q.
$
A direct computation yields us
$$
\aligned
& \Gamma^0_{00}(\gbf)- \Gamma^0_{00}(\gbfhat)=0,
&& \Gamma^0_{0i}(\gbf)- \Gamma^0_{0i}(\gbfhat)=0,
&&&  \Gamma^0_{ij}(\gbf)-\Gamma^0_{ij}(\gbfhat) =-\frac{2}{\lambda}k_{ij},
\\
& \Gammat^k_{00} = 2 \, \lambda g^{kl} \, \frac{\del \lambda}{\del x^{l}},
&& \Gamma^k_{i0}(\gbf)- \Gamma^k_{i0}(\gbfhat) = 0,
&&&  \Gamma^k_{ij}(\gbf)-{\Gamma}^k_{ij}(\gbfhat)=0.
\endaligned
$$
Since $\Dbfhat{}^2 t - \Dbf^2 t = - \frac{2}{\lambda}k$, we have
$\sup_{\Sigma^\delta} |\Dbfhat{}^2 t| \leq C_\delta$,
and
\be
\label{zhongjian}
\aligned
\Dbf_l k_{ij} & = \nabla_{l}k_{ij}, \qquad
\Dbf_0 k_{ij} = -\nabla_i\nabla_j\lambda + \lambda g^{lp} k_{pj} k_{li} + \lambda \, \Rbf_{iNjN},
\\
\Dbf^2_{ij}\lambda & =\nabla_i\nabla_j\lambda - \frac{1}{\lambda} \, k,
\qquad
\Dbf^2_{0j}\lambda = \nabla_i \frac{\del \lambda}{\del t} + \lambda \, k \ast \nabla\lambda + \lambda^{-1} \frac{\del \lambda}{\del t}\nabla\lambda,
\\
\Dbf^2_{00}\lambda
& = \frac{\del^2 \lambda}{\del t^2} - \frac{1}{\lambda} \left( \frac{\del \lambda}{\del t} \right)^2
        - \lambda \, |\nabla \lambda|^2.
\endaligned
\ee
Therefore, we have
$$
\aligned
\Dbfhat^3 t
& =( \Dbfhat - \Dbf) \Dbfhat {}^2 t + \Dbf(\Dbfhat {}^2 t - \Dbf^2t ) + \Dbf^3t
\\
& = \Big( \frac{1}{\lambda}k + \frac{1}{\lambda}\Dbf\lambda \Big) \ast
      \Big(\frac{1}{\lambda} \, k + \frac{1}{\lambda} \Dbf\lambda \Big)
      + \Dbf\Big( \frac{k}{\lambda} \Big) + \Dbf^3t,
\endaligned
$$
and the result follows from (\ref{zhongjian}) and Lemma~\ref{zhongyao}.
\end{proof}

%-----------------------------------------------------------------------------------------------------------------------

\begin{proof}[Proof of Theorem \ref{1kind}]
By a direct computation (see for instance \cite{ChenLeFloch}) one can check that
the Riemannian curvature of the metric $\gbfhat$ on $\bigcup_t \Sigma_t$ is uniformly bounded
and, actually,
$$
\sup_{\Sigma^\delta} |\widehat{R}_{\alpha\beta\gamma\delta}-\Rbf_{\alpha\beta\gamma\delta}|\leq
C_{\delta} \, \big( |\nabla^2 \lambda|+ |k|^2+|\Dbf\lambda|^2 \big),
$$
hence
$
\sup_{\Sigma^{\delta}} |\widehat{R}_{\alpha\beta\gamma\delta}| \leq C_\delta'.
$
According to the injectivity estimate for Riemannian manifolds established in \cite{CheegerGromovTaylor},
the injectivity radius of the metric $\gbfhat$ at the point $\pbf$ is uniformly bounded from below,
i.e.
$\text{inj}(\Mbf, \gbfhat, \pbf) \geq c_\delta'$. Therefore, according to Jost and Karcher
\cite{JostKarcher}, we may choose harmonic coordinates
$\xbf^\alpha$ of $\gbfhat$ around $\pbf$.

Noting that
$
\gbfhat = \gbf + 2\lambda^2 dt\otimes dt,
$
we obtain
$$
\aligned
\Dbfhat{}^2\gbf
= & 2\, \Dbfhat{}^2\lambda^2 \otimes \Dbfhat t \otimes \Dbfhat{} t
    + 2 \, \lambda^2 \, \Dbfhat{}^3 t \otimes \Dbfhat t
   \\
   & + 2 \, \lambda^2 \, \Dbfhat{} t \otimes \Dbfhat{}^3 t
    + 4 \, \lambda^2 \, \Dbfhat{}^2 t \otimes \Dbfhat{}^2 t.
\endaligned
$$
By combining with the result \eqref{3t}, we see that the coefficients of $\gbf$ in harmonic coordinates
$\xbf^\alpha$ for the Riemannian metric $\gbfhat$ belongs to $W^{2,q}$.
\end{proof}

%====================================================================================================

\section{CMC--harmonic coordinates of an observer}
\label{CMCh}

\subsection{Construction of local coordinates}

\subsubsection*{Preliminaries} In Theorem \ref{1kind}, we constructed coordinates in which the
Lorentzian metric coefficients have optimal regularity. However, one inconvenient
of these coordinates is that they are not consistent with the CMC foliation constructed in
Section~2. In the present section, we show that both strategies can be combined and
we construct a new coordinate system which is based on the CMC foliation and
has optimal regularity as stated in Theorem \ref{1kind}. The basic idea is to now choose spatial
harmonic coordinates on {\sl each} CMC hypersurface. This strategy goes back to
Anderson~\cite{Anderson-regularity} who, however, used the time function given by a distance function.
In contrast, in our present construction, the time function coincides with the mean curvature function of
CMC slices and has much better regularity.

In view of Theorem~\ref{foli}, since the second fundamental form is bounded in each slice
$\Sigma_t$ and the spacetime curvature is bounded,
Gauss equation implies that the intrinsic curvature of the slice is also bounded.
So, according to \cite{CheegerGromovTaylor}
there exists a constant $\eta=\eta(n)>0$
so that the injectivity radius of the slice is bounded below, that is, 
$
\inj(\Sigma_t,p_t)\geq 2 \eta \, r,
$
where $p_t$ is the orbit of the base point $\pbf$ along the above flow. By a theorem established by
Jost and Karcher for Riemannian manifolds \cite{JostKarcher}, there exists a constant $\eta'=\eta'(n)>0$
such that a harmonic
coordinate system $|y|\leq \eta' r$ exists around $\pbf$ on the slice $\Sigma_{t(p)}$,
 with $\pbf = (0,\ldots,0)$ and, on that slice,
\be
{1 \over 2} \, \delta_{ij} \leq g_{ij}
= g\Big(\frac{\del}{\del y^i}, \frac{\del}{\del y^j}\Big)
\leq 2 \, \delta_{ij}.
\ee
By using the above mentioned flow, the coordinate functions $y^i$ can be extended to other slices
$\Sigma_t$ and, together with the time function $t$,
yield a spacetime coordinate system. Then, the Lorentzian metric $\gbf$ takes the form
$
\gbf = -\lambda(t,y)^2 \, dt^2 + g_{ij}(t,y) \, dy^idy^j.
$
From the estimate of $|\Dbf t|^2$ given by Theorem~\ref{foli} and in view of
the expression $\nu(\nabla u) \, \frac{\del u}{\del t}=-\lambda$, we deduce that
\be
\frac{\sqrt{\theta}}{r^2} \leq  \lambda \leq \frac{1}{r^2\sqrt{\theta}}.
\ee
Moreover, in view of the results in \cite{JostKarcher} and
thanks to \eqref{ub} and \eqref{simon1}, 
we have the uniform control
$$
\nabla K, {\nabla}^2 \lambda  \in L^q (\Sigma_t), \qquad q \in [1,\infty).
$$

%--------------------------------------------------------------------------------------------------

\subsubsection*{Almost linear coordinates}

We now construct the coordinates of interest in this section. We can assume $r=1$.
For each $i=1, \ldots, n$ and for {\sl each slice} $\Sigma_t$
let $x^i$ be the solution of the Dirichlet problem
\be
\label{dirichlet1}
\aligned
\Delta_t
x^i & =0 \ \ \ \ \text{in } \Sigma_t\cap\{y: |y|<\eta'\},
\\
x^i & =y^i \ \  \ \ \text{ on } |y| =\eta'.
\endaligned
\ee
Let $\tbar$ be such that the slice $\pbf \in \Sigma_\tbar$ has mean curvature $\bar{t}$.

By applying the maximum principle for the operator $\Delta$ we can derive some basic properties
of the above functions. First of all, at the time~$\tbar$ one has
\be
\label{bla0}
n \leq \Delta_{\bar{t}} |y|^2=2\sum_{i=1}^n g^{ii} \leq 4n \qquad \text{ on the slice } \Sigma_\tbar,
\ee
where we have solely used that the metric coefficients are uniformly bounded.
Since
$$
\left| {\del \over \del t} \Big( \Delta_t |y|^2 \Big) \right|
=
\big| \nabla \lambda \ast k \ast \nabla |y|^2 \big|
\leq C_\delta \qquad \text{ on the slice } \Sigma_\tbar,
$$
and $\nabla \lambda$ and $k$ are uniformly bounded,
we deduce from \eqref{bla0} that
$$
{n \over 2} \leq \Delta_t |y|^2 \leq 8 n \quad \text{ on any slice }  \Sigma_t \cap \{|y|\leq \eta'\}
$$
for all $|t-\bar{t}|\leq \frac{1}{C_{\delta}C'' n}$. From now on we drop the subscript $t$ in the notation.

Then, by the same arguments as the ones above we find
\be
\label{bla}
|\Delta y^i| \leq \eps \quad \text{ on } \Sigma_t \cap \{|y|\leq \eta'\}
\ee
for all $|t - \tbar |\leq \frac{\eps}{C_\delta}$.
Now,
since $\Delta (x^i-y^i)=- \Delta y^i$, by the maximum principle we obtain
$$
C'' \eps \eta'(\eta'-|y|) \geq x^i-y^i\geq - C'' \eps \eta'
(\eta'-|y|)
$$
on the slice $\Sigma_{t}\cap \{|y|\leq \eta'\}$ for all $|t-\bar{t}|\leq \frac{\eps}{C_{\delta}C'' n}$.
In particular, along the boundary $\big\{ |y|=\eta \big\}$
the above property implies
$$
\sup_{|y|=\eta'}|\nabla (x^i-y^i)|\leq C(n)\eps \eta'
\qquad \text{ for } |t-\bar{t}|\leq \frac{\eps}{C_{\delta}C'' n}.
$$

Next, we can also estimate
$\sup_{|y|\leq \eta'}|\nabla (x^i-y^i)|$  from the equation satisfied by the coordinates,
as follows. By integration by parts we obtain
\be
\label{bla3}
\int_{|y|\leq \eta'} |\nabla (x^i-y^i)|^2 \leq C(n) \, \eps (\eta')^{n+1}.
\ee
Second, let $w= \max\big( 0, |\nabla (x^i-y^i)|^2-C(n)\eps \eta'\big)$ and consider Bochner formula
$$
\aligned
& \Delta |\nabla (x^i-y^i)|^2
\\
& =2|\nabla^2 (x^i-y^i)|^2+2\la
\nabla(x^i-y^i),-\nabla \Delta y^i\ra+2Ric(\nabla
(x^i-y^i),\nabla (x^i-y^i)),
\endaligned
$$
multiply it by $w^a$ for $a>0$,
and integrate by parts. Then, using \eqref{bla} together with Sobolev inequality and Nash-Moser technique,
we arrive at the sup-norm gradient estimate
\be
\label{thzb}
\sup_{|y|\leq \eta'} w
\leq \frac{C}{\eta'^n} \int_{|y|\leq \eta'} w \, dy \leq C(n) \eta'\eps
\ee
for all $|t-\bar{t}| \leq \frac{\eps}{C_{\delta} C'' n}$. The latter inequality follows from \eqref{bla3}.

By choosing $\eps$ suitably small (depending upon the dimension only), \eqref{thzb}
implies that the harmonic map $\Psi=(x^1,\ldots, x^n)$ is a local diffeomorphism from $\{|y|\leq \eta'\}\cap \Sigma_t$  onto its image.
By the maximum principle, $\Psi$ is a map from $\{|y|\leq \eta'\} \cap \Sigma_t$
to $\{|x|\leq \eta' \}$, which leaves invariant the boundary.
Hence, $\Psi$ is a diffeomorphism from $\{|y|\leq \eta'\}\cap \Sigma_t$ to the Euclidean ball $\{|x|\leq \eta' \}$,
and  $x=(x^1,\ldots, x^n)$ with $\{|x|\leq \eta' \}$ is a harmonic coordinate system.
Moreover, by choosing $\eps$ sufficiently small in \eqref{thzb} we find
$$
{1 \over 4} \, \delta_{ij}
\leq
{1 \over 2} \, g\Big( {\del \over \del y^i}, {\del \over \del y^i} \Big)
\leq
g\Big( {\del \over \del x^i}, {\del \over \del x^j} \Big)
\leq 2 \, g\Big( {\del \over \del y^i}, {\del \over \del y^i} \Big)
\leq 4 \, \delta_{ij},
$$
as inequalities between symmetric tensors.

%--------------------------------------------------------------------------------------------------------------

\subsubsection*{The ADM formulation}

Including also $x^0=t$ in the coordinates, we have therefore constructed local spacetime coordinates
$(x^0,x^1,\ldots,x^{n})$ covering a neighborhood of the point $\pbf$. Recall that $\Nbf$ denotes the unit normal
vector along slices $\Sigma_t$. 

Note that the function $t$ appears as the time coordinate in two different
coordinate systems,
that is,
$\Psi'=(y^0,y^1,\ldots, y^{n})$ and $\Psi=(x^0,x^1,\ldots, x^{n})$ with $y^0=x^0=t$.
It is easy to see that
$$
\lambda N=\Psi'^{-1}_{\ast} \left( \frac{\del}{\del y^0} \right)
= \frac{\del}{\del x^0} + \frac{\del x^i}{\del t}\frac{\del }{\del x^i}
= \frac{\del}{\del x^0} - \xi,
$$
and so
$
\frac{\del}{\del x^0}=\lambda N+\xi,
$
where we refer to $\xi=\sum_{i=1}^{n}\frac{\del x^i}{\del t}\frac{\del}{\del
x^i}=\sum_{i=1}^{n}\xi^i\frac{\del}{\del x^i}$ as the shift vector. Define $A_{ij}=(\frac{\del x^i}{\del
y^j})$, and $\widetilde{g}_{kl}= g_{ij}(y,t){A^{-1}}_{ik}{A^{-1}}_{jl}$.
It is not hard to see that $ dy^0=dx^0 $ and $dy^i = {A^{-1}}_{ik} \Big( dx^k - \frac{\del x^k}{\del y^0}dx^0 \Big)$,
hence in the coordinates $(x^0,x^1,\ldots, x^{n})$ the metric $\gbf$
has the form
\be
\label{metricform2}
\aligned
\gbf & = - \lambda(t,y)^2 \, dt^2 + g_{ij}(t,y) \, dy^i dy^j
\\
& = - \lambda^2 \, (dx^0)^2 + \widetilde{g}_{ij}(x^0, x) \, \big( dx^i+\xi^idx^0\big) \big( dx^j+\xi^jdx^0 \big).
\endaligned
\ee
For simplicity in the notation, we drop the tilde from
$\widetilde{g}_{ij}$ and simply write the {\sl metric decomposition} as
\be
\label{metricform3}
\gbf
=-\lambda^2 (dx^0)^2+{{g}}_{ij}(x^0, x)\big( dx^i+\xi^idx^0\big) \big(dx^j + \xi^jdx^0\big).
\ee

Recall that the second fundamental form is defined by $k_{ij}=\la
\Dbf_{\frac{\del}{\del x^i}}\frac{\del}{\del
x^j},N\ra$, where $\Dbf$ is the covariant derivative
associated with the metric $\gbf$,  and recall Gauss-Codazzi
equations
\be
\label{gausscodazzi}
\aligned
& \Rbf_{ijkl} = R^{\Sigma}_{ijkl}+ k_{ik}k_{jl}-k_{il}k_{kj},
\\
& \nabla_{l}k_{ij}-\nabla_{i}k_{lj} =\Rbf_{liNj}.
\endaligned
\ee
The geometry of the slice is determined by the induced metric
$g_{ij}$ and the second fundamental form $k_{ij}$, both, satisfying the
following evolution equations:
\be
\label{evolution}
\aligned
\frac{\del g_{ij}}{\del x^0} & = -2 \lambda k_{ij} + \Lcal_{\xi}g_{ij},
\\
\frac{\del k_{ij}}{\del x^0} & = -\nabla_i \nabla_j\lambda
        + \Lcal_\xi k_{ij} - \lambda \, g^{pq} k_{ip} k_{qj} + \lambda \, \Rbf_{iNjN}. 
\endaligned
\ee
Note also that since $x^1,\ldots,x^n$ are harmonic coordinates on $\Sigma_t$, we have
\be
\label{metricequation}
g^{kl}\frac{\del^2}{\del x^k\del x^{l}}g_{ij}+Q(\del g,\del
g)=-2R_{ij},
\ee
where $Q_{ij}(\del g,\del g)$ is some quadratic expression in $\del g$ with coefficients depending on the inverse
metric $g^{-1}$.

%--------------------------------------------------------------------------------------------------------------

\subsubsection*{Estimating the shift vector}

Next, we derive the equation for the shift vector $\xi$. By
differentiating the harmonic equation $\Delta x^k=0$ with respect to $x^0$, and using (\ref{evolution}), we get
\begin{equation*}
\begin{split}
0 & =  g^{kl}g^{ij} \, \Big( \nabla_i(-2\lambda k_{lj}+\nabla_j\xi_l+\nabla_l\xi_j) +\nabla_j(-2\lambda
       k_{li}+\nabla_i\xi_l+\nabla_l\xi_i)
\\
& \quad -\nabla_l(-2\lambda k_{ij}+\nabla_j\xi_i+\nabla_i\xi_j) \Big)
\\
& = 2 \, \Big( \Delta \xi^k+g^{ki}R_{ij}\xi^j+ g^{kl}\nabla_{l}(\lambda tr K)
   - 2g^{kl}g^{ij}k_{li}\nabla_{j}\lambda
- 2 \lambda (trk)_l+2\lambda g^{kl}\Rbf_{lN} \Big),
\end{split}
\end{equation*}
where $\Delta \xi^k$ is the $k$-th component of $\Delta \xi$.
By combining this result with the constant mean curvature equation,
this gives us the {\sl elliptic equation satisfied by the shift vector}
\be
\label{xiequation}
\Delta \xi^k
= - g^{ki}R_{ij}\xi^j - (tr k) g^{kl} \, \nabla_l\lambda
   + 2 g^{kl}g^{ij}k_{li} \nabla_j \lambda
  - 2\lambda \, g^{kl}\Rbf_{lN}.
\ee
It is easy to see
$
\Delta |\xi|\geq -C
$
for some constant $C$ depending only on the dimension. By choosing a sufficiently
large constant we obtain
$
\Delta \big( |\xi| + C' \, |x|^2 \big) \geq 0,
$
hence by the maximum principle we arrive at the following sup norm estimate for the shift vector
\be
\label{estimatexi}
|\xi| \leq C(n)(\eta'-|x|).
\ee

%-----------------------------------------------------------------------------------------------------------

\subsection{Proof of the main theorem}

We are now in a position to give the proof of Theorem \ref{main}.  By scaling, we may assume $r=1$.

\noindent{\it Step 1. Spatial derivative estimate.}
We are going to use (\ref{simon1}) (\ref{metricequation}),  (\ref{lapse}), (\ref{xiequation}),
together with elliptic regularity estimates, and establish a bound for the spatial derivatives of the metric $\gbf$.

By choosing other harmonic coordinates on each slice, letting
$\eta'$ suitably small, and recalling the $L^p$ regularity estimates for uniformly elliptic operators,
we find for all $q \in [1, \infty)$
\be
\int_{|x|\leq \eta'} \left| \nabla^2(x^i-y^i) \right|^q \leq C_q.
\ee 
This implies that
\be
\label{1sm}
\int_{|x|\leq \eta'} \left| \frac{\del g_{ij}}{\del x^k} \right|^q
\leq C_q
\ee
and, therefore, for all $\alpha \in (0,1)$, 
$
\| g \|_{C^{\alpha}\{|x|\leq \eta'\}} \leq C_\alpha.
$
In view of \eqref{metricequation}, we obtain
\be
\int_{|x|\leq \eta'} \left| g^{kl}\frac{\del^2}{\del x^k\del x^{l}} \Big( (\eta'^2-|x|^2) \, g_{ij} \Big) \right|^q
\leq C_q
\ee
and using $L^p$ estimate, since the coefficient of the Laplacian operator are H\"older continuous
and the function under consideration vanishes on the boundary,
\be
\label{2smetric}
\aligned
\left| \frac{\del}{\del x^k}g_{ij} \right|
& \leq \frac{C(n)}{\eta'-|x|},
\\
\int_{|x|\leq \eta'} \left|\frac{\del^2}{\del x^k\del x^l} \Big( (\eta'^2-|x|^2)g_{ij} \Big) \right |^q
& \leq C_q.
\endaligned
\ee

Note that in the expression
$
\Delta \xi=g^{kl}\frac{\del^2 \xi}{\del x^k \del x^l}+\Gamma \ast \nabla \xi + \del \Gamma \ast \xi,
$
we have $\Gamma \in L^q$ (thanks to \eqref{1sm}) and
$$
|\del \Gamma \ast \xi|\leq C \, (|\del^2g|+|\del g|^2) (\eta'^2-|x|^2)
$$
thanks to \eqref{estimatexi}. The latter term belongs to $L^q$ in view of \eqref{2smetric} and therefore
$L^p$ regularity estimates applied to \eqref{xiequation} yield
$$
\sup |\del_x \xi| + \int_{|x|\leq \eta'}|\del_x^2\xi^k|^q\leq C_q
$$
or, in covariant form, we have estimated the first- and second-order derivatives of the shift vector
\be
\label{xis2c}
\sup |\nabla\xi| +\int_{|x|\leq \eta'} |\nabla^2\xi|^q\leq C_q.
\ee

In addition, since $\del_x k=\nabla k+\Gamma\ast k$ and $|\nabla k| \in L^q$ by Lemma~\ref{zhongyao},
we also find
$$
\int_{|x|\leq \eta'}|\del_x k|^q \leq C_q.
$$
Similarly, since $\del_x^2\lambda=\nabla^2\lambda+\Gamma \ast \nabla \lambda$
and in view of Lemma~\ref{zhongyao}, we also obtain
$$
\int_{|x| \leq \eta'}|\del_x^2 \lambda|^q \leq C_q.
$$
In summary, we have now control the spatial derivatives (up to second order) of the metric,
the lapse function, and the shift vector:
\be
\label{spacial}
(\eta'^2-|x|^2) g_{ij}, \, \lambda, \xi^i \in W^{2,q}\big(\{|x|\leq \eta'\}\big).
\ee

\

\noindent{\it Step 2. Estimates of first-order time derivatives.}
The strategy now is to differentiate the equations \eqref{simon1}, \eqref{metricequation},
\eqref{lapse}, and \eqref{xiequation} with respect to $t$ and then use the elliptic regularity property.

First of all, thanks to Step 1 we have
$
\Lcal_\xi g_{ij} = \nabla_i \xi_j  +\nabla_j \xi_i \in W^{1,q}_x,
\qquad
\lambda \, k_{ij} \in W^{1,q}_x
$
for all $q \in [1,\infty)$. By \eqref{evolution} we have
$\frac{\del g_{ij}}{\del {x^0}}\in W^{1,q}_x$ and , in particular, $\frac{\del^2 g_{ij}}{\del x\del {x^0}}\in L^q_x$, i.e.
in other words for all $q \in [1,\infty)$
\be
\sup_{|x|\leq \eta'} \left| \frac{\del g_{ij}}{\del x^0} \right|
+
\int_{|x|\leq \eta'} \left| \frac{\del^2 g_{ij}}{\del x\del {x^0}} \right|^q
\leq C_q.
\ee

In view of Step~1 and \eqref{evolution} again, we have $\frac{\del k_{ij}}{\del {x^0}}\in L_x^q$
     for all $q\in [1,\infty)$. Then, from Lemma~\ref{zhongyao} we deduce
\be
\label{nbgji1}
\aligned
& \sup_{|x|\leq \eta'}
\left( \left| \frac{\del \lambda}{\del t} \right| + \left| \nabla\frac{\del \lambda}{\del t} \right|
 + \left| \frac{\del^2 \lambda}{\del t^2} \right| + \left| \nabla^2\lambda \right| \right)
\\
& + \int_{|x|\leq \eta'} \left(  \left| \nabla k \right|^q + \left| \nabla^3\lambda \right|^q
  + \left| \nabla^2\frac{\del \lambda}{\del t}\right|^q
  + \left| \nabla\frac{\del^2 \lambda}{\del t^2} \right|^q
  \right)
\leq C_q.
\endaligned
\ee

Since $\frac{\del \lambda}{\del x^0} = \frac{\del \lambda}{\del t} + \la\xi, \nabla\lambda\ra$,
we have
$$
\nabla \frac{\del \lambda}{\del x^0}
= \nabla \frac{\del \lambda}{\del t}+\nabla \xi\ast
\nabla\lambda+\xi \ast \nabla^2\lambda
$$
and
$$
\nabla^2 \frac{\del \lambda}{\del x^0}= \nabla^2 \frac{\del
\lambda}{\del t}+\nabla^2 \xi\ast \nabla\lambda+\xi \ast
\nabla^3\lambda +\nabla\xi\ast \nabla^2\lambda.
$$
Then by \eqref{nbgji1}, \eqref{xis2c}, and Step 1, we find
\be
\sup_{|x|\leq \eta'}|\frac{\del \lambda}{\del
x^0}|+|\nabla \frac{\del \lambda}{\del x^0}|+\int_{|x|\leq
\eta'}|\nabla^2\frac{\del \lambda}{\del x^0}|^q\leq C_q.
\ee

Next, note that
$$
\frac{\del}{\del {x^0}}\nabla_i\nabla_j\xi^k=\nabla_i\nabla_j\frac{\del
\xi^k}{\del {x^0}}+\nabla(\frac{\del g}{\del {x^0}})\ast
g^{-1}\ast \nabla \xi+g^{-1}\ast \nabla^2(\frac{\del g}{\del {x^0}})\ast \xi.
$$
By differentiating (\ref{xiequation}) with respect to the
time variable $x^0$, we obtain
\be
\label{xiequationtime0}
\Delta
\frac{\del \xi^k}{\del {x^0}} = A_1+A_2+A_3,
\ee
with
$$
\aligned
A_1 :=
& -g^{ki}R_{ij}\frac{\del \xi^j}{\del {x^0}}-tr k g^{kl}\nabla_{l}\frac{\del \lambda}{\del
{x^0}}+2g^{kl}g^{ij}k_{li}\nabla_{j}\frac{\del \lambda}{\del
{x^0}}+2g^{kl}g^{ij}\frac{\del k_{li}}{\del {x^0}}\nabla_{j}\lambda,
\endaligned
$$
$$
\aligned
A_2 :=
& -\frac{\del R_{ij}}{\del {x^0}}g^{ki}\xi^j-2\lambda g^{kl}(\nabla_{\frac{\del}{\del
{x^0}}}\Ricbf)(\frac{\del}{\del x^{l}}, N)-2 \lambda
g^{kl}\Ricbf(\nabla_{\frac{\del}{\del {x^0}}}\frac{\del}{\del x^{l}}, N)
\\
& -2 \lambda g^{kl}\Ricbf(\frac{\del}{\del x^{l}},
\nabla_{\frac{\del}{\del {x^0}}}N)  +\frac{\del g_{rs}}{\del
{x^0}} \Big( g^{kr}g^{si}R_{ij}\xi^j + tr k \, g^{kr} g^{ls} \nabla_l \lambda
\\
& -2g^{kr}g^{sl}g^{ij}k_{li}\nabla_{j}\lambda-2g^{kl}g^{ir}g^{js}k_{li}\nabla_{j}\lambda
+2\lambda g^{kr}g^{ls}\Rbf_{lN} + g^{ir}g^{js}\nabla_i\nabla_j\xi^k \Big),
\endaligned
$$
and
$$
\aligned
A_3 := \nabla\left( \frac{\del g}{\del {x^0}} \right) \ast g^{-2}\ast \nabla \xi
       + g^{-2} \ast \nabla^2\left( \frac{\del g}{\del {x^0}} \right) \ast \xi.
\endaligned
$$

Since the spacetime under consideration satisfies the vacuum Einstein equation, we obtain
\be
\label{xiequationtime0-2}
\aligned
 \Delta
\frac{\del \xi^k}{\del {x^0}}
=
& - g^{ki}R_{ij}\frac{\del
\xi^j}{\del {x^0}}+g^{-2}\ast \Big( k \ast \nabla \frac{\del \lambda}{\del {x^0}}+\frac{\del k}{\del {x^0}}\ast\nabla \lambda\Big)
+ g^{-2}\ast \nabla^2(\frac{\del g}{\del {x^0}})\ast \xi\\
& +\frac{\del Ric}{\del {x^0}}\ast g^{-1}\ast
\xi+\nabla(\frac{\del g}{\del {x^0}})\ast g^{-2}\ast \nabla
\xi
\\
& + \frac{\del g}{\del {x^0}}\ast \, \Big(
g^{-2}\ast Ric\ast \xi+ k\ast g^{-3}\ast\nabla \lambda+g^{-2}\ast\nabla^2\xi \Big).
\endaligned
 \ee
It is a classical observation that
\be
\label{metrictime1}
\aligned
-2 \, {\del R_{ij} \over \del x^0}
=
\Delta_L\left( {\del g_{ij} \over \del x^0} \right) + \nabla_i V_j + \nabla_j V_i,
\endaligned
\ee
where
$$
\Delta_{L}(\frac{\del g_{ij}}{\del {x^0}})=\Delta (\frac{\del g_{ij}}{\del {x^0}})
+2R_{ikjl}\frac{\del g_{kl}}{\del {x^0}}-R_{ik}\frac{\del g_{kj}}{\del {x^0}} -R_{jk}\frac{\del g_{ki}}{\del
{x^0}}
$$
is the Lichnerowicz Laplacian, and
$$
V_i := {1 \over 2}\nabla_i (g^{kl}\frac{\del g_{kl}}{\del {x^0}})
       - g^{kl}\nabla_k \frac{\del g_{il}}{\del {x^0}}.
$$
Since $\nabla \frac{\del g}{\del {x^0}}\in L^q_x, \nabla^2 \frac{\del g}{\del
{x^0}}=\del \nabla\frac{\del g}{\del {x^0}}+\Gamma \ast
\nabla\frac{\del g}{\del {x^0}}$, we see that
 $\frac{\del R_{ij}}{\del
{x^0}}\in W^{-1,q}_x$ for all $q \in [1,\infty)$.  Note that $\frac{\del
\lambda}{\del {x^0}}\in W^{2,q}_x, \frac{\del k_{li}}{\del
{x^0}}\in L_x^q,\nabla \lambda\in W^{1,q}_x,
\nabla_i\nabla_j\xi^k\in L_x^q,\nabla \xi \in L_x^q,
\nabla(\frac{\del g}{\del {x^0}})\in
L_x^q,\nabla^2(\frac{\del g}{\del
{x^0}})=\del\nabla(\frac{\del g}{\del {x^0}})+\Gamma \ast
\nabla(\frac{\del g}{\del {x^0}})\in W_x^{-1,q}$.

Now at the boundary $|x|=\eta'$, we have
$$
\frac{\del \xi^k}{\del {x^0}}=\frac{\del^2 x^k}{\del
{t^2}}+\xi(\xi^k)=0
$$
where we used $\xi\mid_{|x|=\eta'}=0$. By applying the $L^p$
estimate, we conclude that $\frac{\del \xi^k}{\del {x^0}}\in
W^{1,q}_x$ for all $q\in [1,\infty)$.
In particular,
\be
\label{3xi}
\frac{\del}{\del {x^0}}\nabla^2\xi^k
= \nabla^2\frac{\del \xi^k}{\del {x^0}}
  + \nabla(\frac{\del g}{\del {x^0}})\ast g^{-1}\ast \nabla \xi
  + g^{-1}\ast \nabla^2(\frac{\del g}{\del {x^0}})\ast \xi\in W^{-1,q}_x.
\ee

In summary, we have proved that the first order (in time)
derivatives of the metric, lapse function, and shift vector have
well-controlled spatial derivatives up to first (or even second)
order: \be \label{1time} \frac{\del g_{ij}}{\del {x^0}}\in
W^{1,q}_x, \quad \frac{\del \lambda}{\del {x^0}}\in W_x^{2,q},
\quad \frac{\del \xi^k}{\del {x^0}}\in W^{1,q}_x \ee for all $q\in
[1,\infty)$.

\

\noindent{\it Step 3.  Second-order time derivative of the metric and lapse function.}

First of all, by differentiating (\ref{evolution}) we find
\be
\label{evolution2}
\aligned
\frac{\del^2 g_{ij}}{\del {x^0}^2}
& = \lambda \, \frac{\del k}{\del {x^0}}+\frac{\del \lambda}{\del x^0} k
     +\frac{\del g}{\del {x^0}}\ast \nabla\xi +\nabla \left( \frac{\del \xi}{\del x^0} \right) \ast g
    +g^{-1}\ast \nabla \left( \frac{\del g}{\del {x^0}}\right) \ast \xi \ast g,
\\
\frac{\del^2k_{ij}}{\del {x^0}^2}&=\nabla^2\frac{\del \lambda}{\del
{x^0}}+\Lcal_{\xi}\frac{\del k}{\del {x^0}}+ \nabla k\ast
\frac{\del \xi}{\del {x^0}}+k\ast \nabla \left( \frac{\del \xi}{\del {x^0}}\right)
\\
& \quad +\lambda \, \Big( k\ast \frac{\del k}{\del {x^0}}\ast
g^{-1}+g^{-2}\ast\frac{\del g}{\del {x^0}}\ast k^2+\frac{\del R{ic}}{\del {x^0}}+k\Big)
\\
& \quad +g^{-1}\ast \nabla\left( \frac{\del g}{\del {x^0}} \right)
\ast \big( \nabla \lambda + k\ast \xi \big) + \left( k^2 \ast g^{-1}+Ric \right) \, \frac{\del \lambda}{\del {x^0}}.
\endaligned
\ee
Recalling that $g_{ij}, \xi^i, \lambda \in W^{2,q}_x$,
$\frac{\del g_{ij}}{\del x^0} \in W^{1,q}_x$,
$\frac{\del \lambda}{\del {x^0}}\in W_x^{2,q}$,
$\frac{\del \xi^k}{\del {x^0}}\in W^{1,q}_x$,
$k\in W^{1,q}_x$,
$\frac{\del k}{\del {x^0}}\in L^q_x$,
$\frac{\del Ric}{\del {x^0}}\in W^{-1,q}_x$,
and combining together with \eqref{metricequation}, \eqref{1time}, and \eqref{spacial},
 we get
the following bounds for the metric and the second fundamental form
\be
\label{t2g}
\frac{\del^2 g_{ij}}{\del {x^0}^2}\in L^q_x,
\quad
\frac{\del^2 k_{ij}}{\del {x^0}^2}\in W^{-1,q}_x
\ee
for all $q\in [1,\infty)$.

To handle the lapse function we note that $\frac{\del \lambda}{\del x^0}= \frac{\del
\lambda}{\del t}+\la\xi, \nabla\lambda\ra$ and
$\frac{\del}{\del x^0}=\lambda N+\xi$, so that
 \be
\aligned
\frac{\del^2\lambda}{\del {x^0}^2}
&  = \frac{\del}{\del x^0}( \lambda N \lambda)+
\frac{\del}{\del x^0}\la\xi,\nabla \lambda\ra\\
\frac{\del}{\del x^0}\la\xi,\nabla
\lambda\ra&=g^{-2}\ast\frac{\del g}{\del x^0}\ast \xi
\ast \nabla \lambda+ g^{-1}\ast \frac{\del \xi}{\del x^0}\ast \nabla\lambda+g^{-1}\ast \xi\ast \nabla\frac{\del
\lambda}{\del x^0}
\\
\frac{\del}{\del x^0}( \lambda N
\lambda)&=\frac{\del^2 \lambda}{\del
t^2}+\xi(\frac{\del \lambda}{\del x^0})-\xi \la
\xi,\nabla\lambda\ra\\
&=\frac{\del^2 \lambda}{\del
t^2}+\la\xi,\nabla\frac{\del \lambda}{\del
x^0}\ra- \la
\nabla_{\xi}\xi,\nabla\lambda\ra-\nabla^2\lambda(\xi,\xi),
\endaligned
\ee
and
\be
\aligned
 \nabla \frac{\del^2\lambda}{\del {x^0}^2}
=
& \nabla\frac{\del^2 \lambda}{\del t^2}
\\
&
 +g^{-2}\ast\nabla \frac{\del g}{\del x^0}\ast \xi \ast
\nabla \lambda+g^{-2}\ast\frac{\del g}{\del x^0}\ast
\nabla\xi \ast \nabla \lambda+g^{-2}\ast\frac{\del g}{\del x^0}\ast \xi \ast \nabla^2 \lambda\\
& + g^{-1}\ast \frac{\del \xi}{\del x^0}\ast
\nabla^2\lambda+ g^{-1}\ast \nabla\frac{\del \xi}{\del
x^0}\ast \nabla\lambda+g^{-2}\ast\nabla^3 \lambda \ast \xi^2
\\
& + g^{-2}\ast\nabla^2\xi\ast\xi\ast\nabla
\lambda+g^{-2}\ast\nabla\xi\ast\nabla \xi\ast\nabla
\lambda+g^{-2}\ast\nabla\xi\ast\xi\ast\nabla^2 \lambda
\\
& + g^{-1}\ast \nabla\xi\ast \nabla\frac{\del \lambda}{\del x^0}+g^{-1}\ast \xi\ast \nabla^2\frac{\del \lambda}{\del x^0}.
\endaligned\ee

Hence, combining together \eqref{xis2c}, \eqref{nbgji1}, and \eqref{1time}, we
arrive at the following control of the lapse function
$$
\int_{|x| \leq \eta'} \left| \nabla \frac{\del^2\lambda}{\del {x^0}^2} \right|^q
     + \left| \frac{\del^2\lambda}{\del {x^0}^2} \right|^q
\leq C_q
$$
for all $q\in [1,\infty)$.

\

\noindent{\it Step 4.  Second-order time derivative of the lapse function.}

It remains to derive the second-order time estimate for the shift function.
By differentiating \eqref{xiequationtime0} in time, we have
\be
\label{xiequationtime2}
\Delta \frac{\del^2 \xi^k}{\del {x^0}^2}
= B_1 +B_2+B_3,
\ee
with
$$
\aligned
B_1 := & -g^{ki}R_{ij}\frac{\del^2 \xi^j}{\del {x^0}^2}+Ric \ast
\Big( \frac{\del \xi}{\del x^0}\ast \frac{\del g}{\del
x^0}\ast g^{-2}+g^{-3}\ast(\frac{\del g}{\del x^0})^2\ast \xi \Big)
\\
&\  + g^{-2}\ast k \ast \nabla\frac{\del^2 \lambda}{\del {x^0}^2}
    + \Big( g^{-2}\ast\frac{\del k}{\del x^0}
    + g^{-3} \ast {\del g \over \del x^0}\ast k \Big)
\ast\nabla\frac{\del \lambda}{\del x^0}\\
& + \left( g^{-2}\ast\frac{\del^2 k}{\del {x^0}^2}
+g^{-3}\ast\frac{\del g}{\del x^0}\ast\frac{\del k}{\del x^0}
+g^{-4}\ast(\frac{\del g}{\del x^0})^2\ast k \right) \ast\nabla \lambda,
\endaligned
$$
$$
\aligned
B_2 := &
 +\frac{\del^2 Ric}{\del {x^0}^2}\ast
g^{-1}\ast\xi +\frac{\del Ric}{\del x^0}\ast \left( \frac{\del \xi}{\del
x^0}\ast g^{-1}+g^{-2}\ast \frac{\del g}{\del x^0}\ast \xi \right)
\\
& +\frac{\del^2 g}{\del {x^0}^2}\ast \left( g^{-2}\ast Ric\ast\xi+
k\ast g^{-3}\ast \nabla \lambda +g^{-2}\ast
\nabla^2\xi\right)
\\
& +(\frac{\del }{\del x^0}\nabla^2\xi)\ast g^{-2}\ast \frac{\del
g}{\del x^0}+\nabla^2\xi\ast g^{-3}\ast (\frac{\del
g}{\del x^0})^2,
\endaligned
$$
$$
\aligned
B_3 :=
& \left( (\nabla\frac{\del g}{\del x^0})^2+\frac{\del g}{\del
x^0}\ast \nabla^2\frac{\del g}{\del
x^0}+\nabla^2\frac{\del^2g}{\del {x^0}^2}\ast g \right) \ast \xi
\ast g^{-3}
\\
&+g^{-2}\ast \nabla^2(\frac{\del g}{\del x^0})\ast \frac{\del
\xi}{\del x^0} +\nabla(\frac{\del g}{\del x^0})\ast g^{-2}\ast
\nabla \frac{\del\xi}{\del x^0}
\\
& +\nabla(\frac{\del^2 g}{\del {x^0}^2})\ast g^{-2}\ast \nabla
\xi +\nabla(\frac{\del g}{\del x^0})\ast g^{-3} \ast \frac{\del
g}{\del x^0} \ast\nabla \xi.
\endaligned
$$
Note that  we already have $\del_{x,t}\gbf$,
$\del_{x,t}^2\gbf\in L_x^q$ for all $q\in [1,\infty)$, {\sl except} that we do not control
$\frac{\del^2\xi}{\del {x^0}^2}$ yet.
Therefore, we can write
\be
\label{xiequationtime2-prime}
\aligned
\Delta \frac{\del^2 \xi^k}{\del {x^0}^2}
= & -g^{ki}R_{ij}\frac{\del^2 \xi^j}{\del {x^0}^2}+g^{-2}\ast\frac{\del^2 k}{\del {x^0}^2}\ast\nabla \lambda
\\
& +\frac{\del^2 Ric}{\del {x^0}^2}\ast g^{-1}\ast\xi +\frac{\del Ric}{\del x^0}\ast
\Big( \frac{\del \xi}{\del x^0}\ast
g^{-1}+g^{-2}\ast \frac{\del g}{\del x^0}\ast \xi \Big)
\\
& +(\frac{\del }{\del x^0}\nabla^2\xi)\ast g^{-2}\ast \frac{\del
g}{\del x^0}+g^{-2}\ast \nabla^2(\frac{\del g}{\del x^0})\ast
\frac{\del \xi}{\del x^0}
\\
& +g^{-3}\ast g\ast\nabla^2(\frac{\del^2 g}{\del
{x^0}^2})\ast \xi+\nabla(\frac{\del^2 g}{\del {x^0}^2})\ast
g^{-2}\ast \nabla \xi \qquad \text{ mod. } L^q_x.
\endaligned
 \ee

In view of
\eqref{spacial}, \eqref{1time}, \eqref{3xi}, and \eqref{t2g} we have
$$
\aligned
& \frac{\del Ric}{\del x^0},\frac{\del^2k}{\del {x^0}^2},
\nabla\frac{\del^2 \lambda }{\del {x^0}^2},\frac{\del}{\del x^0}\nabla^2\xi,
 \nabla(\frac{\del^2 g}{\del {x^0}^2}),
\nabla^2(\frac{\del g}{\del {x^0}})\in W^{-1,q}_x,
\\
& \nabla^2\frac{\del^2g}{\del {x^0}^2}\in W^{-2,q}_x,
\endaligned
$$
and
\begin{equation*}
\begin{split}
\frac{\del^2 Ric}{\del {x^0}^2}&=g^{-1}\ast \, \Big( \nabla
^2\frac{\del^2g}{\del {x^0}^2} + Rm\ast \frac{\del^2
g}{\del {x^0}^2}
\\
& \quad +\nabla^2(\frac{\del g}{\del
x^0})\ast(\frac{\del g} {\del x^0})\ast g^{-1}+\nabla(\frac{\del
g}{\del x^0})\ast \nabla(\frac{\del g}
{\del x^0})\ast g^{-1}\Big)
\\
&= \nabla ^2(g^{-1}\ast\frac{\del^2g}{\del {x^0}^2}) +\nabla
\Big( \nabla\frac{\del g}{\del
x^0}\ast\frac{\del g} {\del x^0}\ast g^{-2}\Big)\ \ \  \text{mod.} \ \ L^q_x\\
\end{split}
 \end{equation*}
 for all $q\in [1,\infty)$. Consequently, we have
\be
\label{xiequationtime2-third}
\aligned
\Delta \frac{\del^2 \xi^k}{\del {x^0}^2}+g^{ki}R_{ij}\frac{\del^2 \xi^j}{\del
{x^0}^2}&= \nabla_i\nabla_j ({f}^{kij}_m \xi^m)+\nabla_i f^{ki}+f^k\\
&=\del_i\del_j (F^{kij}_m)\xi^m+\del_i F^{ki}+F^k,
\endaligned
\ee
where for fixed $k$, $f^{kij}_m$  etc. are tensors,
and $\nabla$ and $\del$ are covariant derivatives and partial derivatives in the coordinates $x^i$,
respectively, with moreover
$$
\int_{|x|\leq \eta'}
\Big( |f^{kij}|^q + |f^{ki}|^q + |f^k|^q + |F^{kij}_m|^q + |F^{ki}|^q + |F^k|^q\Big)
\leq C_q.
$$
For the second equality, we used $\xi \del^2g\in L^q$.
Now, we will use the $L^p$ regularity estimates in the following manner.

Since the coefficients of the elliptic operator $g^{ab} \del_a \del_b $ belong to $C^\gamma$
on the closed ball $\big\{ |x|\leq \eta' \big\}$, we can solve the equation
$g^{ab}\del_a\del_b u^{kij} = F^{kij}$ on $|x|< \eta'$ with the trivial boundary condition
$u^{kij}\mid_{|x|=\eta'}=0$. We then apply the $L^p$ regularity estimate and obtain
$$
\int_{|x|\leq \eta'} \Big(
{|u^{kij}|^q+|\del u^{kij}|^q +|\del^2 u^{kij}|^q} \Big)
\leq C_q.
$$
Next, we observe that
$$
\aligned
\del_i\del_j (F^{kij}_m)\xi^m
=
&  g^{ab}\del_a\del_b(\del_i\del_j u^{kij}_m\xi^m)+\xi \del^2 g \ast
\del^2 u^{kij}_m
\\
&
+\del(\del g \ast \del^2 u^{kij}_m \ast \xi
+\del^2 u^{kij}_m \ast \del \xi) +\del^2
u^{kij}_m\ast \del^2 \xi \ast g^{-1}
\endaligned
$$
and so, for some new terms $F^{ki}$, we obtain
\be
\label{xiequationtime3}
\aligned
\Delta \frac{\del^2 \xi^k}{\del {x^0}^2}+g^{ki}R_{ij}\frac{\del^2 \xi^j}{\del {x^0}^2}
& = g^{ab}\del_a\del_b( u^k_m\xi^m)+\del_i F^{ki}+F^k
\endaligned
\ee
for some
\be
\label{qguai}
\int_{|x|\leq \eta'}|u^k_m|^q\leq C_q.
\ee
Since
$$
\aligned
\Delta \big( u^k_m\xi^m \big)
& = g^{ab}\del_a\del_b( u^k_m\xi^m)+ \Gamma \ast \del (u\ast \xi)+ \del^2 g \ast u\ast \xi
\\
& = g^{ab}\del_a\del_b( u^k_m\xi^m) + \del (\Gamma \ast u\ast \xi)
   + \del^2 g \ast u \ast \xi,
\endaligned
$$
and $\xi \del^2 g \in L^q$ for all $q$, by modifying $F^{ki}$ and $F^k$,
 we can show
\be
\label{xiequationtime4}
\aligned
\Delta \left( \frac{\del^2 \xi^k}{\del {x^0}^2} - u^k_m\xi^m \right)
     + g^{ki}R_{ij} \left( \frac{\del^2 \xi^j}{\del {x^0}^2}- u^j_m\xi^m \right)
& = \del_i F^{ki} + F^k,
\endaligned
\ee
where the notation $\Delta$ stands here for the covariant Laplacian of a vector field.

It is not hard to see $\frac{\del^2 \xi^k}{\del
{x^0}^2}\mid_{|x|=\eta'}=0$. Let $v^k=\frac{\del ^2
\xi^k}{\del {x^0}^2}-u^k_m\xi^m$. Integrating
$\Delta |v^k|^{2q+2}$ with the induced (intrinsic) volume form, using
$ v^k\mid_{|x|=\eta'}=0$ and \eqref{xiequationtime4}, and finally applying
H\"older inequality, we find
\be
\label{xiequationtime5}
\aligned
\int_{|x|\leq \eta'}|v^k|^{2q}|\nabla v^k|^2
\leq
C_q \, \int_{|x|\leq \eta'}|v^k|^{2q+2}+C_q
\endaligned
\ee for all $q\in [1,\infty)$. This implies, in particular, $\int_{|x|\leq
\eta'}|v^k|^2q\leq C$ by Sobolev inequalities.
Combining this result with \eqref{qguai}, we arrive at the estimate for the shift vector
$$
\int_{|x|\leq \eta'} \left| \frac{\del^2 \xi^k}{\del {x^0}^2} \right|^q. 
\leq C_q
$$

In summary, we have obtain the following uniform control of the second-order time derivatives
of the metric, lapse function, and shift vector:
\be
\label{2time}
\frac{\del^2 g_{ij}}{\del {x^0}^2}\in L^q_x,
\quad
\frac{\del^2 \lambda}{\del {x^0}^2}\in L_x^q,
\quad
\frac{\del^2 \xi^k}{\del {x^0}^2}\in L^q_x
\ee
for all $q \in [1,\infty)$.
By combining (\ref{spacial}) with (\ref{1time}) and (\ref{2time}),
the proof of Theorem \ref{main} is now completed.

%============================================================================================================

\section*{Acknowledgements}

The authors thank Lars Andersson for providing them with bibliographical informations. The first author (BLC) 
was partially supported by Sun Yat-Sen University (Guangzhou) through a grant ``New Century Excellent
Talents'' (NCET-05-0717). The second author (PLF) was partially
supported by the Centre National de la Recherche Scientifique
(CNRS) and the Agence Nationale de la Recherche (ANR) through the
grant 06-2-134423 entitled {\sl ``Mathematical Methods in General Relativity''} (MATH-GR).

%============================================================================================================

\addcontentsline{toc}{section}{References}

\end{document}